\newcolumntype{C}{>{$}c<{$}} 
\renewcommand*{\baselinestretch}{1.25}
\DeclareMathAlphabet{\mathcal}{OMS}{cmsy}{m}{n}
\newtheorem{theorem}{Theorem}
\newtheorem{lemma}{Lemma}
\newtheorem{proposition}{Proposition}
\theoremstyle{definition}
\newtheorem*{rmk*}{Remark}
\newtheorem{rmk}{Remark}
\renewcommand*\proofname{\upshape{\bfseries{Proof}}}
\renewenvironment{proof}[1][\proofname]{\par
  \pushQED{\qed}%
  \normalfont \topsep6\p@\@plus6\p@\relax
  \trivlist
  \item[\hskip\labelsep
        \bfseries
    #1\@addpunct{.}]\ignorespaces
}{%
  \popQED\endtrivlist\@endpefalse
}
\DeclareMathOperator{\variance}{Var}
\DeclareMathOperator{\covariance}{Cov}
\DeclareMathOperator{\trace}{tr}
    \renewcommand*{\section}{\@startsection{section}{1}{\z@}%
    {6pt}{3pt}{\reset@font\normalsize\bfseries}}
    \renewcommand*{\subsection}{\@startsection{subsection}{2}{\z@}%
    {3pt}{3pt}{\reset@font\normalsize\mdseries\itshape}}
    \renewcommand*{\subsubsection}{\@startsection{subsubsection}{3}{\z@}%
    {3pt}{3pt}{\reset@font\normalsize\mdseries\itshape}}
\def\@seccntformat#1{\csname the#1\endcsname.\quad}
\def\@listi{\leftmargin\leftmargini
  \topsep=.5\baselineskip 
  \partopsep=0pt \parsep=0pt \itemsep=0pt}
\let\@listI\@listi
\def\@listii{\leftmargin\leftmarginii
  \labelwidth\leftmarginii \advance\labelwidth-\labelsep
  \topsep=0pt \partopsep=0pt \parsep=0pt \itemsep=0pt}
\def\@listiii{\leftmargin\leftmarginiii
  \labelwidth\leftmarginiii \advance\labelwidth-\labelsep
  \topsep=0pt \partopsep=0pt \parsep=0pt \itemsep=0pt}
\def\@listiv{\leftmargin\leftmarginiv
  \labelwidth\leftmarginiv \advance\labelwidth-\labelsep
  \topsep=0pt \partopsep=0pt \parsep=0pt \itemsep=0pt}
\title{Wavelet-based methods for high-frequency lead-lag analysis}
\author{Takaki Hayashi\thanks{Keio University, Graduate School of Business Administration, 4-1-1 Hiyoshi, Yokohama, Kanagawa 223-8526, Japan}
\thanks{Tokyo Metropolitan University, Department of Business Administration, Graduate School of Social Sciences, Marunouchi Eiraku Bldg. 18F, 1-4-1 Marunouchi, Chiyoda-ku, Tokyo 100-0005 Japan}
\thanks{CREST, Japan Science and Technology Agency}
\and
Yuta Koike\thanks{Mathematics and Informatics Center and Graduate School of Mathematical Sciences, The University of Tokyo, 3-8-1 Komaba, Meguro-ku, Tokyo 153-8914 Japan}
\footnotemark[2]
\thanks{The Institute of Statistical Mathematics, 10-3 Midori-cho, Tachikawa, Tokyo 190-8562, Japan}
\footnotemark[3]}
\begin{document}

\maketitle

\begin{abstract}

We propose a novel framework to investigate lead-lag relationships between two financial assets. Our framework bridges a gap between continuous-time modeling based on Brownian motion and the existing wavelet methods for lead-lag analysis based on discrete-time models and enables us to analyze the multi-scale structure of lead-lag effects. 
We also present a statistical methodology for the scale-by-scale analysis of lead-lag effects in the proposed framework and develop an asymptotic theory applicable to a situation including stochastic volatilities and irregular sampling.   
Finally, we report several numerical experiments to demonstrate how our framework works in practice.     
\vspace{3mm}

\noindent \textit{Keywords}: Brownian motion; High-frequency data; Lead-lag effect; Multiscale modeling; Wavelet.

\end{abstract}

\section{Introduction}

A lead-lag effect is a phenomenon where one asset (called a ``leader'') is correlated with another asset (called a ``lagger'') at later times. 
Investigation of such a phenomenon has a long history in the economics literature; it is measured at various time scales mostly dailies or longer than daily, but relatively less for intra-daily data. It is not surprising that different lead-lag effects can be observed at different time scales in the financial markets, for there are a variety of participants in financial markets: They have different views for the economy and markets, different risk attitudes, with different sources of money and information, which makes them have different investment/trading horizons (cf.~\citet{MDDOPW1997}).

The aim of this paper is to capture such multi-scale structures of lead-lag effects inherent in high-frequency financial data. The wavelet analysis provides a canonical framework to accomplish this purpose. The application of the wavelet analysis {in} finance has recently been expanded in various ways. We refer to the book \citet{GSW2002} for an introduction of such applications. The application of the wavelet analysis to {exploring} lead-lag relationships in the financial markets has also been investigated by many researchers {in most cases} \textit{with middle or low-frequency data}; see \cite{Gallegati2008, Cardinali2009,Ranta2010,FM2012,Dajcman2013} as well as Chapter 7 of \cite{GSW2002} for example. {In the meantime}, there are few articles applying it to the investigation of lead-lag effects in the high-frequency domain. One {of such} exception{s} is the paper by \citet{Hafner2012}, which explores the lead-lag relation between the returns, durations and volumes of high-frequency transaction data of the IBM {stock}. 

To our knowledge, all the existing studies on the wavelet analysis of lead-lag effects have their theoretical basis on discrete-time process modeling, which is mainly pioneered in \citet{WGP1999,WGP2000} and \citet{SW2000a,SW2000b}. 
This is presumably because few results are available for statistical modeling of lead-lag effects in discretely observed continuous-time processes, even without taking their multi-scale structure into account.  
In the meantime, the modern, continuous-time finance theory is based on semimartingale processes, especially driven by Brownian motions (cf.~\citet{Duffie1991}). {Besides}, it has been getting recognized that modeling high-frequency financial data as discrete observations of continuous-time processes is {powerful} for the statistical analysis (cf.~\citet{AJ2014}). From these perspectives it is desirable to develop a class of multivariate models to incorporate lead-lag relationships between coordinates for Brownian motion driven models, which is the primary goal of this study. 
We shall remark that some authors have recently developed statistical modeling of lead-lag effects in the continuous-time framework. \citet{HRY2013} have introduced a simple model to describe lead-lag effects between two continuous It\^o semimartingales and developed a statistical estimation method for the lead-lag effects from (possibly asynchronous) discrete observation data. A similar model is studied in \citet{RR2010} with {a} different methodology. Empirical applications of \citet{HRY2013}'s methodology are found in \cite{AM2014,HA2014,CCI2016,BOW2016}. 
Apart from Brownian motion driven models, the Hawkes processes {may} be a credible candidate to describe lead-lag effects in the continuous-time framework; see \citet{BDHM2013} and \citet{DaFZ2016} for example.   
However, none of them takes account of potential multi-scale structure of lead-lag effects. This paper attempts to fill in this {missing part}. 

Based on the proposed continuous-time model, we also aim at developing a statistical theory for the scale-by-scale analysis of lead-lag effects from discrete observation data.
Our paper is virtually the first attempt to bridge the gap between the two distinct areas of research, namely wavelet analysis and continuous-time stochastic processes, {in the context of lead-lag analysis\footnote{{We note that there are a number of articles dealing with wavelet analysis and stochastic processes in other fields; see e.g.~Chapter 9 of \cite{Vidakovic1999} and references therein (see also \citet{HMSH2012} and references therein for studies in the context of high-frequency financial data).}}}. 
{To complement the theoretical results, we apply the proposed method to both simulated and empirical data and show its effectiveness in practical applications. 
In particular, our empirical application reveals that there is a striking multi-scale structure in the lead-lag relation between the S\&P500 index and the futures, which is indeed in line with the so-called heterogeneous market hypothesis (cf.~\cite{MDDOPW1997}).} 



The paper is organized as follows. Section \ref{section:model} presents two frameworks for scale-by-scale modeling of lead-lag effects between two Brownian motions. Section \ref{theory} develops an asymptotic theory for the {second} framework presented in Section \ref{section:model}. We illustrate {a} numerical performance of the proposed approach on simulated data in Sections \ref{simulation} and on real data in Section \ref{empirical}, respectively. Most of the proofs are given in Section \ref{proofs}.

\section*{Notation}

For every $p\in[1,\infty]$, $L^p(\mathbb{R})$ denotes the $L^p$-space of all \textit{complex-valued} Lebesgue measurable functions on $\mathbb{R}$.  
For any function $f\in L^1(\mathbb{R})$, we denote by $\mathcal{F}f$ and $\mathcal{F}^{-1}f$ its Fourier transform and inverse Fourier transform, respectively. Specifically, they are defined by the following formulae:
\[
(\mathcal{F}f)(\lambda)=\int_{-\infty}^\infty f(t)e^{-\sqrt{-1}\lambda t}dt,\qquad
(\mathcal{F}^{-1}f)(t)=\frac{1}{2\pi}\int_{-\infty}^\infty f(\lambda)e^{\sqrt{-1}\lambda t}d\lambda.
\]
The above definition can also be applied to a function $f\in L^2(\mathbb{R})$ by understanding the convergences of the integrals in $L^2(\mathbb{R})$. 
The Fourier inversion formula then reads as 
$
f=\mathcal{F}^{-1}(\mathcal{F}f)
$ in $L^2(\mathbb{R})$
for all $f\in L^2(\mathbb{R})$. Also, the Parseval identity and the convolution theorem read as
\[
\int_{-\infty}^\infty(\mathcal{F}f)(\lambda)\overline{(\mathcal{F}g)(\lambda)}d\lambda=2\pi\int_{-\infty}^\infty f(t)\overline{g(t)}dt
\]
and
\[
\mathcal{F}(f*g)=(\mathcal{F}f)\cdot(\mathcal{F}g)\qquad\text{in }L^2(\mathbb{R})
\]
for all $f,g\in L^2(\mathbb{R})$. 

{For a positive integer $d$, $\mathsf{{E}}_d$ denotes the identity matrix of size $d$.} 
{For a matrix $A$, we denote by $A^\top$ the transpose of $A$. Also, $\|A\|_F$ denotes the Frobenius norm of $A$, i.e.~$\|A\|_F:=\trace(A^\top A)$. }

{The notation $\to^p$ stands for convergence in probability.}

\section{Modeling scale-by-scale lead-lag effects in the continuous-time framework}\label{section:model}

In this section we propose two frameworks to introduce multiple lead-lag relationships between two Brownian motions $B^1_t$ and $B^2_t$ on a scale-by-scale basis. We also present sensible cross-covariance estimators constructed from discrete observations of the processes $B^1_t$ and $B^2_t$ on a fixed interval, which can be used for identifying lead-lag effects scale-by-scale. Specifically, we assume that $B_t=(B_t^1,B_t^2)$ is observed at the time points $i\tau_J$ for $i=0,1,\dots,n$. Here, the unit time $\tau_J$ corresponds to the finest observable resolution, while $n$ is the number of the unit times contained in the sampling period. So, the interval $[0,n\tau_J]$ corresponds to the sampling period. We take $\tau_J=2^{-J-1}$ to make the interpretation of wavelet analysis easier. 

We denote by $(\Omega,\mathcal{F},P)$ the probability space where $B$ is defined.

\subsection{L\'evy-Ciesielski's construction revisited}\label{section:levy}

We first revisit the classical L\'evy-Ciesielski construction of Brownian motion (see \cite{GBJ2016} for a review on this topic). For each $\nu=1,2$, let $\xi^\nu_0$ be a standard normal variable. Also, let $(\xi^\nu_{jk})_{j,k=0}^\infty$ be i.i.d.~standard normal variables independent of $\xi^\nu_0$. Then we define the process $B^\nu=(B^\nu_t)_{t\in[0,1]}$ by
\begin{equation}\label{levy}
B^\nu_t=\xi_0^\nu t+\sum_{j=0}^\infty\sum_{k=0}^{2^j-1}\xi^\nu_{jk}\int_0^t\psi_{jk}(s)ds,
\qquad\nu=1,2,
\end{equation}
where $\psi_{jk}$'s are the Haar functions defined by $\psi_{jk}(s)=2^{j/2}\psi(2^js-k)$, where
\[
\psi(s)=
\left\{
\begin{array}{ll}
1  &  0\leq s<\frac{1}{2}, \\
-1  & \frac{1}{2}\leq s<1, \\
0  &  \text{otherwise}.  
\end{array}
\right.
\]
It is well-known that the infinite sum in the right side of \eqref{levy} has the limit in $C[0,1]$ a.s.~as the function of $t\in[0,1]$, and the limit process $B^\nu$ is a standard Brownian motion. The convergence is also valid in $L^2(P)$ for any $t\in[0,1]$.   

Decomposition \eqref{levy} naturally suggests that we could think that the process
\begin{align*}
B^\nu(j)_t=\sum_{k=0}^{2^j-1}\xi^\nu_{jk}\int_0^t\psi_{jk}(s)ds,\qquad\nu=1,2;j=0,1,2,\dots,
\end{align*}
as a ``Brownian component at the level $j$''. Here, the level $j$ has the unit resolution $\tau_j=2^{-j-1}$. This suggests that we could assess the lead-lag effect at the resolution $\tau_j$ by measuring that between $B^1(j)$ and $B^2(j)$ in some sense. A standard way to measure the lead-lag effect between two processes is assessing the cross-covariance function of their returns, provided that they are jointly stationary. However, this approach cannot be applied to the processes $B^1(j)$ and $B^2(j)$ directly because they are not of stationary increments. Instead, we propose assessing the cross-covariance function between $(\xi^1_{jk})_{k=0}^\infty$ and $(\xi^2_{jk})_{k=0}^\infty$, provided that {they are jointly (weakly) stationary}. Specifically, the objectives are given by
\begin{equation}\label{dwt-corr}
\rho^0_j(2\cdot l\tau_j)=\covariance\left(\xi^1_{jk},\xi^2_{jk+l}\right),\qquad l=0,\pm1,\pm2,\dots.
\end{equation}
Here, we write the cross-covariance between $\xi^1_{jk}$ and $\xi^2_{jk+l}$ with respect to the lag $2\cdot l\tau_j$ instead of $l$ alone to emphasize that their physical time difference is $2\cdot l\tau_j$.

Since we can reproduce $\xi^\nu_{jk}$'s for $j\leq J$ via the identity
\begin{equation*}
\xi^\nu_{jk}=\int_0^1\psi_{jk}(s)dB^\nu_s=2^{\frac{j}{2}}\left(2B^\nu_{(2k+1)\tau_{j}}-B^\nu_{(2k+2)\tau_{j}}-B^\nu_{2k\tau_{j}}\right),
\end{equation*}
we can naturally use the following estimator for $\rho_j(2\cdot l\tau_j)$:
\begin{align*}
\widehat{\rho}_{j}^{0}(2\cdot l\tau_j)&=\frac{1}{M-l}\sum_{k=0}^{M-1-l}\left(\int_0^1\psi_{jk}(s)dB^1_s\right)\left(\int_0^1\psi_{jk+l}(s)dB^2_s\right),
\end{align*}
where $M=\lfloor n/2^{J-j+1}\rfloor$. 
This class of estimators has an advantage in terms of computation, for the variables $\int_0^1\psi_{jk}(s)dB^\nu_s$ are known as the \textit{discrete wavelet transform} (DWT) of $(B^\nu_{(k+1)\tau_J}-B^\nu_{k\tau_J})_{k=0}^{n-1}$ (ignoring the boundary variables) and fast computation algorithms for them are known (``pyramid algorithm'', cf.~Section 7.3.1 of \citet{Mallat2009}). 
{On the other hand, the main disadvantage of this approach is that we can define the cross-covariance function only at lags of the form $2^j\cdot l\tau_{J}$ for $l\in\mathbb{Z}$ and $j\leq J$, which is an undesirable restriction on lead-lag analyses. In particular, time-lags that are odd multiples of $\tau_J$ shall not be allowed in this approach, which appears an artificial model assumption.} 
In the next subsection we propose another framework to deal with this issue.  

\subsection{L\'evy-Ciesielski's construction based on dyadic wavelet transform}

The major drawback of the first approach is that we cannot define the cross-covariance at {every discrete grid point} generated for the finest observation resolution $\tau_J$. To overcome this issue, we introduce an alternative decomposition to \eqref{levy}. 
For this purpose we reinterpret the L\'evy-Ciesielski construction of Brownian motion as follows. 
Since $\phi:=1_{[0,1)}$ and $\psi_{jk}$ ($j=0,1,\dots,k=0,1,\dots,2^j-1$) constitute an orthonormal basis of $L^2[0,1]$, we have the following expansion for any $f\in L^2[0,1]$:
\begin{equation}\label{expansion}
f=\langle f,\phi\rangle\phi+\sum_{j=0}^\infty\sum_{k=0}^{2^j-1}\langle f,\psi_{jk}\rangle\psi_{jk}\qquad\text{in }L^2[0,1],
\end{equation}
where $\langle\cdot,\cdot\rangle$ denotes the inner product of $L^2[0,1]$. This implies that
\begin{equation}\label{expansion2}
\int_0^1f(s)dB_s^\nu=\langle f,\phi\rangle B_1^\nu+\sum_{j=0}^\infty\sum_{k=0}^{2^j-1}\langle f,\psi_{jk}\rangle\int_0^1\psi_{jk}(s)dB_s^\nu
\end{equation}
for $\nu=1,2$. Substituting $f=1_{(0,t]}$ in the above equation, we recover the L\'evy-Ciesielski construction \eqref{levy} with $\xi_0^\nu=B^\nu_1$ and $\xi_{jk}^\nu=\int_0^1\psi_{jk}(s)dB_s^\nu$. This suggests that we could obtain a decomposition of Brownian motion analogous to the L\'evy-Ciesielski construction once we have a ``canonical'' decomposition for any $f\in L^2[0,1]$ such as \eqref{expansion}. Motivated by this idea, we consider alternative wavelet decomposition for functions which is suitable for {the current} purpose. 

We begin by recalling that decomposition \eqref{expansion} can be {regarded} as a discretization of Calder\'on's reproducing identity for $f\in L^2(\mathbb{R})$ (cf.~Sections 3.1--3.2 of \cite{Vidakovic1999}):
\begin{equation}\label{calderon}
f(t)
=\frac{1}{C_\psi}\int_{0}^\infty \left[\int_{-\infty}^\infty(W_af)(b)\psi\left(\frac{t-b}{a}\right)db\right]\frac{1}{a^2}da\quad\text{in }L^2(\mathbb{R}),
\end{equation}
where $*$ denotes convolution and we set 
\[
C_\psi=\int_0^\infty\frac{|(\mathcal{F}\psi)(\lambda)|^2}{\lambda}d\lambda
\]
as well as we define the function $W_af:\mathbb{R}\to\mathbb{R}$, which is called the \textit{continuous wavelet transform} of $f$, by
\[
W_af(b)=a^{-\frac{1}{2}}\int_{-\infty}^\infty f(t)\psi\left(\frac{t-b}{a}\right)dt,\qquad
b\in\mathbb{R}. 
\]
In fact, decomposition \eqref{expansion} is obtained by discretizing formula \eqref{calderon} in both the scale parameter $a$ and the shift parameter $b$. Now, what is unsuitable for us in \eqref{expansion2} is that we can only consider discretized time shifts of the forms $2\cdot l\tau_j$ ($l\in\mathbb{Z}$). A natural solution of this issue is to only discretize the scale parameter $a$. This leads to the following expansion for $f\in L^2(\mathbb{R})$:   
\begin{equation}\label{LP}
f=(f*\underline{\phi})*\phi+\sum_{j=0}^\infty 2^j(f*\underline{\psi_j})*\psi_j
\qquad\text{in }L^2(\mathbb{R}).
\end{equation}
Here, for any function $g$ on $\mathbb{R}$ we define the functions $\underline{g}$ and $g_j$ ($j\in\mathbb{Z}$) on $\mathbb{R}$ by setting $\underline{g}(t)=g(-t)$ and $g_j(t)=2^{j/2}g(2^jt)$ for $t\in\mathbb{R}$. Note that $f*\underline{\psi_j}=W_{2^{-j}}f$. Decomposition \eqref{LP} is indeed valid for any $f\in L^2(\mathbb{R})$ by Theorem 5.11 from \cite{Mallat2009}, for we can deduce
\[
|(\mathcal{F}\phi)(\lambda)|^2+\sum_{j=-\infty}^\infty2^j|(\mathcal{F}\psi_j)(\lambda)|^2=1
\]
from the proof of Theorem 5.13 from \cite{Mallat2009}. The corresponding L\'evy-Ciesielski type construction is given as follows:
\begin{proposition}\label{prop:dyadic}
Let $B=(B_t)_{t\in\mathbb{R}}$ be a two-sided Brownian motion. Suppose that real-valued functions $\tilde{\phi},\tilde{\psi}\in L^2(\mathbb{R})$ satisfy
\begin{equation}\label{dyadic}
|(\mathcal{F}\tilde{\phi})(\lambda)|^2+\sum_{j=0}^\infty2^j|(\mathcal{F}\tilde{\psi}_j)(\lambda)|^2=1
\end{equation}
for any $\lambda\in\mathbb{R}$. Then we have
\begin{equation}\label{bm-dyadic}
B_t=\int_0^t\tilde{\xi}*\tilde{\phi}(s)ds+\sum_{j=0}^\infty2^j\int_0^t\tilde{\xi}_j*\tilde{\psi}_{j}(s)ds\qquad\text{in }L^2({P})
\end{equation}
for any $t\in\mathbb{R}$, where
\[
\tilde{\xi}(u)=\int_{-\infty}^\infty\tilde{\phi}(s-u)dB_s,\qquad
\tilde{\xi}_j(u)=\int_{-\infty}^\infty\tilde{\psi}_{j}(s-u)dB_s
\]
{
and the integrals $\int_0^t\tilde{\xi}*\tilde{\phi}(s)ds$ and $\int_0^t\tilde{\xi}_j*\tilde{\psi}_{j}(s)ds$ are interpreted as
\begin{align*}
\int_0^t\tilde{\xi}*\tilde{\phi}(s)ds&:=\lim_{A\to\infty}\int_0^t\left\{\int_{-A}^A \tilde{\xi}(u)\tilde{\phi}(s-u)du\right\}ds,\\
\int_0^t\tilde{\xi}_j*\tilde{\psi}_j(s)ds&:=\lim_{A\to\infty}\int_0^t\left\{\int_{-A}^A \tilde{\xi}_j(u)\tilde{\psi}_j(s-u)du\right\}ds,
\end{align*}
where the limits are taken in $L^2(P)$. 
}
\end{proposition}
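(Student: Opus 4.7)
The plan is to substitute $f=1_{(0,t]}$ into the deterministic reconstruction formula \eqref{LP} (for the general pair $(\tilde\phi,\tilde\psi)$) and then integrate against $dB_s$ via the Wiener isometry, using stochastic Fubini to bring each outer Lebesgue integral past the Wiener integral. I take $t>0$; the case $t<0$ is analogous.

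First, the proof of Theorem 5.11 in \cite{Mallat2009} relies only on a dyadic partition-of-unity relation, so under \eqref{dyadic} the expansion \eqref{LP} remains valid with $(\tilde\phi,\tilde\psi)$ in place of $(\phi,\psi)$. Applying it to $1_{(0,t]}\in L^2(\mathbb{R})$ yields, in $L^2(\mathbb{R})$,
\[
1_{(0,t]}=(1_{(0,t]}*\underline{\tilde\phi})*\tilde\phi+\sum_{j=0}^\infty 2^j\,(1_{(0,t]}*\underline{\tilde\psi_j})*\tilde\psi_j.
\]
Since $B_t=\int 1_{(0,t]}(s)\,dB_s$ and $f\mapsto\int f\,dB$ is an isometry from $L^2(\mathbb{R})$ into $L^2(P)$, this expansion transports term-by-term to
\[
B_t=\int\!\big[(1_{(0,t]}*\underline{\tilde\phi})*\tilde\phi\big](s)\,dB_s+\sum_{j=0}^\infty 2^j\int\!\big[(1_{(0,t]}*\underline{\tilde\psi_j})*\tilde\psi_j\big](s)\,dB_s\quad\text{in }L^2(P).
\]

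It remains to identify each stochastic integral on the right with the corresponding term of \eqref{bm-dyadic} in the paper's $L^2(P)$-limit sense. Consider the scaling component; set $g(u):=(1_{(0,t]}*\underline{\tilde\phi})(u)=\int_0^t\tilde\phi(r-u)\,dr$. For each $A>0$, the Gaussian field $\tilde\xi$ has constant variance $\|\tilde\phi\|_2^2$, so Cauchy--Schwarz bounds $E\int_0^t\!\int_{-A}^A|\tilde\xi(u)\tilde\phi(r-u)|\,du\,dr$ and ordinary Fubini gives $\int_0^t\!\int_{-A}^A\tilde\xi(u)\tilde\phi(r-u)\,du\,dr=\int_{-A}^A\tilde\xi(u)g(u)\,du$. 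A standard stochastic Fubini on the compact window $[-A,A]$ then produces
\[
\int_{-A}^A\tilde\xi(u)g(u)\,du=\int\!\Big[\int_{-A}^A\tilde\phi(s-u)g(u)\,du\Big]dB_s.
\]
As $A\to\infty$, the bracketed deterministic integral converges in $L^2(\mathbb{R})$ to $(g*\tilde\phi)(s)$: the $L^2$-norm of the difference is controlled via Parseval by $\|g\cdot 1_{\{|u|>A\}}\|_2\cdot\|\mathcal{F}\tilde\phi\|_\infty$, and $\|\mathcal{F}\tilde\phi\|_\infty\le 1$ by \eqref{dyadic}. The Wiener isometry transfers this convergence to $L^2(P)$, so the defining limit of $\int_0^t\tilde\xi*\tilde\phi(s)\,ds$ coincides with $\int[(1_{(0,t]}*\underline{\tilde\phi})*\tilde\phi](s)\,dB_s$. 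The identical argument with $\tilde\psi_j$ in place of $\tilde\phi$ handles each wavelet term.

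The main obstacle is this stochastic-Fubini step combined with the truncation passage $A\to\infty$: both rely crucially on the uniform Fourier bounds $\|\mathcal{F}\tilde\phi\|_\infty,\|\mathcal{F}\tilde\psi_j\|_\infty\le 1$ delivered by \eqref{dyadic} to keep convolution by $\tilde\phi$ or $\tilde\psi_j$ continuous on $L^2(\mathbb{R})$ and to control the truncation error uniformly in $j$. Once these technicalities are settled, the three ingredients---the deterministic expansion \eqref{LP}, the Wiener isometry, and stochastic Fubini---assemble directly into \eqref{bm-dyadic}.
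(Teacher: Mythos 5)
Your proposal is correct and follows essentially the same route as the paper's proof: substitute $f=1_{(0,t]}$ into the expansion \eqref{LP} (via Theorem 5.11 of Mallat), transfer it to stochastic integrals by the Wiener isometry, and identify each term with the defining $A\to\infty$ limit via stochastic Fubini on the truncated window. The only cosmetic differences are the order in which you perform the Fubini interchanges and your explicit use of the bound $\|\mathcal{F}\tilde{\phi}\|_\infty\leq1$ from \eqref{dyadic} to control the truncation error, which the paper leaves implicit.
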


A proof is given in Section \ref{proof:dyadic}. This decomposition suggests that we might think that the process $\tilde{\xi}_j(\cdot)$ as an alternative ``Brownian component at the level $j$''. However, unlike the original L\'evy-Ciesielski construction, we do not generally have the independence of Brownian components across different levels, i.e.~the processes $\tilde{\xi}_j(\cdot)$ and $\tilde{\xi}_{j'}(\cdot)$ are not independent for $j\neq j'$, especially when we adopt the Haar wavelets as $\tilde{\phi}=\phi$ and $\tilde{\psi}=\psi$. The lack of such independence makes it challenging to model/interpret lead-lag effects at different levels. 
We can avoid this issue by alternatively adopting band-limited wavelets (i.e.~wavelets having compactly supported Fourier transforms). Specifically, we take the Littlewood-Paley wavelets as follows:\footnote{See page 115 of \citet{Daubechies1992}.}
\begin{align*}
\tilde{\phi}(s)=\phi^{LP}(s):=\frac{\sin(\pi s)}{\pi s},\qquad
\tilde{\psi}(s)=\psi^{LP}(s):=2\phi^{LP}\left(2s\right)-\phi^{LP}(s).
\end{align*}
We may regard the Littlewood-Paley wavelets as the ``representative'' band-limited wavelets because any band-limited function can be recovered from its discrete samples with interpolation based on the Littlewood-Paley scaling function $\phi^{LP}$, according to the Shannon-Whittaker sampling theorem (cf.~Theorem 3.2 of \cite{Mallat2009}). 
Now, since we have
\[
(\mathcal{F}\phi^{LP})(\lambda)=1_{[-\pi,\pi]}(\lambda),\qquad
(\mathcal{F}\psi^{LP})(\lambda)=1_{[-2\pi,-\pi)\cup(\pi,2\pi]}(\lambda),
\]
condition \eqref{dyadic} is satisfied. Moreover, for any $j,j'\geq0$ and any $u,v\in\mathbb{R}$, the Parseval identity yields
\begin{equation}\label{decorrelate}
E\left[\tilde{\xi}_j(u)\tilde{\xi}_{j'}(v)\right]
=\int_{-\infty}^\infty\psi^{LP}_j(s-u)\psi^{LP}_{j'}(s-v)ds
=\frac{1}{2^{\frac{j+j'}{2}+1}\pi}\int_{\Lambda_j\cap\Lambda_{j'}} e^{\sqrt{-1}(u-v)\lambda}d\lambda,
\end{equation}
where $\Lambda_j=[-2^{j+1}\pi,-2^j\pi)\cup(2^j\pi,2^{j+1}\pi]$ for $j\in\mathbb{Z}$. Since $\Lambda_j\cap\Lambda_{j'}=\emptyset$ if $j\neq j'$, we have the independence of Brownian components across different levels because they are Gaussian.

Now, analogously to the first approach, we measure the lead-lag effect at the level $j$ by assessing the cross-covariance function between the processes $\tilde{\xi}_j^1(\cdot)$ and $\tilde{\xi}_j^2(\cdot)$. Here, we note that $\tilde{\xi}_j^1$ and $\tilde{\xi}_j^2$ are jointly stationary if the process $B_t=(B^1_t,B^2_t)$ is of stationary increments, i.e.~$E\left[\left(B^1_{t_1+h}-B^1_{s_1+h}\right)\left(B^2_{t_2+h}-B^2_{s_2+h}\right)\right]
=E\left[\left(B^1_{t_1}-B^1_{s_1}\right)\left(B^2_{t_2}-B^2_{s_2}\right)\right]$ 
for any $t_1,t_2,s_1,s_2,h\in\mathbb{R}$. As shown in the next subsection, the latter assumption is not restrictive. Consequently, we propose assessing the cross-covariance function:
\[
\rho_j(\theta)=\covariance\left[\tilde{\xi}_j^1(u),\tilde{\xi}_j^2(u+\theta)\right],\qquad\theta\in\mathbb{R},
\]
provided that {$\tilde{\xi}_j^1$ and $\tilde{\xi}_j^2$ are jointly (weakly) stationary}. 

\begin{rmk}
The ``coefficients'' $(f*\underline{\psi_j})_{j=0}^\infty$ in decomposition \eqref{LP} are called the \textit{(translation-invariant) dyadic wavelet transform} of $f$ in the wavelet literature (cf.~Section 5.2 of \cite{Mallat2009}). 
In conjunction with this terminology, the processes in \eqref{dyadic} might be called the dyadic wavelet transform of $dB^\nu$. 
Applications of dyadic wavelet transform based decomposition \eqref{LP} are found in pattern recognition and denoising with translation-invariant thresholding estimators; see Chapter 6 and Section 11.3.1 of \cite{Mallat2009}.  

\end{rmk}

\subsection{Model specification via cross-spectrum}

For ease of interpretation, it is desirable that $|\rho_j(\theta)|$ has the unique maximum at some $\theta_j\in\mathbb{R}$ for each $j$. So we presuppose such a situation and introduce the following specification into our model. 
We first note that \eqref{decorrelate} implies that $\tilde{\xi}^\nu_j$ has the spectral density $2^{-j}1_{\Lambda_j}$, hence its spectrum is concentrated on $\Lambda_j$. We also remark that if $W_t=(W^1_t,W^2_t)$ ($t\in\mathbb{R}$) is a bivariate two-sided Brownian motion with correlation parameter $R$, then for any $\theta\in\mathbb{R}$ the process $(W^1_t,W^2_{t-\theta})$ ($t\in\mathbb{R}$) is of stationary increments and has the cross-spectral density $Re^{-\sqrt{-1}\theta\lambda}$, $\lambda\in\mathbb{R}$. Motivated by these facts, let us suppose that the bivariate process $B_t=(B^1_t,B^2_t)$ is of stationary increments and its cross-spectral density is of the form $R_je^{-\sqrt{-1}\theta_j\lambda}$ with $R_j\in[-1,1]$ and $\theta_j\in\mathbb{R}$ for $\lambda\in\Lambda_j$, $j=0,1,\dots$. Namely, the function $f:\mathbb{R}\to\mathbb{C}$, which is defined by
\[
f(\lambda)=\sum_{j=0}^\infty R_je^{-\sqrt{-1}\theta_j\lambda}1_{\Lambda_j}(\lambda),\qquad\lambda\in\mathbb{R},
\]
satisfies
\if0
\begin{equation}\label{csd}
E[B_t^1B_s^2]=\frac{1}{2\pi}\int_{-\infty}^\infty\frac{(e^{-\sqrt{-1}\lambda t}-1)(e^{\sqrt{-1}\lambda s}-1)}{\lambda^2}f(\lambda)d\lambda
\end{equation}
for any $t,s\in\mathbb{R}$. Note that \eqref{csd} implies that
\fi
\begin{equation}\label{s-csd}
E\left[\left(\int_{-\infty}^\infty u(s)dB_s^1\right)\overline{\left(\int_{-\infty}^\infty v(s)dB^2_s\right)}\right]=\frac{1}{2\pi}\int_{-\infty}^\infty(\mathcal{F}u)(\lambda)\overline{(\mathcal{F}v)(\lambda)}f(\lambda)d\lambda
\end{equation}
for any $u,v\in L^2(\mathbb{R})$. Therefore, noting that
\begin{equation}\label{lp-fourier}
(\mathcal{F}^{-1}1_{\Lambda_0})(s)=\psi^{LP}(s)
\end{equation}
for $s\in\mathbb{R}$, in this situation we have for each $j$
\begin{align*}
\rho_j(\theta)
&=\frac{R_j}{2^{j+1}\pi}\int_{\Lambda_j}e^{\sqrt{-1}(\theta-\theta_j)\lambda}d\lambda
=R_j\psi^{LP}(2^j(\theta-\theta_j))
=R_j\frac{\sin[2^j\pi(\theta-\theta_j)]}{2^j\pi(\theta-\theta_j)}(2\cos[2^{j}\pi(\theta-\theta_j)]-1)
\end{align*}
by the Fourier inversion formula, hence $|\rho_j(\theta)|$ has the unique maximum $|R_j|$ at $\theta=\theta_j$ as long as $R_j\neq0$.

Now the question of matter is whether we can construct a bivariate process $B_t=(B^1_t,B^2_t)$ having the pre-described properties such that both $B^1$ and $B^2$ are respectively one-dimensional standard Brownian motions. The following proposition gives an affirmative answer:
\begin{proposition}\label{characterization}
Suppose that a measurable function $f:\mathbb{R}\to\mathbb{C}$ satisfies
\begin{equation}\label{L_infty}
\|f\|_\infty\leq1
\end{equation}
and
\begin{equation}\label{hermite}
\overline{f(\lambda)}=f(-\lambda)\qquad\text{for almost all }\lambda\in\mathbb{R}.
\end{equation}
Then there is a bivariate Gaussian process $B_t=(B^1_t,B^2_t)$ ($t\in\mathbb{R}$) with stationary increments such that
\begin{enumerate}[label={\normalfont(\roman*)}]

\item both $B^1$ and $B^2$ are two-sided Brownian motions,

\item $f$ is the cross-spectral density of $B$, i.e.~$f$ satisfies \eqref{s-csd} for any $u,v\in L^2(\mathbb{R})$.

\end{enumerate}

Conversely, if a bivariate process $B_t=(B^1_t,B^2_t)$ ($t\in\mathbb{R}$) with stationary increments satisfies condition (i), there is a measurable function $f:\mathbb{R}\to\mathbb{C}$ satisfying \eqref{L_infty}--\eqref{hermite} and condition (ii). 
\end{proposition}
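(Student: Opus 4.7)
The plan is to handle the existence part via an explicit spectral construction and the converse via the standard characterization of translation-invariant operators on $L^2(\mathbb{R})$ as Fourier multipliers.

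\emph{Forward direction.} Let $Z_1$ and $Z_2$ be mutually independent complex Gaussian orthogonal random measures on $\mathbb{R}$ with Hermitian symmetry $Z_\nu(-A)=\overline{Z_\nu(A)}$ and intensity $d\lambda/(2\pi)$. Set $g(\lambda):=\sqrt{1-|f(\lambda)|^2}$, which is well-defined by \eqref{L_infty}, and put
\[
B^1_t := \int_{-\infty}^{\infty}\frac{e^{\sqrt{-1}\lambda t}-1}{\sqrt{-1}\lambda}\,dZ_1(\lambda),\qquad
B^2_t := \int_{-\infty}^{\infty}\frac{e^{\sqrt{-1}\lambda t}-1}{\sqrt{-1}\lambda}\bigl[f(\lambda)\,dZ_1(\lambda)+g(\lambda)\,dZ_2(\lambda)\bigr].
\]
Condition \eqref{hermite} makes $|f|^2$ even, hence $g$ even and real, and renders each integrand Hermitian in $\lambda$, so $B^1,B^2$ are real-valued. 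Using the isometry $E|\int\phi\,dZ_\nu|^2=(2\pi)^{-1}\|\phi\|_{L^2}^2$ together with $|f|^2+g^2\equiv1$ and the elementary computation $\int|e^{\sqrt{-1}\lambda t}-1|^2\lambda^{-2}\,d\lambda/(2\pi)=|t|$, we obtain $E[(B^\nu_t)^2]=|t|$, and polarization recovers the full Brownian covariance; combined with Gaussianity of $(B^1,B^2)$ this proves (i). Stationary increments of the pair are immediate since shifting $t$ only multiplies integrands by the common unimodular phase $e^{\sqrt{-1}\lambda h}$. Finally, observing that the Wiener integral admits the spectral form $\int u\,dB^1=\int\mathcal{F}u(-\lambda)\,dZ_1(\lambda)$ (from the case $u=1_{(0,t]}$ and extension by $L^2$-density), and similarly for $B^2$, a direct computation of $E[\int u\,dB^1\cdot\overline{\int v\,dB^2}]$---changing variables $\lambda\mapsto-\lambda$ and applying \eqref{hermite} once---yields \eqref{s-csd}.

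\emph{Converse direction.} Given $(B^1_t,B^2_t)$, define
\[
Q(u,v):=E\Bigl[\int_{-\infty}^{\infty} u(s)\,dB^1_s\cdot\int_{-\infty}^{\infty} v(s)\,dB^2_s\Bigr],\qquad u,v\in L^2(\mathbb{R};\mathbb{R}).
\]
Cauchy-Schwarz and the Wiener isometries for $B^1,B^2$ yield $|Q(u,v)|\le\|u\|_2\|v\|_2$, so $Q(u,v)=\langle Tu,v\rangle$ for some bounded $T$ on $L^2(\mathbb{R};\mathbb{R})$ with $\|T\|\le 1$. The stationary-increments hypothesis, read off for step functions and extended by $L^2$-density, is equivalent to $Q(\tau_h u,\tau_h v)=Q(u,v)$ for all $h$, whence $T$ commutes with every translation. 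Complexifying and invoking the $L^2$ Fourier-multiplier theorem, there exists $f\in L^\infty(\mathbb{R})$ with $\|f\|_\infty=\|T\|\le 1$ and $\mathcal{F}(Tu)=f\cdot\mathcal{F}u$. Parseval then turns $\langle Tu,v\rangle$ into \eqref{s-csd} on real $u,v$, and sesquilinear extension handles complex $u,v$. Because $T$ maps real functions to real functions, the identity $\mathcal{F}\bar u(\lambda)=\overline{\mathcal{F}u(-\lambda)}$ forces $\overline{f(\lambda)}=f(-\lambda)$ almost everywhere, which is \eqref{hermite}.

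\emph{Main obstacle.} The principal care is bookkeeping under the paper's Fourier convention: the minus sign appearing in $\mathcal{F}u(-\lambda)$ when a Wiener integral is rewritten as a spectral integral is precisely what, after one appeal to \eqref{hermite}, turns the symbol $f(\lambda)\,dZ_1(\lambda)$ used in defining $B^2$ into the cross-spectral density $f(\lambda)$---rather than $\overline{f}$ or $f(-\cdot)$---in \eqref{s-csd}. In the converse, the other delicate point is justifying that the translation invariance of $Q$, initially obtained from the stationary-increments identity on indicators, genuinely extends to every $L^2$ test function, so that the Fourier-multiplier characterization of translation-invariant operators may legitimately be applied.
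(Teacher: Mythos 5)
Your argument is correct, but it reaches the result by a partly different route than the paper. In the forward direction the underlying idea is the same — mix the driving noise of $B^1$, filtered by the symbol $f$, with an independent noise filtered by $\sqrt{1-|f|^2}$ — but you realize it in the frequency domain through Hermitian complex Gaussian orthogonal random measures, whereas the paper works in the time domain, building the multiplier operators $\mathbf{F}$ and $\mathbf{G}$ via tempered distributions and applying them to $1_{(0,t]}$ inside Wiener integrals against two independent Brownian motions; your sign bookkeeping (one change of variables plus one use of \eqref{hermite}) is exactly right for the paper's Fourier convention. The only detail you skip is that $B^2$ as defined is a priori only an $L^2$-indexed Gaussian family with the Brownian covariance; to call it a two-sided Brownian motion one should take a continuous modification (Kolmogorov's criterion), which the paper does explicitly. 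In the converse direction your proof is genuinely different: the paper invokes the Gikhman--Skorokhod spectral representation for processes with stationary increments, extracts a structural function $F$ of bounded variation, uses positive semidefiniteness of the spectral increment matrix to get the Lipschitz bound $|F(\lambda_1)-F(\lambda_2)|\le|\lambda_1-\lambda_2|$, and then obtains $f=F'$ with \eqref{L_infty} by Lebesgue differentiation; you instead encode the cross-covariance as a bounded bilinear form, pass to a translation-invariant contraction $T$ on $L^2(\mathbb{R})$ (your extension of translation invariance from indicators to all of $L^2$ by density is the right justification), and read off $f$ as the Fourier multiplier symbol with $\|f\|_\infty=\|T\|\le1$ and \eqref{hermite} from reality of $T$. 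Your route trades the classical stationary-increment spectral theory and the absolute-continuity argument for the $L^2$ multiplier theorem, which makes the $L^\infty$ bound and the Hermitian symmetry come out more directly; the paper's route stays within the probabilistic spectral representation framework it already cites. Both are valid proofs of the proposition.
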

We prove this result in Section \ref{proof:characterization}.

In the next section we will consider an asymptotic theory when the unit length $\tau_J$ shrinks to zero, or equivalently, when $J$ tends to infinity, which is a standard approach for theoretical analyses of statistics for high-frequency data (cf.~\citet{AJ2014}). In such a situation it is convenient to relabel indices of the parameters $R_j$ and $\theta_j$ so that the finest resolution $\tau_J$ corresponds to the level $j=1$. 
{That is, $R_j$ and $\theta_j$ are renamed $R_{J-j+1}$ and $\theta_{J-j+1}$ respectively.} 
This suggests that we should model the cross-spectral density of $B_t=(B^1_t,B^2_t)$ as
\begin{equation}\label{asymptotic}
f_{J}(\lambda)=\sum_{j=0}^J R_{{J-j+1}}e^{-\sqrt{-1}\theta_{{J-j+1}}\lambda}1_{\Lambda_j}(\lambda)
=\sum_{j=1}^{J+1} R_{j}e^{-\sqrt{-1}\theta_{j}\lambda}1_{\Lambda_{{J-j+1}}}(\lambda).
\end{equation}
Here, we omit all the components finer than $\tau_J$ from the model because they are not identifiable.

\begin{rmk}
It is well-known that in frictionless markets the no arbitrage assumption forces price processes to follow a semimartingale. As a consequence, the lead-lag model considered in this paper necessarily allows an arbitrage opportunity unless there is indeed no lead-lag relationship. 
However, this is not the case if we take account of some market friction such as the discreteness of transaction times or transaction costs. In fact, in \cite{HK2017arb} it has been shown that the market with a constant risk-free asset and two risky assets whose log-price processes are respectively given by $B^1$ and $B^2$ defined in the above is free of arbitrage if we impose a minimal waiting time on subsequent transactions or take proportional transaction costs into consideration (see Section 3.2 of \cite{HK2017arb}). 
\end{rmk}
   
\subsection{Construction of estimators}

If continuous-time observation data of the process $B_t$ on the whole real line were available, we could reproduce $\tilde{\xi}^\nu_{{J-j+1}}(s)$'s via
\[
\tilde{\xi}^\nu_{{J-j+1}}(s)=\int_{-\infty}^\infty\psi^{LP}_{{J-j+1}}(s-u)dB^\nu_u,\qquad
j=1,\dots,J+1
\]
for $\nu=1,2$ and construct estimators for $\rho_{{J-j+1}}(\theta)$ as in Section \ref{section:levy}. Since we only have discrete observation data of $B_t$ on $[0,n\tau_J]$, one natural way is to approximate the above integral by discretization. This is however problematic because discretization of $\psi^{LP}$ is unstable due to oscillation and $\psi^{LP}$ is not compactly supported. For these reasons we adopt another approach that {uses} Daubechies' compactly supported wavelets. Daubechies' wavelets generate finite-length filters whose gain functions well approximate those of the Littlewood-Paley wavelets (cf.~\citet{Lai1995}).

Specifically, we denote by $(h_p)_{p=0}^{L-1}$ Daubechies' wavelet filter of length $L$. Its squared gain function $H_L(\lambda)=|\sum_{p=0}^{L-1}h_{p}e^{-\sqrt{-1}\lambda p}|^2$ is given by
\[
H_L(\lambda)=2\sin^L(\lambda/2)\sum_{p=0}^{L/2-1}\binom{L/2-1+p}{p}\cos^{2p}(\lambda/2)
\]
(note that $L$ is always an even integer). The associated scaling filter $(g_p)_{p=0}^{L-1}$ is determined by the quadrature mirror relationship:
\[
g_p=(-1)^{p+1}h_{L-p-1},\qquad p=0,1,\dots,L-1.
\] 
Hence its squared gain function $G_L(\lambda)=|\sum_{p=0}^{L-1}g_{p}e^{-\sqrt{-1}\lambda p}|^2$ satisfies $G_L(\lambda)=H_L(\lambda-\pi)$. 
{See Section 4.8 of \cite{PW2000} for more details.}

\begin{rmk}
We follow \citet{PW2000} and \citet{GSW2002} in using the notation that $(h_p)$ denotes the wavelet filter and $(g_p)$ denotes the scaling filter. 
Note that, especially in the area of signal processing, many authors adopt the reverse usage: $(h_p)$ denotes the scaling filter and $(g_p)$ denotes the wavelet filter. 
See \cite{Daubechies1992,Vidakovic1999,Mallat2009} for instance. 
We also remark that the squared gain function of Daubechies' wavelet filter of length $L$ is often defined as $H_L(\lambda)/2$ rather than $H_L(\lambda)$ in the literature. We adopt the present definition to obtain the identity \eqref{normalize}. 
\end{rmk}

Using the filters $(g_p)_{p=0}^{L-1}$ and $(h_p)_{p=0}^{L-1}$ as {respectively} low-pass and high-pass filters, we can construct scale-by-scale band-pass filters. We denote by $(h_{j,p})_{p=0}^{L_j-1}$ the level $j$ wavelet filters associated with the filters $(g_p)_{p=0}^{L-1}$ and $(h_p)_{p=0}^{L-1}$ for every $j$, where $L_j=(2^j-1)(L-1)+1$. 
{See Section 4.6 of \cite{PW2000} for the precise definition.} 
For our analysis the form of its squared gain function $H_{j,L}(\lambda)=|\sum_{p=0}^{L_j-1}h_{j,p}e^{-\sqrt{-1}\lambda p}|^2$ is important, which is given by
\begin{equation}\label{hj-formula}
H_{j,L}(\lambda)=H_L(2^{j-1}\lambda)\prod_{i=0}^{j-2}G_L(2^i\lambda),\qquad\lambda\in\mathbb{R}.
\end{equation}
We also remark that the filter $(h_{j,p})_{p=0}^{L_j-1}$ has unit energy
\begin{equation}\label{normalize}
\sum_{p=0}^{L_j-1}h_{j,p}^2=1.
\end{equation}

Now we {approximate} $\tilde{\xi}^\nu_{{J-j+1}}(\cdot)$ by  
\begin{equation}\label{modwt}
\mathcal{W}_{jk}^\nu=\sum_{p=0}^{L_j-1}h_{j,p}\left(B^\nu_{(k+1-p)\tau_J}-B^\nu_{(k-p)\tau_J}\right),\qquad k=L_j-1,\dots,n-1
\end{equation}
for each $\nu=1,2$. We remark that the transformation of $(B^\nu_{(k+1)\tau_J}-B^\nu_{k\tau_J})_{k=0}^{n-1}$ in \eqref{modwt} is the so-called \textit{maximal overlap discrete wavelet transform} (MODWT) up to multiplication of constants, and that the resulting $\mathcal{W}_{jk}^\nu$'s are the corresponding wavelet coefficients. 
See {Section 5 of \cite{PW2000}} and Section 4.5 of \cite{GSW2002} for details.
Then we define the cross-covariance estimators at the level $j$ by
\begin{equation}\label{ccf-estimator}
\widehat{\rho}_{{J-j+1}}(l\tau_J)
=\left\{
\begin{array}{ll}
\frac{\tau_J^{-1}}{n-l-L_j+1}\sum_{k=L_j-1}^{n-l-1}\mathcal{W}^1_{jk}\mathcal{W}^2_{jk+l}  & \text{if }l\geq0  \\
\frac{\tau_J^{-1}}{n+l-L_j+1}\sum_{k=L_j-1}^{n+l-1}\mathcal{W}^1_{jk-l}\mathcal{W}^2_{jk}  & \text{otherwise}.
\end{array}
\right.
\end{equation}
In the next section we will show that these are asymptotically sensible cross-covariance estimators while both $J$ and $L$ tend to infinity with appropriate rates.

\begin{rmk}
The above estimators are formally the same as the \textit{wavelet cross-covariance estimators} used in discrete-time modeling framework up to multiplication of constants (cf.~Section 7.4 of \cite{GSW2002}). Therefore, the results presented in the next section ensure the validity of using such estimators in the continuous-time modeling framework proposed in this paper. 
\end{rmk}

\if0
\begin{rmk}
If we were in a situation where the finite-dimensional distributions of $(\Delta_kB)_{k=0}^{n-1}$ do not depend on $n$, asymptotic theories developed in \cite{SW2000a,SW2000b} and \cite{WGP1999,WGP2000} could be applied to our estimator $\hat{\rho}^\text{2nd}_j(l\tau_J)$. The situation considered here is the so-called infill asymptotics and different from the above one, i.e.~the finite-dimensional distributions of $(\Delta_kB)_{k=0}^{n-1}$ depend on $n$. Asymptotic theories under such a situation are rather different and we investigate this topic in the next section.  
\end{rmk}
\fi

\section{Asymptotic theory}\label{theory}


This section presents an asymptotic theory for the estimators constructed in the previous section. We also generalize the setting therein. First we assume that $n\tau_J\to T$ as $J\to\infty$ for some $T\in(0,\infty)$. This means that we observe data on a \textit{fixed} interval (e.g.~one day). Next, $B_t=(B_t^1,B_t^2)$ ($t\in\mathbb{R}$) denotes a bivariate Gaussian process with stationary increments such that both $B^1$ and $B^2$ are respectively one-dimensional two-sided standard Brownian motions and its cross-spectral density is given by \eqref{asymptotic}. We denote by $(\Omega,\mathcal{F},P)$ the probability space where the process $B$ is defined. Now, we assume that for each $\nu=1,2$ we have a filtration $(\mathcal{F}^\nu_t)_{t\geq0}$ such that the process $(B^\nu_t)_{t\geq0}$ is an $(\mathcal{F}^\nu_t)$-Brownian motion. Then, the $\nu$-th log-price process $X^\nu=(X^\nu_t)_{t\geq0}$ is given by
\begin{equation}\label{model}
X^\nu_t=X^\nu_0+\int_0^t\sigma^\nu_tdB^\nu_t,\qquad t\geq0,
\end{equation}
where $\sigma^\nu_t$ is an $(\mathcal{F}^\nu_t)$-adapted c\`adl\`ag process (hence the above stochastic integral is well-defined in the usual sense).

We assume that the processes $X^1$ and $X^2$ are observed at time stamps of the form $k\tau_J$ for some $k=0,1,\dots,n$. Unlike the previous section, we {allow observation data at some time points to be missing}. For each $\nu=1,2$ we denote by $\delta_{\nu,k}$ the indicator variable which is unity when $X^\nu$ is not observed at the time $k\tau_J$ and zero otherwise. So the observation data for $X^\nu$ are given by $\{X^\nu_{k\tau_J}:\delta_{\nu,k}=0,k=0,1,\dots,n\}$. For simplicity we assume that the initial value $X^\nu_0$ is observed, i.e.~$\delta_{\nu,0}\equiv0$. To construct our estimators we create (pseudo) complete observation data at all the points $k\tau_J$'s by use of the previous-tick interpolation scheme. Then we construct our estimators based on this observation data as in the previous section. To derive the mathematical expression of the estimators constructed so, it is convenient to introduce \citet{LM1990}'s notation. For each $\nu=1,2$, setting
\[
\chi_{\nu,k}(0)=1-\delta_{\nu,k+1},\qquad
\chi_{\nu,k}(\alpha)=(1-\delta_{\nu,k+1})\prod_{l=1}^{\alpha}\delta_{\nu,k+1-l},\qquad\alpha=1,\dots,k
\]
for $k=0,1,\dots,n-1$, we can write the pseudo observed returns for $X^\nu$ based on the interpolated data as
\[
\Delta^o_{k}X^\nu=\sum_{\alpha=0}^{k}\chi_{\nu,k}(\alpha)\Delta_{k-\alpha}X^\nu,\qquad k=0,1,\dots,n-1,
\]
where $\Delta_{k}X^\nu=X^\nu_{(k+1)\tau_J}-X^\nu_{k\tau_J}$. Therefore, we have $\mathcal{W}^\nu_{jk}=\sum_{p=0}^{L_j-1}h_{j,p}\Delta^o_{k-p}X^\nu$ for $j=0,1,\dots$ and $k=L_j-1,\dots,n-1$. The estimator $\widehat{\rho}_{{J-j+1}}(l\tau_J)$ is then constructed by \eqref{ccf-estimator}. Here, for the construction of asymptotic results in a general form we additionally consider a functional version of $\widehat{\rho}_{{J-j+1}}(l\tau_J)$. We define the process $(\widehat{\boldsymbol{\rho}}_{{J-j+1}}(l\tau_J)_t)_{t\geq0}$ by
\[
\widehat{\boldsymbol{\rho}}_{{J-j+1}}(l\tau_J)_t
=\left\{
\begin{array}{ll}
\frac{\tau_J^{-1}}{n-l-L_j+1}\sum_{k=L_j-1}^{\lfloor \tau_J^{-1}t\rfloor-l}\mathcal{W}^1_{jk}\mathcal{W}^2_{jk+l}  & \text{if }l\geq0,  \\
\frac{\tau_J^{-1}}{n+l-L_j+1}\sum_{k=L_j-1}^{\lfloor \tau_J^{-1}t\rfloor+l}\mathcal{W}^1_{jk-l}\mathcal{W}^2_{jk}  & \text{otherwise}
\end{array}
\right.
\]
for $t\geq0$.\footnote{We set $\sum_{k=p}^q\equiv0$ if $p>q$ by convention.} Since we have $\widehat{\rho}_{{J-j+1}}(l\tau_J)=\widehat{\boldsymbol{\rho}}_{{J-j+1}}(l\tau_J)_{(n-1)\tau_J}$, it is enough to investigate the asymptotic properties of $(\widehat{\boldsymbol{\rho}}_{{J-j+1}}(l\tau_j)_t)_{t\geq0}$.
 
Regarding the mechanism of missing observations, we focus on the following simple situation as in \cite{LM1990} (known as \textit{missing completely at random}): 
\begin{enumerate}[label=(\roman*)]

\item The observation for $X^\nu$ can be missing at each $i\tau_J$ with probability $\pi_\nu$ for $\nu=1,2$,

\item missing observations occur independently.

\end{enumerate}
Under this situation, for each $\nu=1,2$ $(\delta_{\nu,k})_{k=1}^\infty$ is a sequence of i.i.d.~Bernoulli variables with provabilities $\pi_\nu$ and $1-\pi_\nu$ of taking {values} 1 and 0. Also, $(\delta_{1,k})_{k=1}^\infty$ and $(\delta_{2,k})_{k=1}^\infty$ are mutually independent.

To avoid boundary issues, we assume that the true lag parameters $\theta_j\in(-\delta,\delta)$ for all $j$ with some constant $\delta\in(0,T)$, and evaluate the cross-covariance function {on the finite grid} $\mathcal{G}_J=\{l\tau_J:l\in\mathbb{Z},|l\tau_J|<\delta\}.$

For processes $(Y^J_t)_{t\geq0}$ ($J=1,2,\dots$) and a process $(Y_t)_{t\geq0}$, we write $Y^J_t\xrightarrow{ucp}Y_t$ to express $\sup_{0\leq s\leq t}|Y^J_s-Y_s|\to^p0$ as $J\to\infty$ for any $t>0$.
\begin{theorem}\label{theorem:main}
Let $j\in\mathbb{N}$ be fixed. Suppose that {$L=L^{(J)}$ depends on $J$ so that} $L\to\infty$ and $L^2\tau_J\to0$ as $J\to\infty$. Let $(\vartheta_J)$ be a sequence of real numbers such that $\vartheta_J\in\mathcal{G}_J$ for every $J$. 
\begin{enumerate}[label={\normalfont(\alph*)}]

\item If $L^{-\frac{1}{2}}\tau_J^{-1}(\vartheta_J-\theta_{j})\to\infty$ as $J\to\infty$, then $\widehat{\boldsymbol{\rho}}_{{J-j+1}}(\vartheta_J)_t\xrightarrow{ucp}0$ as $J\to\infty$.

\item If $\tau_J^{-1}(\vartheta_J-\theta_j)\to b$ as $J\to\infty$ for some $b\in\mathbb{R}$, then $\widehat{\boldsymbol{\rho}}_{{J-j+1}}(\vartheta_J)_t\xrightarrow{ucp}2^j\Sigma_t(\theta_j)R_{j}\int_{\Lambda_{-j}}D(\lambda)\Pi(\lambda)e^{\sqrt{-1}b\lambda}d\lambda$ as $J\to\infty$, where
\[
D(\lambda)=\frac{1}{2\pi}\left|\frac{e^{-\sqrt{-1}\lambda}-1}{\lambda}\right|^2,\qquad
\Pi(\lambda)=\frac{(1-\pi_1)(1-\pi_2)}{(1-\pi_1e^{\sqrt{-1}\lambda})(1-\pi_2e^{-\sqrt{-1}\lambda})}
\]
and
\[
\Sigma_t(\theta)
=\left\{
\begin{array}{ll}
\frac{1}{T-\theta}\int_0^{(t-\theta)_+}\sigma^1_s\sigma^2_{s+\theta}ds  & \text{if }\theta\geq0,  \\
\frac{1}{T+\theta}\int_0^{(t+\theta)_+}\sigma^1_{s-\theta}\sigma^2_{s}ds  & \text{otherwise}.
\end{array}
\right.
\]

\end{enumerate}
\end{theorem}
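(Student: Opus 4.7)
The plan is to work in the frequency domain via the spectral representation of $B$, handle the missing observations by an exact generating-function computation, and treat the stochastic volatility by local approximation on the short support of each wavelet filter. First, by a standard localization argument I may assume $\sigma^\nu$ is uniformly bounded. Second, I reduce to the idealized case $X^\nu = B^\nu$ (no stochastic volatility) and then add it back at the end: since $L_j\tau_J\leq 2^j L\tau_J\to 0$ (using $L^2\tau_J\to 0$), each $\mathcal{W}^\nu_{jk}$ can be written as $\sigma^\nu_{k\tau_J}\widetilde{\mathcal{W}}^\nu_{jk}+r^\nu_{jk}$, where $\widetilde{\mathcal{W}}^\nu_{jk}$ is computed from $B^\nu$, and the c\`adl\`ag regularity of $\sigma^\nu$ combined with It\^o's isometry makes the remainder contribute $o_P(1)$ to $\widehat{\boldsymbol{\rho}}$.

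For the constant-volatility baseline I plug $\Delta^o_k B^\nu=\sum_\alpha \chi_{\nu,k}(\alpha)\Delta_{k-\alpha}B^\nu$ into $\mathcal{W}^\nu_{jk}$, take expectation, and use (i) equation (10) applied to indicator-function test functions to get $E[\Delta_{k-p-\alpha}B^1\Delta_{k+l-q-\beta}B^2] = \tau_J\int e^{\sqrt{-1}\mu(p+\alpha+l-q-\beta)}D(\mu)f_J(\mu/\tau_J)d\mu$ after the change of variable $\mu=\lambda\tau_J$, (ii) the mutual independence of the missing-observation patterns $(\delta_{1,k})$ and $(\delta_{2,k})$, and (iii) $E[\chi_{\nu,k}(\alpha)]=(1-\pi_\nu)\pi_\nu^\alpha$ together with the geometric identity $\sum_{\alpha\geq0}(1-\pi_\nu)\pi_\nu^\alpha z^\alpha = (1-\pi_\nu)/(1-\pi_\nu z)$. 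Summing over $p,q,\alpha,\beta$ produces
\[
E[\widetilde{\mathcal{W}}^1_{jk}\widetilde{\mathcal{W}}^2_{jk+l}]=\tau_J\int_{-\pi}^\pi e^{\sqrt{-1}\mu l}H_{j,L}(\mu)D(\mu)\Pi(\mu)f_J(\mu/\tau_J)\,d\mu+O(\tau_J L_j/n)
\]
where the boundary error comes from the truncation at small $k$. The key frequency-localization fact is that $H_{j,L}\to 2^j\cdot 1_{\Lambda_{-j}}$ in $L^1(-\pi,\pi)$: Daubechies' filter approximates the Littlewood-Paley ideal filter by \cite{Lai1995}, the unit-energy normalization (14) gives $\int_{-\pi}^\pi H_{j,L}=2\pi$, and $|\Lambda_{-j}|=2^{-j+1}\pi$ forces the limiting height to be $2^j$, which is precisely the source of the prefactor $2^j$ in the theorem. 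Because the bands $\Lambda_{-j'}$ ($j'=1,\ldots,J+1$) are disjoint and only $\Lambda_{-j}$ is selected, on that band $f_J(\mu/\tau_J)=R_je^{-\sqrt{-1}\theta_j\mu/\tau_J}$, yielding
\[
E[\widetilde{\mathcal{W}}^1_{jk}\widetilde{\mathcal{W}}^2_{jk+l}] = \tau_J\cdot 2^jR_j\int_{\Lambda_{-j}}e^{\sqrt{-1}\mu(l-\theta_j/\tau_J)}D(\mu)\Pi(\mu)\,d\mu + o(\tau_J).
\]

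In regime (b) the phase $\mu(l-\theta_j/\tau_J)\to b\mu$ and the integral converges to the target; in regime (a) the rapid oscillation forces the integral to zero by integration by parts, and the hypothesis $L^{-1/2}\tau_J^{-1}(\vartheta_J-\theta_j)\to\infty$ is calibrated precisely so the phase beats the $O(L^{1/2})$-type total-variation cost of the smooth filter $H_{j,L}$ near its band edges. Re-inserting the volatility, the sum in the estimator becomes
$\frac{\tau_J^{-1}}{n-l-L_j+1}\sum_k\sigma^1_{k\tau_J}\sigma^2_{(k+l)\tau_J}E[\widetilde{\mathcal{W}}^1_{jk}\widetilde{\mathcal{W}}^2_{jk+l}]+o_P(1)$; the Riemann-sum convergence $\tau_J\sum_k\sigma^1_{k\tau_J}\sigma^2_{(k+l)\tau_J}\to \int_0^{(t-\theta_j)_+}\sigma^1_s\sigma^2_{s+\theta_j}\,ds$ (uniform in $t\in[0,T]$ by c\`adl\`agness) together with $(n-l-L_j+1)\tau_J\to T-\theta_j$ gives exactly $2^j\Sigma_t(\theta_j)R_j\int_{\Lambda_{-j}}D\Pi e^{\sqrt{-1}b\lambda}d\lambda$. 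To upgrade from convergence of means to convergence in probability I bound $\variance(\widehat{\boldsymbol{\rho}}_{J-j+1}(\vartheta_J)_t)$ via Isserlis' theorem on the Gaussian fourth moments of $\mathcal{W}^\nu_{jk}$; the cross-covariances decay rapidly over a band of width $\sim L_j$, yielding variance $O(L_j/n)=o(1)$. Since $t\mapsto\widehat{\boldsymbol{\rho}}_{J-j+1}(\vartheta_J)_t$ is a monotone partial-sum-style process, pointwise convergence at a dense set of $t$ plus tightness upgrades to ucp.

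The hardest step is the coupling in Step 4: controlling the stochastic-volatility remainder $r^\nu_{jk}$ uniformly in $k$ as $L_j$ grows (the filter becomes longer so more past $\sigma^\nu$ values are blended in), while simultaneously needing $H_{j,L}$ to be close enough to $2^j\cdot 1_{\Lambda_{-j}}$ to extract the correct band. In regime (a) there is an additional delicacy, namely that the Riemann-Lebesgue cancellation of the oscillating phase must be preserved \emph{against} the non-ideal filter $H_{j,L}$, which is where the quantitative hypothesis $L^{-1/2}\tau_J^{-1}(\vartheta_J-\theta_j)\to\infty$ rather than mere divergence of $\tau_J^{-1}(\vartheta_J-\theta_j)$ becomes essential.
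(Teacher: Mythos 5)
Your frequency-domain computation of the mean (the factors $D$, $\Pi$, $H_{j,L}$ via the generating function of the previous-tick indicators, Lai's approximation $H_{j,L}\to 2^j$ on $\Lambda_{-j}$, and the integration-by-parts bound with the $O(\sqrt{L})$ derivative cost calibrated against $L^{-1/2}\tau_J^{-1}(\vartheta_J-\theta_j)\to\infty$ in case (a)) parallels the paper's Lemmas on $\bar{\rho}_j$ closely. The genuine gap is in how you pass from this mean computation to convergence in probability in the presence of stochastic volatility. You propose to write $\mathcal{W}^\nu_{jk}=\sigma^\nu_{k\tau_J}\widetilde{\mathcal{W}}^\nu_{jk}+r^\nu_{jk}$ and then replace $\widetilde{\mathcal{W}}^1_{jk}\widetilde{\mathcal{W}}^2_{jk+l}$ by its expectation inside the weighted sum, justifying this by an Isserlis variance bound. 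But $\sigma^1,\sigma^2$ are random, $(\mathcal{F}^\nu_t)$-adapted, and may depend on $B$ itself (no independence is assumed), so $\sum_k\sigma^1_{k\tau_J}\sigma^2_{(k+l)\tau_J}\bigl(\widetilde{\mathcal{W}}^1_{jk}\widetilde{\mathcal{W}}^2_{jk+l}-E[\widetilde{\mathcal{W}}^1_{jk}\widetilde{\mathcal{W}}^2_{jk+l}]\bigr)$ is not a Gaussian quadratic form and its variance cannot be bounded by Isserlis; this replacement is exactly the hard step. The paper handles it by freezing the volatility at the \emph{left} end of the filter window, $(k-L_j-N)_+\tau_J$ (your choice $k\tau_J$ sits at the end of the window, destroying the measurability structure needed for the approximation), proving that the de-volatilized partial-sum process $\sum_k(\zeta^1_k\zeta^2_{k+l}-E[\zeta^1_k\zeta^2_{k+l}])$ is P-UT and converges ucp to zero via Davies' trace identity, and then invoking Theorem 6.22 of Jacod--Shiryaev to multiply the frozen volatility back in. Nothing in your sketch substitutes for this argument. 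Relatedly, the Bernoulli interpolation indicators are not Gaussian either, so even with constant volatility Isserlis must be applied conditionally on the missing pattern (or on $X$, as in the paper's interpolation lemma), with the fluctuation of the conditional mean controlled separately.

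A second, smaller flaw: your variance estimate rests on the claim that the cross-covariances ``decay rapidly over a band of width $\sim L_j$,'' giving $O(L_j/n)$. The same-asset covariances of the $\zeta$'s do vanish beyond the filter length, but the between-asset covariances are governed by the band-limited cross-spectral density, whose time-domain kernel is sinc-like and decays only like $c_L/|k'-k+l-\theta_i\tau_J^{-1}|$ with $c_L=O(\sqrt{L})$; the sum of their squares is controlled only through the same integration-by-parts bound you use for the bias in case (a), and the resulting contribution is of order $L\tau_J$ rather than a consequence of rapid decay. The conclusion still holds under $L^2\tau_J\to0$, but the reasoning as written is incorrect. (Also, the process $t\mapsto\widehat{\boldsymbol{\rho}}_{J-j+1}(\vartheta_J)_t$ is not monotone; the correct upgrade to ucp is tightness plus pointwise convergence in probability, which you do mention and which is the paper's route.)
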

The proof of Theorem \ref{theorem:main} is given in Section \ref{proof:theorem1}.
\begin{rmk}
The first part of Theorem \ref{theorem:main} claims that our cross-covariance estimator $\widehat{\rho}_{{J-j+1}}(\theta)$ is close to zero if $\theta$ is sufficiently far from the true lag parameter $\theta_j$. The second part of the theorem claims that our cross-covariance estimator $\widehat{\rho}_{{J-j+1}}(\theta)$ tends to the quantity $R_j$ multiplied by some (random) constant. This constant consists of four sources: $\Sigma_t(\theta_j)$ comes from the presence of volatility. $D(\lambda)$ represents a discretization error caused by $dX^\nu$ being replaced by $\Delta_{k}X^\nu$'s. $\Pi(\lambda)$ is caused by previous-tick interpolation. $e^{\sqrt{-1}b\lambda}$ comes from the bias due to the discrepancy between $\theta$ and $\theta_j$.    
\end{rmk}
{
\begin{rmk}
In Theorem \ref{theorem:main}, $L$ can go to $\infty$ arbitrarily slowly because we are concerned only with the law of large numbers for the processes $(\widehat{\boldsymbol{\rho}}_{{J-j+1}}(\vartheta_J)_t)_{t\geq0}$. However, the divergence rate of $L$ affects the convergence rate of $(\widehat{\boldsymbol{\rho}}_{{J-j+1}}(\vartheta_J)_t)_{t\geq0}$ through the convergence rate of $H_L(\lambda)$ and the divergence rate of $H_L'(\lambda)$ as $L\to\infty$, and a too small value of $L$ would cause a slower convergence rate.
\end{rmk}
}
Theorem \ref{theorem:main} suggests that we could estimate the lag parameter $\theta_j$ by maximizing $|\widehat{\rho}_{{J-j+1}}(\theta)|$ over the finite grid $\mathcal{G}_J$ as in \citet{HRY2013}. Consequently, we choose the random variable $\widehat{\theta}_j\in\mathcal{G}_J$ such that 
\[
\left|\widehat{\rho}_{{J-j+1}}(\widehat{\theta}_j)\right|=\max_{\theta\in\mathcal{G}_J}\left|\widehat{\rho}_{{J-j+1}}(\theta)\right|
\]
as an estimator for $\theta_j$.
\begin{theorem}\label{HRY}
Let $j\in\mathbb{N}$ be fixed. Suppose that {$L=L^{(J)}$ depends on $J$ so that} $L\to\infty$ and $L^2\tau_J^\kappa\to0$ as $J\to\infty$ for some $\kappa\in(0,1)$. 
\if0
Suppose also that there is a constant $\gamma\in(0,1]$ such that for any $A>1$ there is a constant $C_A>0$ such that
\begin{equation}\label{modulus}
\max_{\nu=1,2}E\left[\sup_{s\in[t,t+\eta)}\left|\sigma^\nu_s-\sigma^\nu_t\right|^A\right]
\leq C_A\eta^{A\gamma}
\end{equation}
for any $s\geq0$ and $\eta>0$. 
\fi
Suppose also that both $\sigma^1$ and $\sigma^2$ almost surely have $\gamma$-H\"older continuous paths for some $\gamma>0$. 
Then, if a sequence $v_J$ of positive numbers satisfies $L^{-\frac{1}{2}}\tau_J^{-1}v_J\to\infty$ as $J\to\infty$, then $v_J^{-1}(\widehat{\theta}_j-\theta_j)\to^p0$ 
as $J\to\infty$, provided that $R_j\neq0$ and $\Sigma_T(\theta_j)\neq0$ a.s. In particular, {we can take $v_J\equiv1$ and thus} we have $\widehat{\theta}_j\to^p\theta_j$ as $J\to\infty$.
\end{theorem}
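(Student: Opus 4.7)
The plan is to sandwich the empirical maximum $|\widehat{\rho}_{J-j+1}(\widehat{\theta}_j)|$ between a positive lower bound, obtained by evaluating at a grid point close to $\theta_j$, and a uniform upper bound over the far part of the grid $\{\theta \in \mathcal{G}_J : |\theta - \theta_j| \geq v_J\}$. These two bounds together force $\widehat{\theta}_j$ to lie within $v_J$ of $\theta_j$ with probability tending to one.

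\emph{Lower bound at a nearby grid point.} Let $\theta_j^\star \in \mathcal{G}_J$ be a grid point closest to $\theta_j$, so $b_J := \tau_J^{-1}(\theta_j^\star - \theta_j) \in [-1/2, 1/2]$. Passing to a subsequence with $b_J \to b_\infty$, Theorem \ref{theorem:main}(b) at $t = (n-1)\tau_J \to T$ gives
\[
\widehat{\rho}_{J-j+1}(\theta_j^\star) \to^p 2^j R_j \Sigma_T(\theta_j) \Phi(b_\infty), \qquad \Phi(b) := \int_{\Lambda_{-j}} D(\lambda)\Pi(\lambda) e^{\sqrt{-1}b\lambda} d\lambda.
\]
A direct calculation using $(1-\pi_1)(1-\pi_2) \geq 0$ shows $\mathrm{Re}\,\Pi(\lambda) > 0$; combined with $D > 0$ and the Hermitian symmetry of $D\Pi$, this gives $\Phi$ real-valued and continuous with $\Phi(0) > 0$. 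Since $\Phi$ is the Fourier transform of an integrable function supported on the bounded set $\Lambda_{-j}$, it is real-analytic, so its zeros form a discrete set; by choosing $\theta_j^\star$ from a bounded cluster of adjacent grid points if necessary we can ensure $|\Phi(b_\infty)| \geq c_\Phi > 0$. Combined with $R_j \neq 0$, $\Sigma_T(\theta_j) \neq 0$ a.s., and $|\widehat{\rho}_{J-j+1}(\widehat{\theta}_j)| \geq |\widehat{\rho}_{J-j+1}(\theta_j^\star)|$, we obtain a deterministic $c_\star > 0$ with $\liminf_J P(|\widehat{\rho}_{J-j+1}(\widehat{\theta}_j)| \geq c_\star) \geq 1 - \eta$ for any prescribed $\eta > 0$, after a routine localization on $\{|\Sigma_T(\theta_j)| \geq \eta\}$.

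\emph{Uniform upper bound on the far grid.} For $v_J$ with $L^{-1/2}\tau_J^{-1}v_J \to \infty$, we need
\[
\sup\bigl\{|\widehat{\rho}_{J-j+1}(\theta)| : \theta \in \mathcal{G}_J,\ |\theta - \theta_j| \geq v_J\bigr\} \to^p 0.
\]
After localizing so that $\|\sigma^\nu\|_\infty \leq K$, $\widehat{\rho}_{J-j+1}(\theta)$ becomes a quadratic polynomial in the Gaussian increments of $B$ with coefficients involving the Bernoulli masks. Refining the mean and variance computations in the proof of Theorem \ref{theorem:main}(a) yields, uniformly over $\theta \in \mathcal{G}_J$, a mean bound $|E[\widehat{\rho}_{J-j+1}(\theta) \mid \sigma^1,\sigma^2]| = o(1)$ decaying in $L^{-1/2}\tau_J^{-1}|\theta - \theta_j|$, and a variance bound $\variance(\widehat{\rho}_{J-j+1}(\theta) \mid \sigma^1,\sigma^2) = O(1/n)$. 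A Hanson--Wright-type tail inequality for Gaussian chaos combined with a union bound over the $|\mathcal{G}_J| = O(\tau_J^{-1})$ grid points then yields the claim, the $\log \tau_J^{-1}$ cost of the union bound being absorbed by the $1/n$ variance decay.

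\emph{Conclusion and main obstacle.} Combining the two steps forces $\widehat{\theta}_j$ into $\{|\theta - \theta_j| < v_J\}$ with probability tending to one, i.e.~$v_J^{-1}(\widehat{\theta}_j - \theta_j) \to^p 0$; the choice $v_J \equiv 1$ is admissible because $L^{-1/2}\tau_J^{-1} \to \infty$ under the hypothesis $L^2\tau_J^\kappa \to 0$ with $\kappa < 1$. The principal obstacle is the uniform upper bound: pointwise convergence from Theorem \ref{theorem:main}(a) is insufficient when $|\mathcal{G}_J|$ diverges, and the strengthened hypothesis $\kappa < 1$ (versus $\kappa = 1$ in Theorem \ref{theorem:main}) provides exactly the slack needed to absorb both the logarithmic union-bound penalty and the discretization error of $\sigma^\nu$ handled via the H\"older continuity assumption.
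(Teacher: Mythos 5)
Your overall skeleton (a nonzero limit at a grid point adjacent to $\theta_j$ via Theorem \ref{theorem:main}(b), a uniform bound over the far part of $\mathcal{G}_J$, and a sandwich) is the same as the paper's, but the two places where the real work lies are exactly where your argument has gaps. First, the non-degeneracy of the limit constant. You only establish $\Phi(0)>0$ and then try to handle $b_\infty\neq 0$ by real-analyticity of $\Phi$ plus re-choosing the grid point from a ``bounded cluster''. This does not work: every admissible grid point gives a value of $b$ differing from $b_\infty$ by an integer, and the zeros of an entire function of exponential type being isolated does not prevent them from containing an integer-shifted progression (think of $\sin(\pi(b-b_\infty))$), so you cannot conclude that some nearby grid point has $|\Phi|$ bounded away from zero. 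What is actually needed, and what the paper proves as Lemma \ref{nondegenerate}, is that $\int_{\Lambda_{-j}}D(\lambda)\Pi(\lambda)e^{\sqrt{-1}b\lambda}d\lambda\neq 0$ for \emph{all} $b\in[-\frac{1}{2},\frac{1}{2}]$, which requires showing $\Re\bigl[\Pi(\lambda)e^{\sqrt{-1}b\lambda}\bigr]>0$ on $(0,\pi)$, not just $\Re\,\Pi(\lambda)>0$.

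Second, the uniform upper bound. Your plan is to condition on $(\sigma^1,\sigma^2)$, view $\widehat{\rho}_{J-j+1}(\theta)$ as a Gaussian quadratic form, and apply a Hanson--Wright tail bound with a union bound over $\mathcal{G}_J$. But in the model \eqref{model} the volatility $\sigma^\nu$ is only assumed $(\mathcal{F}^\nu_t)$-adapted, so it may depend on $B^\nu$ itself; conditioning on the whole volatility path destroys the Gaussianity of the increments, and the Bernoulli masks $\delta_{\nu,k}$ make the form non-Gaussian in any case, so the chaos tail inequality does not apply as stated. The paper circumvents both obstacles: the interpolation error $\widehat{\rho}_{J-j+1}(l\tau_J)-E[\widehat{\rho}_{J-j+1}(l\tau_J)\,|\,X]$ is controlled uniformly in $l$ through $r$-th moment bounds for $m$-dependent sequences (Lemmas \ref{lemma:DL}--\ref{lemma:lm-term2}, with $r$ chosen so that $2/r<1-\kappa$), while the conditional part is treated by a blocking argument that freezes $\sigma^\nu$ at the start of each block of length $\tau_m$ (this is where the $\gamma$-H\"older assumption enters quantitatively), reducing matters to genuinely Gaussian quadratic forms in the de-volatilized variables $\zeta^\nu_k$, whose $\psi_1$-Orlicz norms are bounded in Lemma \ref{gqf} and then combined with a maximal inequality. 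Your closing remark that H\"older continuity handles ``the discretization error of $\sigma^\nu$'' gestures at this, but the central device you propose (conditional Gaussian chaos given $\sigma$) is not justified in the generality of the theorem; a correct version of your union-bound strategy essentially has to reproduce the paper's de-volatilization and $m$-dependence machinery, so as written the proposal does not constitute a proof.
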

We prove Theorem \ref{HRY} in Section \ref{proof:HRY}.

\begin{rmk}
The H\"older continuity assumption on the paths of the volatilities $\sigma^1$ and $\sigma^2$ is satisfied by a number of stochastic volatility models. In fact, it is satisfied if $\sigma^\nu$ is a continuous It\^o semimartingale (with respect to the filtration $(\mathcal{F}^\nu_t)$) for every $\nu=1,2$. It is worth mentioning that the assumption is also satisfied by the so-called rough fractional stochastic volatility models introduced in \citet{GJR2014} which have pointed out the practicality of such models.    
\end{rmk}

\begin{rmk}
Theorem \ref{HRY} is a counterpart of Theorem 1 from \citet{HRY2013} in our framework. Our results are applicable for separating multiple lead-lag effects on a scale-by-scale basis. On the other hand, the convergence rate is slower than the estimator of  \citet{HRY2013} by the factor $\sqrt{L}$, which might be regarded as a cost to separate multiple lead-lag effects.    
\end{rmk}

{
\begin{rmk}
In fact, to prove the stated results, we use only the following two properties of Daubechies' filter: 
\begin{enumerate}

\item The squared gain function of the filter ``well approximates'' $2\cdot1_{[-\pi,\frac{\pi}{2})\cup(\frac{\pi}{2},\pi]}$. 

\item The derivative of the squared gain function ``moderately'' diverges (note that it must diverge because of the first property). 

\end{enumerate}
Therefore, we conjecture that it would be possible to work with other filters once we precisely formulate the meaning of ``well approximates'' and ``moderately'' in the above properties. We leave this task to future work. 
\end{rmk}
}

\section{Simulation study}\label{simulation}

In this section we implement a Monte Carlo experiment to evaluate finite sample performance of our scale-by-scale lead-lag parameter estimators $\widehat{\theta}_j$'s defined in the previous section. We set $J=13$ and $n=15,000$. As the search grid $\mathcal{G}_J$, we use $\mathcal{G}_J=\{l\tau_J: l\in\mathbb{Z}, |l|\leq60\}.$


We simulate model \eqref{model} with constant volatilities $\sigma^\nu\equiv1$ for $\nu=1,2$. The parameters for the spectral density \eqref{asymptotic} are chosen as in Table \ref{parameters}. To simulate the process $B$ at the time points $k\tau_J$, $k=0,1,\dots,n$, it is enough to generate bivariate variables $\Delta_kB$, $k=0,1,\dots,n$. Since they are stationary and Gaussian, we can use the multivariate version of the circulant embedding method from \cite{CW1999} once we compute the cross-covariance function $E[\Delta_kB^1\Delta_{k+l}B^2]$, $l\in\mathbb{Z}$. It can be computed by using \eqref{lp-fourier} and the Fourier inversion formula as
\[
E\left[\Delta_k B^1\Delta_{k+l} B^2\right]=\sum_{j=0}^JR_{{J-j+1}}\int_0^{\tau_J}\int_0^{\tau_J}\psi^{LP}(2^j(u-v+l\tau_J-\theta_{{J-j+1}}))dudv,\quad l\in\mathbb{R},
\]
hence we approximate it by
\[
\tau_J^2\sum_{j=0}^JR_{{J-j+1}}\psi^{LP}(2^j(l\tau_J-\theta_{{J-j+1}})).
\]

\begin{table}[ht]
\caption{Parameters for the spectral density \eqref{asymptotic}}
\label{parameters}
\begin{center}
\begin{tabular}{l|*{10}{c}}\hline
$j$ & 1 & 2 & 3 & 4 & 5 & 6 & 7 & 8 & 9--14 \\ \hdashline
$R_j$ & 0.3 & 0.5 & 0.7 & 0.5 & 0.5 & 0.5 & 0.5 & 0.5 & 0 \\
$\theta_j/\tau_J$  & $-1$ & $-1$ & $-2$ & $-2$ & $-3$ & $-5$ & $-7$ & $-10$ & 0 \\ \hline
\end{tabular}
\end{center}
\end{table}%

Regarding the parameters $\pi^1$ and $\pi^2$ which controls the probabilities of missing observations, we consider two situations where $\pi_1=\pi_2=0$ and $\pi_1=\pi_2=0.5$. In the first situation no missing observation occurs, so the observation times are equidistant and synchronous. In the meantime, in the second situation the observation times are non-equidistant and asynchronous.

As the wavelet filter $(h_p)_{p=0}^{L-1}$ to construct our estimator, we examine the following three choices of Daubechies' wavelets:
\begin{enumerate}[font=\bfseries,align=left,leftmargin=*,labelindent=\parindent,widest=LA(20)]

\item[Haar] Haar wavelets ($L=2$),

\item[LA(8)] Least asymmetric wavelet with length $L=8$,

\item[LA(20)] Least asymmetric wavelet with length $L=20$.  

\end{enumerate}
See Chapter 8 of \cite{Daubechies1992} for details on the least asymmetric wavelets. 
{For comparison, we also compute the estimator of \citet{HRY2013} (denoted by HRY).} 

We run 1,000 Monte Carlo iterations for each experiment. We report the sample median and median absolute deviation (MAD) of the estimates for each experiment in Tables \ref{table:results}. As the table reveals, at the finest resolution levels $j=1,2$ all the estimates are very precise, while at coarser resolution levels $j\geq3$ the LA(8) and LA(20) based estimators tend to perform better than the Haar based estimators. This is not surprising because the consistency of our estimator is ensured in the asymptotics as $L$ tends to infinity. The LA(20) based estimator shows an excellent performance at moderate resolution levels $j\leq 6$ even in the presence of missing observations. At the coarsest resolution levels $j=7,8$, the precisions of all the estimators fall. This would be due to the following reason: According to the proofs of Theorems \ref{theorem:main}--\ref{HRY}, the convergence rate of our estimator $\widehat{\theta}_j$ is proportional to the square root of $L_j=(2^j-1)(L-1)+1$, hence {it becomes slower} as $j$ increases. 

\begin{table}[ht]
\caption{Simulation results}
\label{table:results}
\begin{center}
\begin{tabular}{l|*{8}{C}}
  \hline
$j$ & 1 & 2 & 3 & 4 & 5 & 6 & 7 & 8 \\ 
  \hline
True & -1 & -1 & -2 & -2 & -3 & -5 & -7 & -10 \\ 
   &  \multicolumn{8}{c}{$\pi_1=\pi_2=1$}  \\ 
   {HRY} & \multicolumn{8}{C}{{-1~(0)}} \\
  Haar & -1 & -1 & -1 & -1 & -1 & -1 & -2 & -2 \\ 
   & (0) & (0) & (0) & (0) & (0) & (0) & (0) & (1) \\ 
  LA(8) & -1 & -1 & -2 & -2 & -3 & -4 & -6 & -8 \\ 
   & (0) & (0) & (0) & (0) & (0) & (1) & (3) & (10) \\ 
  LA(20) & -1 & -1 & -2 & -2 & -3 & -5 & -6 & -9 \\ 
   & (0) & (0) & (0) & (0) & (0) & (1) & (4) & (15) \\ 
   &  \multicolumn{8}{c}{$\pi_1=\pi_2=0.5$}  \\ 
   {HRY} & \multicolumn{8}{C}{{-1~(0)}} \\
  Haar & -1 & -1 & -2 & -2 & -2 & -2 & -2 & -2 \\ 
   & (0) & (0) & (0) & (0) & (0) & (0) & (0) & (0) \\ 
  LA(8) & -1 & -1 & -2 & -2 & -3 & -4 & -6 & -8 \\ 
   & (0) & (0) & (0) & (0) & (0) & (1) & (3) & (10) \\ 
  LA(20) & -1 & -1 & -2 & -2 & -3 & -5 & -6 & -9 \\ 
   & (0) & (0) & (0) & (0) & (0) & (1) & (4) & (15) \\ 
   \hline
\end{tabular}\\ \vspace{2mm}

\parbox{9cm}{\small This table reports the median and the median absolute deviation (in parentheses) of the estimates.} 
\end{center}
\end{table}

\if0
\begin{table}[ht]
\caption{Simulation results}
\label{table:results}
\begin{center}
\begin{tabular}{l|*{8}{C}}
  \hline
$j$ & 1 & 2 & 3 & 4 & 5 & 6 & 7 & 8 \\ 
  \hline
True & -1 & -1 & -2 & -2 & -3 & -5 & -7 & -10 \\ 
   & \multicolumn{8}{c}{$\pi_1=\pi_2=1$}   \\ 
  Haar & -1 & -1 & -1 & -1 & -1 & -1 & -2 & -2 \\ 
   & (0) & (0) & (0) & (0) & (0) & (0) & (0) & (1) \\ 
  LA(8) & -1 & -1 & -2 & -2 & -3 & -4 & -6 & -8 \\ 
   & (0) & (0) & (0) & (0) & (0) & (1) & (3) & (9) \\ 
  LA(20) & -1 & -1 & -2 & -2 & -3 & -4 & -6 & -10 \\ 
   & (0) & (0) & (0) & (0) & (0) & (1) & (4) & (15) \\ 
   & \multicolumn{8}{c}{$\pi_1=\pi_2=1$}   \\ 
  Haar & -1 & -1 & -2 & -2 & -2 & -2 & -2 & -2 \\ 
   & (0) & (0) & (0) & (0) & (0) & (0) & (0) & (0) \\ 
  LA(8) & -1 & -1 & -2 & -2 & -3 & -4 & -6 & -8 \\ 
   & (0) & (0) & (0) & (0) & (0) & (1) & (3) & (10) \\ 
  LA(20) & -1 & -1 & -2 & -2 & -3 & -5 & -6 & -9 \\ 
   & (0) & (0) & (0) & (0) & (0) & (1) & (4) & (16) \\ 
   \hline
\end{tabular}\\ \vspace{2mm}

\parbox{9cm}{\small This table reports the median and the median absolute deviation (in parentheses) of the estimates.} 
\end{center}
\end{table}
\fi

\if0
\begin{table}[ht]
\caption{Simulation results}
\label{table:results}
\begin{center}
\begin{tabular}{l|*{8}{C}}
  \hline
$j$ & 1 & 2 & 3 & 4 & 5 & 6 & 7 & 8 \\ 
  \hline
True & -1 & -1 & -2 & -3 & -4 & -5 & -7 & -10 \\ 
   & \multicolumn{8}{c}{$\pi_1=\pi_2=1$}  \\ 
  Haar & -1 & -1 & -1 & -1 & -2 & -2 & -2 & -2 \\ 
   & (0) & (0) & (0) & (0) & (0) & (0) & (0) & (0) \\ 
  LA(8) & -1 & -1 & -2 & -3 & -3 & -4 & -6 & -8 \\ 
   & (0) & (0) & (0) & (0) & (0) & (1) & (3) & (9) \\ 
  LA(20) & -1 & -1 & -2 & -3 & -4 & -5 & -6 & -10 \\ 
   & (0) & (0) & (0) & (0) & (0) & (1) & (4) & (15) \\ 
   &  \multicolumn{8}{c}{$\pi_1=\pi_2=0.5$}  \\ 
  Haar & -1 & -1 & -2 & -2 & -2 & -2 & -3 & -3 \\ 
   & (0) & (0) & (0) & (0) & (0) & (0) & (1) & (1) \\ 
  LA(8) & -1 & -1 & -2 & -3 & -3 & -4 & -6 & -8 \\ 
   & (0) & (0) & (0) & (0) & (0) & (1) & (3) & (10) \\ 
  LA(20) & -1 & -1 & -2 & -3 & -4 & -5 & -6 & -9 \\ 
   & (0) & (0) & (0) & (0) & (0) & (1) & (4) & (16) \\ 
   \hline
\end{tabular}\\ \vspace{2mm}

\parbox{9cm}{\small This table reports the median and the median absolute deviation (in parentheses) of the estimates.} 
\end{center}
\end{table}

\begin{table}[ht]
\caption{Regular synchronous sampling case $\pi_1=\pi_2=0$}
\label{table:regular}
\begin{center}
\begin{tabular}{l|*{8}{c}}\hline
$j$ & $1$ & $2$ & $3$ & $4$ & $5$ & $6$ & $7$ & $8$ \\ \hline
True & $-1$ & $-1$ & $-2$ & $-3$ & $-4$ & $-5$ & $-7$ & $-10$  \\[5pt] 
Haar & $-1$ & $-1$ & $-1$ & $-1$ & $-2$ & $-2$ & $-2$ & $-2$ \\
& (0) & (0) & (0) & (1) & (1) & (1) & (0) & (1) \\[5pt] 
LA(8) & $-1$ & $-1$ & $-2$ & $-3$ & $-3$ & $-4$ & $-6$ & $-7.5$ \\
 & (0) & (0) & (0) & (1) & (1) & (1) & (4) & (15) \\[5pt]
LA(20) & $-1$ & $-1$ & $-2$ & $-3$ & $-4$ & $-5$ & $-6$ & $-9$ \\
& (0) & (0) & (0) & (0) & (1) & (2) & (6) & (21) \\ \hline
\end{tabular}\\ \vspace{2mm}

\parbox{9cm}{\small This table reports the median and the IQR (in parentheses) of the estimates.} 
\end{center}
\end{table}%

\begin{table}[ht]
\caption{Irregular asynchronous sampling case $\pi_1=\pi_2=0.5$}
\label{table:irregular}
\begin{center}
\begin{tabular}{l|*{8}{c}}\hline
$j$ & $1$ & $2$ & $3$ & $4$ & $5$ & $6$ & $7$ & $8$ \\ \hline
True & $-1$ & $-1$ & $-2$ & $-3$ & $-4$ & $-5$ & $-7$ & $-10$  \\[5pt] 
Haar & $-1$ & $-1$ & $-2$ & $-2$ & $-2$ & $-2$ & $-3$ & $-3$ \\
& (0) & (0) & (0) & (0) & (0) & (1) & (1) & (1) \\[5pt] 
LA(8) & $-1$ & $-1$ & $-2$ & $-3$ & $-3$ & $-4$ & $-6$ & $-8$ \\
 & (0) & (0) & (0) & (1) & (1) & (1) & (4) & (14) \\[5pt]
LA(20) & $-1$ & $-1$ & $-2$ & $-3$ & $-4$ & $-5$ & $-6$ & $-8$ \\
& (0) & (0) & (0) & (0) & (1) & (2) & (6) & (20) \\ \hline
\end{tabular}\\ \vspace{2mm}

\parbox{9cm}{\small This table reports the median and the IQR (in parentheses) of the estimates.} 
\end{center}
\end{table}%
\fi

\section{Empirical illustration}\label{empirical}

In this section we apply our new method to evaluating lead-lag effects on a scale-by-scale basis in real financial data. Specifically, we analyze the lead-lag relationships between the log-prices of the S\&P 500 index and the E-mini S\&P500 futures in April 2016, containing 21 trading days. We have obtained our data set {from the Bloomberg database}. The price data of the S\&P 500 index are recorded with the second-by-second basis from 9:30 am EDT to 16:00 EDT each trading day. We use the transaction data of the E-mini S\&P500 futures recorded between 9:30 am EDT and 16:00 EDT each trading day with the accuracy in the timestamp values being one second.

Before presenting the result, we remark that there are {many researchers} examining the lead-lag effect between the S\&P 500 index and index futures: See \cite{KKK1987,SW1990,deJN1997} for example. These studies have reported that the futures lead the index. 

We regard the log-price process of the S\&P 500 index as $X^1$ and the log-price process of the E-mini S\&P 500 futures as $X^2$. We set 
\[
\mathcal{G}_J=\{-300s,-299s,\dots,-1s,0s,1s,\dots,299s,300s\}
\]
as the search grid. We use LA(20) (Daubechies' least asymmetric wavelet filter with length 20) as the wavelet filter $(h_p)_{p=0}^{L-1}$.

Table \ref{empirical1} reports the estimated values of $\widehat{\theta}_j$ for $j=1,\dots,8$ on April 1, 2016. 
{For comparison, we report the results of \citet{HRY2013}'s estimator as well (referred to as HRY).} 
We also depict the function $\widehat{\rho}_{{J-j+1}}(\theta)$ evaluated for $\theta\in\{-30s,-29s,\dots,-1s,0s,1s,\dots,29s,30s\}$ in Figure \ref{figure:wccf}. 
The table shows that all the estimated lags are negative, which indicates that $X^2$ (futures) lead $X^1$ (index). This is consistent with the preceding studies. 
{We also observe that the absolute value of $\hat{\theta}_j$ increases as $j$ increases. This phenomenon might be caused by the difference between reaction speeds of market participants: In our model, the larger $j$, the coarser time scale becomes. We reasonably expect that market participants at coarser time scales would react at lower frequencies, so the absolute value of $\theta_j$ would be larger for larger $j$.} 

In Figure \ref{theta-all} we depict the time series of $\widehat{\theta}_j$'s evaluated every trading day\footnote{{We also computed \citet{HRY2013}'s estimator every trading day, which always estimated the lag $-1$ across all the days.}}. We find that the estimated values of $\widehat{\theta}_j$'s are quite stable in this period, especially at finer resolutions $j\leq 6$. This suggests that there might be a stable multi-scale structure in lead-lag effects between the S\&P 500 index and the E-mini S\&P500 futures.  

\begin{table}[ht]
\caption{Estimated values of $\widehat{\theta}_j$ for April 1, 2016 (in seconds)}
\label{empirical1}
\begin{center}
\begin{tabular}{c|cccccccc||cc}\hline
$j$ & 1 & 2 & 3 & 4 & 5 & 6 & 7 & 8 & {HRY} \\ \hdashline
$\widehat{\theta}_j$ & $-1$ & $-1$ & $-2$ & $-2$ & $-3$ & $-5$ & $-7$ & $-10$ & {$-1$} \\ \hline
\end{tabular}
\end{center}
\end{table}%

\begin{figure}[ht]
\centering
\caption{The function $\widehat{\rho}_{{J-j+1}}(\theta)$ for April 1, 2016}
\label{figure:wccf}
\includegraphics[scale=0.8]{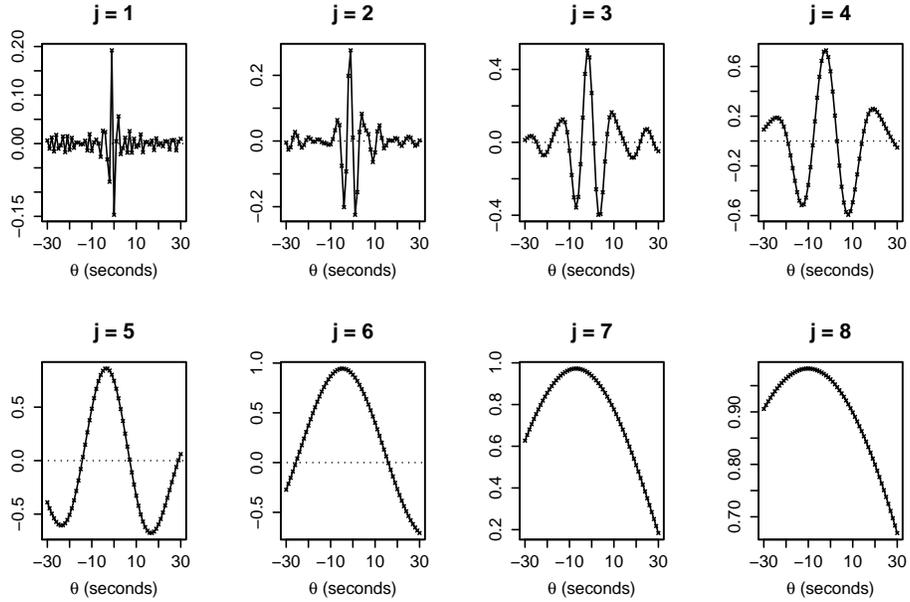}

\parbox{12.3cm}{\small
The values are divided by $\frac{\tau_J^{-1}}{n-L_j+1}\sqrt{(\sum_{k=L_j-1}^{n-1}(\mathcal{W}_{jk}^1)^2)(\sum_{k=L_j-1}^{n-1}(\mathcal{W}_{jk}^2)^2)}$ for each $j$.
}
\end{figure}


\begin{figure}[ht]
\centering
\caption{The time series of the estimates $\hat{\theta}_{j}$ in April, 2016 ($1\leq j\leq8$)}
\label{theta-all}
\includegraphics[scale=0.8]{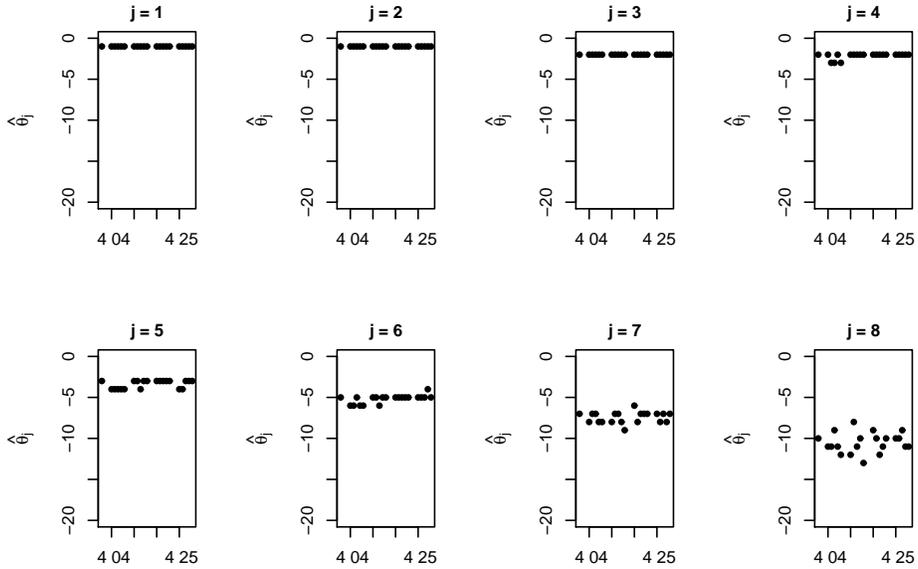}

\parbox{10cm}{\small
The horizontal axis is labeled by days. The vertical axis is in seconds.
}
\end{figure}

\clearpage

\section{Summary}

In this paper, we have proposed a new framework to model potential multi-scale structures of lead-lag relationships between financial assets. Our framework has accommodated traditional wavelet methods for analyzing lead-lag relationships to continuous-time modeling by establishing an explicit connection between wavelet and the L\'evy-Ciesielski construction of Brownian motion from a statistical viewpoint. We have also developed a statistical methodology to estimate lead-lag times on a scale-by-scale basis in the proposed framework. An associated asymptotic theory has been shown in order to ensure the validity of our methodology. To complement the theory, we have conducted a simulation study which demonstrates the finite sample performance of our asymptotic theory. We have reported an empirical application as well to illustrate how the proposed framework performs in practice.  

A drawback of the presented estimation method is that it requires us to interpolate data onto the grid with the finest observable resolution. In some cases this is {computationally} challenging. For example, if the finest observable resolution is one micro-second, then the method requires us to store one million observations per one second. A solution to this issue is currently under investigation and will be presented in future work. 

\section{Proofs}\label{proofs}

\subsection{Proof of Proposition \ref{prop:dyadic}}\label{proof:dyadic}

{We first note that $1_{[0,t]}*\underline{g}\in L^2(\mathbb{R})$ for every $g=\tilde{\phi},\tilde{\psi}_0,\tilde{\psi}_1,\dots$ by Young's convolution theorem.} 
Next, by Theorem 5.11 from \cite{Mallat2009} we have
$
1_{[0,t]}=(1_{[0,t]}*\underline{\tilde{\phi}})*\tilde{\phi}+\sum_{j=0}^\infty 2^j(1_{[0,t]}*\underline{\tilde{\psi}_j})*\tilde{\psi}_j
$ in $L^2(\mathbb{R}).$
Theorem 5.11 from \cite{Mallat2009} also implies that $(1_{[0,t]}*\underline{\tilde{\phi}})*\tilde{\phi},(1_{[0,t]}*\underline{\tilde{\psi}_j})*\tilde{\psi}_j\in L^2(\mathbb{R})$. Therefore, we obtain
\begin{equation*}
B_t=\int_{-\infty}^\infty (1_{[0,t]}*\underline{\tilde{\phi}})*\tilde{\phi}(s)dB_s+\sum_{j=0}^\infty 2^j\int_{-\infty}^\infty (1_{[0,t]}*\underline{\tilde{\psi}_j})*\tilde{\psi}_j(s)dB_s
\end{equation*}
in $L^2(P)$ {by the $L^2$-continuity of It\^o integrals}. 
{
Hence the proof is completed once we show that
\[
\int_{-\infty}^\infty (1_{[0,t]}*\underline{g})*g(s)dB_s
=\lim_{A\to\infty}\int_0^t\left\{\int_{-A}^A g(v-u)\left(\int_{-\infty}^\infty g(s-u)dB_s\right)du\right\}dv\quad
\text{in }L^2(P)
\]
for $g=\tilde{\phi},\tilde{\psi}_0,\tilde{\psi}_1,\dots$. Since we have
\[
\int_{-\infty}^\infty (1_{[0,t]}*\underline{g})*g(s)dB_s
=\lim_{A\to\infty}\int_{-\infty}^\infty\left\{\int_{-A}^A(1_{[0,t]}*\underline{g})(u)g(s-u)du\right\}dB_s\quad
\text{in }L^2(P),
\]
it suffices to show that
\begin{equation}\label{fubini}
\int_{-\infty}^\infty\left\{\int_{-A}^A(1_{[0,t]}*\underline{g})(u)g(s-u)du\right\}dB_s
=\int_0^t\left\{\int_{-A}^A g(v-u)\left(\int_{-\infty}^\infty g(s-u)dB_s\right)du\right\}dv
\quad
\text{a.s.}
\end{equation}
for every $A>0$. Since we have
\begin{align*}
\int_{-\infty}^\infty\left\{\int_{-A}^A(1_{[0,t]}*\underline{g})(u)g(s-u)du\right\}^2ds
=\|g\|_{L^2(\mathbb{R})}^2\left\{\int_{-A}^A(1_{[0,t]}*\underline{g})(u)du\right\}^2<\infty,
\end{align*}
we obtain
\[
\int_{-\infty}^\infty\left\{\int_{-A}^A(1_{[0,t]}*\underline{g})(u)g(s-u)du\right\}dB_s
=\int_{-A}^A\left\{\int_{-\infty}^\infty(1_{[0,t]}*\underline{g})(u)g(s-u)dB_s\right\}du
\quad
\text{a.s.}
\]
by the stochastic Fubini theorem (e.g.~Theorem 65 from chapter IV of \cite{Protter2004}). Since we have
\[
\int_{-A}^A\left\{\int_{-\infty}^\infty(1_{[0,t]}*\underline{g})(u)g(s-u)dB_s\right\}du
=\int_0^t\left\{\int_{-A}^A g(v-u)\left(\int_{-\infty}^\infty g(s-u)dB_s\right)du\right\}dv
\quad
\text{a.s.}
\]
by the Fubini theorem, we obtain \eqref{fubini} and thus complete the proof.\hfill\qed
}

\subsection{Proof of Proposition \ref{characterization}}\label{proof:characterization}

We use some concepts on Schwartz's generalized functions and refer to Chapters 6--7 of \cite{Rudin1991} for details about them.

{Let us denote by $\mathfrak{S}$ the space of all (complex-valued, smooth) rapidly decreasing functions on $\mathbb{R}$}. We denote by $\mathfrak{S}^*$ the dual space of $\mathfrak{S}$, and for $F\in\mathfrak{S}^*$ $\mathcal{F}F$ denotes the Fourier transform of $F$.  
Define the function $\mathbf{f}:\mathfrak{S}\to\mathbb{C}$ by $\mathbf{f}(u)=\int_{-\infty}^\infty(\mathcal{F}^{-1}u)(\lambda)f(\lambda)d\lambda$ for $u\in\mathfrak{S}$, which can be defined thanks to \eqref{L_infty}. We have $\mathbf{f}\in\mathfrak{S}^*$ because $f\in L^\infty(\mathbb{R})$. 
Moreover, if $u\in\mathfrak{S}$ is real-valued, then $\mathbf{f}(u)\in\mathbb{R}$. In fact, we have
\begin{align*}
\overline{\mathbf{f}(u)}=\int_{-\infty}^\infty\overline{(\mathcal{F}^{-1}u)(\lambda)}\overline{f(\lambda)}d\lambda
=\int_{-\infty}^\infty(\mathcal{F}^{-1}u)(-\lambda)f(-\lambda)d\lambda
=\mathbf{f}(u)
\end{align*}
by \eqref{hermite}. Now, for any $u\in\mathfrak{S}$ we have $\mathcal{F}(\mathbf{f}*u)=(\mathcal{F}u)(\mathcal{F}\mathbf{f})=(\mathcal{F}u)f$ in $\mathfrak{S}^*$ by Theorem 7.19(c) from \cite{Rudin1991}, hence $\mathcal{F}(\mathbf{f}*u)\in L^2(\mathbb{R})$. Therefore, $\mathbf{f}*u\in L^2(\mathbb{R})$ and $\|\mathbf{f}*u\|_{L^2(\mathbb{R})}=(2\pi)^{-1}\|(\mathcal{F}u)f\|_{L^2(\mathbb{R})}\leq\|u\|_{L^2(\mathbb{R})}$ by the Parseval identity and \eqref{L_infty}. Hence, there is a (unique) continuous function $\mathbf{F}:L^2(\mathbb{R})\to L^2(\mathbb{R})$ such that $\mathbf{F}(u)=\mathbf{f}*u$ for any $u\in\mathfrak{S}$. By continuity $\mathbf{F}(u)$ is real-valued as long as $u\in L^2(\mathbb{R})$ is real-valued. Similarly, we define the function $\mathbf{g}:\mathfrak{S}\to\mathbb{C}$ by $\mathbf{g}(u)=\int_{-\infty}^\infty(\mathcal{F}^{-1}u)(\lambda)\sqrt{1-|f(\lambda)|^2}d\lambda$ for $u\in\mathfrak{S}$. Then, by an analogous argument to the above, $\mathbf{g}\in\mathfrak{S}^*$ and there is a (unique) continuous function $\mathbf{G}:L^2(\mathbb{R})\to L^2(\mathbb{R})$ such that $\mathbf{G}(u)=\mathbf{g}*u$ for any $u\in\mathfrak{S}$ and that $\mathbf{G}(u)$ is real-valued as long as so is $u\in L^2(\mathbb{R})$.

Now let $(W^1_t)_{t\in\mathbb{R}}$ and $(W^2_t)_{t\in\mathbb{R}}$ be two independent two-sided standard Brownian motions. Then we define the processes $(B^1_t)_{t\in\mathbb{R}}$ and $(B^2_t)_{t\in\mathbb{R}}$ by $B^1_t=W^1_t$ and
\[
B^2_t=
\left\{
\begin{array}{ll}
\int_{-\infty}^\infty\mathbf{F}(1_{(0,t]})(s)dW^1_s+\int_{-\infty}^\infty\mathbf{G}(1_{(0,t]})(s)dW^2_s  & \text{if }t\geq0,  \\
-\int_{-\infty}^\infty\mathbf{F}(1_{(t,0]})(s)dW^1_s-\int_{-\infty}^\infty\mathbf{G}(1_{(t,0]})(s)dW^2_s  & \text{otherwise}. 
\end{array}
\right.
\]
{$B^2$ is obviously real-valued. Moreover, noting that $\mathcal{F}\mathbf{F}(u)=(\mathcal{F}u)f$ and $\mathcal{F}\mathbf{G}(u)=(\mathcal{F}u)\sqrt{1-|f|^2}$ in $L^2(\mathbb{R})$ for any $u\in L^2(\mathbb{R})$, we can easily check that $B^2$ is a two-sided standard Brownian motion due to the Parseval identity and the Kolmogorov continuity criterion, where we take a continuous version of $B^2$ if necessary.} 
Hence $B^1$ and $B^2$ satisfy condition (i). Condition (ii) also follows from the Parseval identity. This especially implies that the bivariate process $B_t=(B_t^1,B_t^2)$ is of stationary increments. $B_t$ is evidently Gaussian by construction, hence we complete the proof of the first part of the proposition. 

Conversely, suppose that a bivariate process $B_t=(B^1_t,B^2_t)$ ($t\in\mathbb{R}$) with stationary increments satisfies condition (i). Then, by the spectral representation theorem for the structural function of a process with stationary increments (see e.g.~Theorem 4 from Chapter I, Section 5 of \cite{GS1969}) there is a function $F:\mathbb{R}\to\mathbb{C}$ of bounded variation such that
\if0
\begin{equation}\label{spr2}
E[B_t^1B_s^2]=\frac{1}{2\pi}\int_{-\infty}^\infty\frac{(e^{-\sqrt{-1}\lambda t}-1)(e^{\sqrt{-1}\lambda s}-1)}{\lambda^2}dF(\lambda)
\end{equation}
for any $t,s\in\mathbb{R}$ and
\fi
\begin{equation}\label{spr2}
E\left[\left(\int_{-\infty}^\infty u(s)dB_s^1\right)\overline{\left(\int_{-\infty}^\infty v(s)dB^2_s\right)}\right]=\frac{1}{2\pi}\int_{-\infty}^\infty(\mathcal{F}u)(\lambda)\overline{(\mathcal{F}v)(\lambda)}dF(\lambda)
\end{equation}
for any $u,v\in L^2(\mathbb{R})$ and 
{
the matrix
\[
\begin{pmatrix}
\lambda_2-\lambda_1 & F(\lambda_2)-F(\lambda_1) \\
\overline{F(\lambda_2)}-\overline{F(\lambda_1)} & \lambda_2-\lambda_1
\end{pmatrix}
\]
is positive semidefinite whenever $\lambda_1\leq\lambda_2$. Here, the latter property follows from the property (a) of the theorem from \cite{GS1969} and condition (i) (so the spectral distribution functions of the marginals of $B$ are the identity functions). 
In particular, the determinant of the above matrix is non-negative, hence we obtain 
\[
(\lambda_2-\lambda_1)^2-|F(\lambda_2)-F(\lambda_1)|^2\geq0,
\]
Thus we have
}
\begin{equation}\label{ac}
|F(\lambda_1)-F(\lambda_2)|\leq|\lambda_1-\lambda_2|
\end{equation}
for any $\lambda_1,\lambda_2\in\mathbb{R}$. \eqref{ac} implies that $F$ is absolutely continuous, so $F$ is differentiable almost everywhere on $\mathbb{R}$ and $f:=F'\in L^1_\text{loc}(\mathbb{R})$ by Theorem 7.20 from \cite{Rudin1987}. {Thus} $f$ satisfies \eqref{s-csd} due to \eqref{spr2}. Moreover, \eqref{L_infty} follows from \eqref{ac} and Theorem 1.40 from \cite{Rudin1987}. Finally, by \eqref{s-csd} we have
\begin{align*}
&\int_{-\infty}^\infty\left|(\mathcal{F}u)(\lambda)\right|^2\overline{f(\lambda)}d\lambda
=2\pi E\left[\overline{\left(\int_{-\infty}^\infty u(s)dB_s^1\right)}\left(\int_{-\infty}^\infty u(s)dB^2_s\right)\right]\\
&=2\pi E\left[\left(\int_{-\infty}^\infty \overline{u(s)}dB_s^1\right)\overline{\left(\int_{-\infty}^\infty \overline{u(s)}dB^2_s\right)}\right]
=\int_{-\infty}^\infty\left|(\mathcal{F}\overline{u})(\lambda)\right|^2f(\lambda)d\lambda\\
&=\int_{-\infty}^\infty\left|(\mathcal{F}u)(-\lambda)\right|^2f(\lambda)d\lambda
=\int_{-\infty}^\infty\left|(\mathcal{F}u)(\lambda)\right|^2f(-\lambda)d\lambda
\end{align*}
for any $u\in L^2(\mathbb{R})$. Therefore, for any bounded Borel set $A\subset\mathbb{R}$, we have $\int_A\overline{f(\lambda)}d\lambda=\int_Af(-\lambda)d\lambda$ 
by taking $u=\mathcal{F}^{-1}1_A$. Consequently, $f$ satisfies \eqref{hermite} due to Theorem 1.40 from \cite{Rudin1987}.\hfill$\Box$

\subsection{Proof of Theorem \ref{theorem:main}}\label{proof:theorem1}

Throughout the discussions, for sequences $(x_J)$ and $(y_J)$, $x_J\lesssim y_J$ means that there exists a constant $K\in[0,\infty)$ such that $x_J \leq K y_J$ for large $J$. Also, for $r>0$ $\|\cdot\|_r$ denotes the $L^r$-norm of random variables, i.e.~$\|\xi\|_r=(E[|\xi|^r])^{1/r}$ for a random variable $\xi$.

First we note that, without loss of generality, the volatility processes $\sigma^1$ and $\sigma^2$ may be assumed to be bounded by a standard localization argument presented in e.g.~{Section 1d from chapter I of \cite{JS2003}}. In fact, for each $K>0$ and each $\nu=1,2$, let us define $S^\nu_K=\inf\{t:|\sigma^\nu_t|>K\}$. We have $|\sigma^\nu_s|\leq K$ as long as $s<S^\nu_K$. Now define the process $\sigma^{\nu,(K)}_s$ by $\sigma^{\nu,(K)}_s=\sigma^\nu_s1_{\{s<S^\nu_K\}}+\sigma^\nu_{S^\nu_K-}1_{\{s\geq S^\nu_K\}}$ for $s\geq0$. $\sigma^{\nu,(K)}_s$ is obviously c\`adl\`ag and $(\mathcal{F}^\nu_t)$-adapted because $S^\nu_K$ is an $(\mathcal{F}^\nu_t)$-stopping time. So we can define the process $X^{\nu,(K)}_t$ by $X^{\nu,(K)}_t=\int_0^t\sigma^{\nu,(K)}_sdB^\nu_s$ for $t\geq0$. We associate $\widehat{\rho}_{{J-j+1}}^{(K)}(\theta)$ with $X^{1,(K)}$ and $X^{2,(K)}$. Now since we have $\{\widehat{\rho}_{{J-j+1}}^{(K)}(\theta)\neq\widehat{\rho}_{{J-j+1}}(\theta))\}\subset\{S^1_K\wedge S^2_K< t+1\}$ and
\begin{equation*}
\limsup_{K\to\infty}P\left(S^1_K\wedge S^2_K< t+1\right)\leq\limsup_{K\to\infty}P\left(\max_{\nu=1,2}\sup_{0\leq s\leq t+1}|\sigma^\nu_s|>K\right)=0
\end{equation*}
because both $\sigma^1$ and $\sigma^2$ are c\`adl\`ag, the results of Theorem \ref{theorem:main} hold true once they hold true for $\widehat{\rho}_{{J-j+1}}^{(K)}(\theta)$. Consequently, in the following we assume that both $\sigma^1$ and $\sigma^2$ are bounded.

{
We begin by proving the following auxiliary result. 
\begin{lemma}\label{lemma:ucp}
For each $n\in\mathbb{N}$, let $X^n=(X^n_t)_{t\geq0}$ be a c\`adl\`ag process. Also, let $X=(X_t)_{t\geq0}$ be a continuous process. Suppose that the following two conditions are satisfied:
\begin{enumerate}[label=(\roman*)]

\item\label{ass:tight} The sequence $(X^n)_{n\in\mathbb{N}}$ is tight for the Skorokhod topology.

\item\label{ass:margin} $X^n_t\to^pX_t$ as $n\to\infty$ for any $t\geq0$.

\end{enumerate}
Then we have $X^n\xrightarrow{ucp}X$ as $n\to\infty$. 
\end{lemma}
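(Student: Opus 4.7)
The plan is to reduce the desired uniform-on-compacts convergence in probability to a weak convergence statement in the Skorokhod space, and then exploit the continuity of the target. Set $Y^n := X^n - X$, which is a c\`adl\`ag process since $X$ is continuous. If we can show that $Y^n \Rightarrow 0$ in $D([0,\infty))$ equipped with the Skorokhod topology, then for each fixed $t>0$ the map $y \mapsto \sup_{s\leq t}|y(s)|$ from $D([0,t])$ to $\mathbb{R}$ is continuous at continuous arguments, so by the continuous mapping theorem $\sup_{s\leq t}|Y^n_s| \Rightarrow 0$; weak convergence to the constant $0$ is the same as convergence in probability, which is precisely $X^n \xrightarrow{ucp} X$.

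To implement this, I would first establish tightness of $(Y^n)$ in $D([0,\infty))$. The law of the single continuous process $X$ is tight in $C([0,\infty))$, and combined with assumption \ref{ass:tight}, the pair $(X^n, X)$ is tight in $D([0,\infty)) \times C([0,\infty))$. The subtraction map $(x,y) \mapsto x-y$ is continuous from $D \times C$ into $D$: if $x_n \to x$ in $D$ through time changes $\lambda_n \to \mathrm{id}$ uniformly, and $y_n \to y$ uniformly (which is what Skorokhod convergence to a continuous limit guarantees), then $x_n\circ\lambda_n - y_n\circ\lambda_n \to x - y$ uniformly, using uniform continuity of $y$ on compacts. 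Hence tightness passes to $(Y^n)$.

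The heart of the argument is the identification of the subsequential limits. Given any subsequence, Prokhorov yields a further subsequence with $Y^{n_k} \Rightarrow W$ in $D([0,\infty))$ for some c\`adl\`ag process $W$. Let $T_W := \{t \geq 0 : P(W_{t-} = W_t) = 1\}$ be the set of fixed times of continuity of $W$; its complement is at most countable, so $T_W$ is dense. For each $t \in T_W$, the projection $\pi_t$ is a.s. continuous at $W$, so $Y^{n_k}_t \Rightarrow W_t$. On the other hand, assumption \ref{ass:margin} gives $Y^n_t = X^n_t - X_t \to^p 0$, so $W_t = 0$ a.s. Choosing a countable dense $D \subset T_W$ and taking a countable union, $P(W_t = 0 \text{ for every } t \in D) = 1$, and right-continuity of $W$ then forces $W \equiv 0$ almost surely. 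Thus every subsequential weak limit of $(Y^n)$ is the zero process, and combined with tightness this yields $Y^n \Rightarrow 0$, completing the proof.

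The main obstacle is the transfer of tightness through subtraction: one must use the continuity of $X$ in an essential way, since subtraction is not jointly continuous on $D \times D$ (two c\`adl\`ag paths can have coincident jumps that are artificially smoothed out under different time reparametrizations). Restricting the second argument to $C$ fixes this. The second delicate point is the passage from ``$W_t = 0$ almost surely for each fixed $t$ in a dense countable set'' to ``$W \equiv 0$ identically almost surely'', which crucially uses right-continuity of $W$.
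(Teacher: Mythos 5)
Your proof is correct and follows essentially the same route as the paper: both reduce the claim to showing $X^n-X\Rightarrow 0$ in the Skorokhod topology via tightness of the difference (your subtraction-map argument on $D\times C$ is exactly what the paper gets by citing Corollary 3.33 of chapter VI of Jacod--Shiryaev, which needs the continuity of $X$) plus identification of the finite-dimensional limits from condition (ii), and then upgrade Skorokhod convergence to the continuous zero limit into ucp convergence (your continuous-mapping/sup-functional step corresponds to the paper's use of van der Vaart's Theorem 18.10(iii) and Proposition VI.1.17 of Jacod--Shiryaev). The only difference is that you prove inline, via subsequences and fixed times of continuity, what the paper obtains by citing standard results; no gap.
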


\begin{proof}
First, condition \ref{ass:margin} immediately yields $(X^n_{t_1}-X_{t_1},\dots,X^n_{t_k}-X_{t_k})\to^p(0,\dots,0)$ as $n\to\infty$ for any $k\in\mathbb{N}$ and $t_1,\dots,t_k\geq0$. Therefore, by Theorem 18.10(iii) of \cite{Vaart1998} the finite-dimensional distributions of the processes $X^n-X$ $(n=1,2,\dots)$ converge in law to those of the identically zero function on $[0,\infty)$ as $n\to\infty$. Since $X$ is continuous by assumption, the sequence $(X^n-X)_{n\in\mathbb{N}}$ is tight for the Skorokhod topology by condition \ref{ass:tight} and Corollary 3.{33} of chapter VI from \cite{JS2003}, hence we conclude that the processes $X^n-X$ converge in law to the identically zero function on $[0,\infty)$ for the Skorokhod topology. Finally, this implies that the processes $X^n-X$ converge in probability to the identically zero function on $[0,\infty)$ for the Skorokhod topology by Theorem 18.10(iii) of \cite{Vaart1998}, hence we obtain $X^n\xrightarrow{ucp}X$ as $n\to\infty$ by the continuity of $X$ and Proposition 1.17 of chapter VI from \cite{JS2003}. 
\end{proof}
}

We use the notation $\mathcal{L}_J=\{l\in\mathbb{Z}:l\tau_J\in\mathcal{G}_J\}$ and $\mathcal{L}_J^+=\{l\in\mathbb{Z}_+:l\tau_J\in\mathcal{G}_J\}.$ 
{We next prove the tightness of the target processes.}
\begin{lemma}\label{lemma:c-tight}
Under the assumptions of Theorem \ref{theorem:main}, {the sequence of processes $((\widehat{\boldsymbol{\rho}}_j(\vartheta_J)_t)_{t\geq0})_{J\in\mathbb{N}}$ is tight for the Skorokhod topology.}
\end{lemma}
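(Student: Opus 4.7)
The plan is to verify a moment-based tightness criterion for the Skorokhod space, such as Aldous's criterion (see, e.g.,~\cite{JS2003}) or a c\`adl\`ag version of Kolmogorov's criterion. In either case the essential input is a moment bound on the increments $\widehat{\boldsymbol{\rho}}_{J-j+1}(\vartheta_J)_t - \widehat{\boldsymbol{\rho}}_{J-j+1}(\vartheta_J)_s$ that grows at most linearly (respectively, quadratically in the fourth moment) in $|t-s|$, uniformly in $J$.

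As a preliminary step, I would use the localization argument already set up in the preceding paragraphs to assume that both $\sigma^1$ and $\sigma^2$ are bounded. Combined with the Burkholder-Davis-Gundy inequality and the unit-energy normalization \eqref{normalize}, this yields the uniform bound $\|\mathcal{W}^\nu_{jk}\|_r \lesssim \tau_J^{1/2}$ for every $r \geq 1$, with a constant depending on $r$ but independent of $J$, $k$ and $\nu$. The missing-observation indicators $\delta_{\nu,k}$, being i.i.d.~Bernoulli variables independent of $B$, can be handled by conditioning and do not affect this bound up to constants.

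Writing the increment explicitly (for $l \geq 0$; the other case is analogous) as
\[
\widehat{\boldsymbol{\rho}}_{J-j+1}(\vartheta_J)_t - \widehat{\boldsymbol{\rho}}_{J-j+1}(\vartheta_J)_s = \frac{\tau_J^{-1}}{n-l-L_j+1}\sum_{k \in I_{s,t}} \mathcal{W}^1_{jk}\,\mathcal{W}^2_{jk+l}, \qquad |I_{s,t}| \asymp (t-s)/\tau_J,
\]
I would expand the square (and, if needed, the fourth power) and decompose each mixed wavelet moment into products of pairwise covariances via Isserlis's formula, applied conditionally on the paths of $\sigma^1, \sigma^2$ and on $(\delta_{\nu,k})$ so that the $\mathcal{W}^\nu_{jk}$ become conditionally Gaussian. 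The finite filter support of length $L_j$ forces each conditional pairwise covariance $E[\mathcal{W}^\nu_{jk_1}\mathcal{W}^{\nu'}_{jk_2}\mid\cdot]$ to vanish once $|k_1 - k_2| > L_j$, reducing each sum to $O(|I_{s,t}|\cdot L_j)$ nonzero summands of size $O(\tau_J^2)$. Multiplying by the prefactor $(\tau_J^{-1}/(n-l-L_j+1))^2$ and using $n\tau_J \to T$ then produces the required linear-in-$(t-s)$ bound for the second moment, with the squared-mean contribution absorbed by $(t-s)^2 \leq T(t-s)$ on $[0,T]$.

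The main obstacle I expect is the bookkeeping in the fourth-moment expansion needed to apply Kolmogorov's criterion cleanly: across all Wick pairings of the eight wavelet factors one must verify that the filter-support restriction leaves only $O(|I_{s,t}|^2 L_j^2)$ nonzero index tuples, and that neither the non-constant (but bounded) volatility nor the previous-tick interpolation mechanism inflates any summand beyond the Gaussian baseline. Once the moment estimate is in hand the tightness conclusion is routine.
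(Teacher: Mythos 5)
Your plan contains genuine gaps, the most serious being your central device: applying Isserlis's (Wick's) formula to the coefficients $\mathcal{W}^\nu_{jk}$ ``conditionally on the paths of $\sigma^1,\sigma^2$ and on $(\delta_{\nu,k})$'' is not available under the assumptions of Theorem \ref{theorem:main}. The volatilities are only assumed to be $(\mathcal{F}^\nu_t)$-adapted c\`adl\`ag processes, so they may depend on the driving Brownian motions (leverage-type dependence); conditioning on the path of $\sigma^\nu$ then does \emph{not} make $\int\sigma^\nu\,dB^\nu$, hence $\mathcal{W}^\nu_{jk}$, Gaussian, and the pairwise-covariance factorization you rely on fails (in the paper, Gaussian computations are reserved for the de-volatilized variables $\zeta^\nu_k$, which are built from $B$ alone). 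Second, with previous-tick interpolation the pseudo-return $\Delta^o_{k-p}X^\nu$ reaches back an arbitrary number $\alpha$ of steps, so the covariance of $\mathcal{W}^1_{jk_1}$ and $\mathcal{W}^2_{jk_2}$ does not vanish for $|k_1-k_2|>L_j$; one only gets geometric decay through the factors $\pi_\nu^\alpha$ after averaging over the $\delta$'s. Third, the uniform increment-moment bounds you announce (second moment linear, fourth moment quadratic in $|t-s|$, uniformly in $J$) are false across observation grid points: if $s<t$ straddle a single multiple of $\tau_J$ with $t-s\ll\tau_J$, the increment reduces to one term proportional to $\mathcal{W}^1_{jk}\mathcal{W}^2_{jk+l}$, whose second moment is of order $\tau_J^2$, not $O(t-s)$. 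Hence the single-increment Kolmogorov criterion cannot be applied as stated; you would need the two-increment (Billingsley/Chentsov) c\`adl\`ag version, where the product of adjacent increments annihilates the small-gap case, or Aldous's criterion together with the compact-containment (supremum) condition, which your sketch omits altogether.

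For comparison, the paper's proof is much lighter and avoids all of these issues: it first shows $\|\mathcal{W}^\nu_{jk}\|_4\leq C\sqrt{\tau_J}$ by Minkowski, Burkholder--Davis--Gundy and the unit-energy identity \eqref{normalize}, with the interpolation handled by the geometric factors $\pi_\nu^{\alpha/4}$; it then bounds $E[\sup_{0\leq t\leq m}|\widehat{\boldsymbol{\rho}}_{J-j+1}(\vartheta_J)_t|]$ and the ordinary modulus of continuity by crude Cauchy--Schwarz in the summation index (no Gaussianity, no covariance decay in $k$), and concludes via Proposition 3.26 of Chapter VI of \cite{JS2003}. Because the normalizing factor $\tau_J^{-1}/(n-l-L_j+1)$ is $O(1)$ and $\tau_J\to0$, the resulting estimate $P(w_m(\widehat{\boldsymbol{\rho}}_{J-j+1}(\vartheta_J),\eta)>\varepsilon)\lesssim\varepsilon^{-1}(\eta+\tau_J)m$ already yields tightness. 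If you replace the Wick calculus by these BDG-based bounds and either bound the modulus directly or use the two-increment criterion, your argument can be repaired; the Gaussian bookkeeping you flag as the main obstacle is neither valid in this setting nor needed.
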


\begin{proof}
Since $\vartheta_J\in\mathcal{G}_J$, it can be written as $\vartheta_J=l\tau_J$ for some $l=l_J\in\mathcal{L}_J$. For the simplicity of presentation we assume that $l_J\in\mathcal{L}_J^+$ for all $J$ (this assumption can easily be removed). 

According to Proposition 3.26 from Chapter VI of \cite{JS2003}, it suffices to prove
\begin{align}
&\lim_{A\to\infty}\limsup_{J\to\infty}P\left(\sup_{0\leq t\leq m}\left|\widehat{\boldsymbol{\rho}}_{{J-j+1}}(l\tau_J)_t\right|>A\right)=0,\label{c-aim1}\\
&\lim_{\eta\to0}\limsup_{J\to\infty}P\left(w_m(\widehat{\boldsymbol{\rho}}_{{J-j+1}}(l\tau_J),\eta)>\varepsilon\right)=0\label{c-aim2}
\end{align}
for any $m\in\mathbb{N}$ and any $\varepsilon>0$. Here for a function $g:[0,\infty)\to\mathbb{R}$ $w_m(g,\eta)$ denote the modulus of continuity of $g$ on $[0,m]$, i.e.~$w_m(g,\eta)=\sup\{|g(s)-s(t)|;s,t\in[0,m],|s-t|\leq\eta\}.$

First we show that there is a constant $C$ such that
\begin{equation}\label{c-fourth}
\left\|\mathcal{W}^\nu_{jk}\right\|_4\leq C\sqrt{\tau_J}
\end{equation}
for any $j,k$ and $\nu=1,2$. Since we can rewrite $\mathcal{W}^\nu_{jk}$ as 
$\mathcal{W}^\nu_{jk}=\sum_{\alpha=0}^{k}\chi_{\nu,k-p}(\alpha)\sum_{p=0}^{(L_j-1)\wedge(k-\alpha)}h_{j,p}\Delta_{k-p-\alpha}X^\nu,$ 
the Minkowski and the Burkholder-Davis-Gundy inequalities as well as \eqref{normalize} yield
\begin{align*}
\|\mathcal{W}^\nu_{jk}\|_4
&\leq\sum_{\alpha=0}^{k}\|\chi_{\nu,k-p}(\alpha)\|_4\left\|\sum_{p=0}^{(L_j-1)\wedge(k-\alpha)}h_{j,p}\Delta_{k-p-\alpha}X^\nu\right\|_4
\lesssim\sum_{\alpha=0}^{k}\pi_\nu^{\alpha/4}\left\{\left(\sum_{p=0}^{L_j-1}h^2_{j,p}\right)^2\cdot{\tau_J^2}\right\}^{1/4}
\lesssim\sqrt{\tau_J},
\end{align*}
hence \eqref{c-fourth} holds true.

Next, \eqref{c-fourth} and the Schwarz inequality imply that 
$E\left[\sup_{0\leq t\leq m}\left|\widehat{\boldsymbol{\rho}}_{{J-j+1}}(l\tau_J)_t\right|\right]\lesssim
\sum_{k=L_j-1}^{\lfloor \tau_J^{-1}m\rfloor-l}E\left[\left|\mathcal{W}^1_{jk}\mathcal{W}^2_{jk+l}\right|\right]\lesssim m,$ 
hence \eqref{c-aim1} holds true by the Markov inequality.

Finally, for $0\leq s\leq t\leq m$, the Schwarz inequality yields
\begin{align*}
\left|\widehat{\boldsymbol{\rho}}_{{J-j+1}}(l\tau_J)_t-\widehat{\boldsymbol{\rho}}_{{J-j+1}}(l\tau_J)_s\right|
\lesssim\left(\lfloor\tau_J^{-1}t\rfloor-\lfloor\tau_J^{-1}s\rfloor\right)\sum_{k=L_j-1}^{\lfloor\tau_J^{-1}m\rfloor-l}\left|\mathcal{W}^1_{jk}\mathcal{W}^2_{jk+l}\right|^2,
\end{align*}
hence we have
$
P\left(w_m\left(\widehat{\boldsymbol{\rho}}_{{J-j+1}}(l\tau_J),\eta\right)>\varepsilon\right)
\lesssim\varepsilon^{-1}(\eta+\tau_J)m
$
by the Markov and Schwarz inequalities as well as \eqref{c-fourth}. This yields \eqref{c-aim2}. This completes the proof.
\end{proof}
Next we assess the errors induced by interpolation. 
{
\begin{lemma}\label{lemma:lm-term}
Under the assumptions of Theorem \ref{theorem:main}, we have
\[
\left\|\widehat{\boldsymbol{\rho}}_{{J-j+1}}(l\tau_J)_{t}-E\left[\widehat{\boldsymbol{\rho}}_{{J-j+1}}(l\tau_J)_{t}|X\right]\right\|_2
\leq \frac{C\sqrt{t\tau_J}L_j}{|n-l-L_j+1|\tau_J}
\]
for any $J\in\mathbb{N}$, $l\in\mathcal{L}_J$, $j\in\mathbb{N}$ and $t>0$, where $C>0$ is a constant which depends only on $\pi_1$ and $\pi_2$.
\end{lemma}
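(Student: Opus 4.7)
The plan is to apply the conditional-variance identity
$\|\widehat{\boldsymbol{\rho}}_{J-j+1}(l\tau_J)_t - E[\widehat{\boldsymbol{\rho}}_{J-j+1}(l\tau_J)_t | X]\|_2^2 = E[\variance(\widehat{\boldsymbol{\rho}}_{J-j+1}(l\tau_J)_t | X)]$
and expand the right-hand side into a double sum of conditional covariances of the products $\mathcal{W}^1_{jk}\mathcal{W}^2_{j,k+l}$, exploiting the independence of $X$, $(\chi_{1,k})$, and $(\chi_{2,k})$. Setting $V^\nu_{jk} := \mathcal{W}^\nu_{jk} - E[\mathcal{W}^\nu_{jk} | X]$ and using $\chi_1 \perp \chi_2$ conditionally on $X$, one obtains
\begin{equation*}
\mathcal{W}^1_{jk}\mathcal{W}^2_{j,k+l} - E[\mathcal{W}^1_{jk}\mathcal{W}^2_{j,k+l} | X] = V^1_{jk}\, E[\mathcal{W}^2_{j,k+l}|X] + E[\mathcal{W}^1_{jk}|X]\, V^2_{j,k+l} + V^1_{jk}V^2_{j,k+l},
\end{equation*}
whose three summands are mutually uncorrelated given $X$, so the conditional variance splits into three analogous double sums.

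The main technical step is a per-pair bound for the conditional covariance $E[V^\nu_{jk} V^\nu_{jk'}|X]$. Writing $V^\nu_{jk} = \sum_{p=0}^{L_j-1} h_{j,p} U^\nu_{k-p}$ with $U^\nu_m := \Delta^o_m X^\nu - E[\Delta^o_m X^\nu | X]$, a direct case analysis of $\covariance(\chi_{\nu,m}(\alpha), \chi_{\nu,m'}(\alpha'))$---which is non-zero only when the $\delta$-index ranges $\{m+1-\alpha, \dots, m+1\}$ and $\{m'+1-\alpha', \dots, m'+1\}$ overlap, with absolute value at most a constant times $\pi_\nu^{\alpha+\alpha'}$---combined with $\|\Delta_m X^\nu\|_r = O(\sqrt{\tau_J})$ (from the boundedness of $\sigma^\nu$ and the BDG inequality) yields $\|E[U^\nu_m U^\nu_{m'} | X]\|_2 \lesssim \tau_J \pi_\nu^{|m-m'|}$ for $m \neq m'$ and $\|E[(U^\nu_m)^2|X]\|_2 \lesssim \tau_J$. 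Expanding $V^\nu_{jk}V^\nu_{jk'}$ as a double filter sum and applying Cauchy-Schwarz together with $\sum_p h_{j,p}^2 = 1$ then gives $\|E[V^\nu_{jk} V^\nu_{jk'} | X]\|_2 \lesssim \tau_J \pi_\nu^{(|k-k'|-L_j)_+}$, so that for each fixed $k$ only $O(L_j)$ indices $k'$ contribute non-negligibly.

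Each of the three double sums is then handled by Cauchy-Schwarz in the outer expectation, using $\|E[\mathcal{W}^\nu_{jk}|X]\|_4 \leq \|\mathcal{W}^\nu_{jk}\|_4 \lesssim \sqrt{\tau_J}$ (which is \eqref{c-fourth} in the proof of Lemma \ref{lemma:c-tight}) together with the covariance bound just obtained. The per-pair contribution is of order $\tau_J^2$; summing over at most $M_t L_j$ non-negligibly contributing pairs, where $M_t := \lfloor\tau_J^{-1}t\rfloor - l - L_j + 2 \lesssim t/\tau_J$, dividing by $(|n-l-L_j+1|\tau_J)^2$, and taking the square root yields a bound of the stated form.

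The main obstacle is the combinatorial bookkeeping around which $\chi$-variables share common $\delta$'s, in particular the fact that two $\chi_{\nu,m}(\alpha)$ and $\chi_{\nu,m'}(\alpha')$ can remain dependent for $|m-m'|$ arbitrarily large if $\alpha, \alpha'$ are large (but then at exponentially small probability). Once the geometric decay in $|m-m'|$ of the $U^\nu$-covariances is established, the remaining convolutional and H\"older-type estimates are routine though tedious.
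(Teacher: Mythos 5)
Your proposal is correct and does prove the stated bound, but it is organized quite differently from the paper's argument, so a comparison is worthwhile. The paper centers the \emph{product} of the interpolation indicators, writing the error as a sum over filter taps $(p,q)$ and interpolation memories $(\alpha,\beta)$ of sums in $k$ of $\mathcal{X}_{k,l,p,q}(\alpha,\beta)\Delta_{k-p-\alpha}X^1\Delta_{k+l-q-\beta}X^2$ with $\mathcal{X}_{k,l,p,q}(\alpha,\beta)=\chi_{1,k-p}(\alpha)\chi_{2,k+l-q}(\beta)-E[\chi_{1,k-p}(\alpha)\chi_{2,k+l-q}(\beta)]$; it then applies the triangle inequality over $(p,q,\alpha,\beta)$, uses that for fixed $(p,q,\alpha,\beta)$ the summands in $k$ are independent once $|k-k'|>\alpha\vee\beta$ (so the second moment of the inner sum involves only about $\tau_J^{-1}t(\alpha\vee\beta+1)$ pairs, each of size $\pi_1^\alpha\pi_2^\beta\tau_J^2$ by the BDG bound), sums the geometric series in $(\alpha,\beta)$, and pays the factor $L_j$ via $\sum_{p,q}|h_{j,p}h_{j,q}|\leq L_j$ from \eqref{l1h}. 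You instead use $\|\cdot\|_2^2=E[\variance(\cdot\,|X)]$, split each summand into the three pieces $V^1E[\mathcal{W}^2|X]$, $E[\mathcal{W}^1|X]V^2$, $V^1V^2$ with $V^\nu:=\mathcal{W}^\nu-E[\mathcal{W}^\nu|X]$ (valid, and conditionally uncorrelated also across different $k$, because $\delta_1$, $\delta_2$ and $X$ are mutually independent, so $E[V^1V^2|X]=0$ and all cross terms vanish), and establish the lag-decay bound $\|E[V^\nu_{jk}V^\nu_{jk'}|X]\|_2\lesssim\tau_J\pi_\nu^{(|k-k'|-L_j)_+}$; your indicator-covariance analysis behind it is right (for $m<m'$ the index ranges overlap exactly when $\alpha'\geq m'-m$, in which case the product vanishes identically and the covariance equals $-(1-\pi_\nu)^2\pi_\nu^{\alpha+\alpha'}$). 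Combined with \eqref{c-fourth} and H\"older, each of your three double sums has roughly $\tau_J^{-1}t\,L_j$ effectively contributing pairs of size $O(\tau_J^2)$, which after normalization gives $\sqrt{t\tau_JL_j}/(|n-l-L_j+1|\tau_J)$ --- in fact slightly sharper than the statement ($\sqrt{L_j}$ in place of $L_j$), provided the filter double sum is handled by grouping over the difference $p-p'$ with $\sum_p|h_{j,p}h_{j,p-d}|\leq1$ rather than the cruder $\ell^1$ bound; either way the lemma follows. What the paper's expansion buys is brevity (no three-way decomposition or separate lag-decay estimate) and a form that is reused verbatim in the higher-moment analogue (Lemma \ref{lemma:lm-term2}), where the $(\alpha\vee\beta)$-dependence in $k$ feeds directly into the $m$-dependent moment inequality of Lemma \ref{lemma:DL}; what your route buys is a cleaner separation of the two sources of randomness and a marginally better power of $L_j$, at the cost of more bookkeeping.
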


\begin{proof}
By symmetry we may assume $l\in\mathcal{L}_J^+$ without loss of generality.  
Set $\mathcal{X}_{k,l,p,q}(\alpha,\beta)=\chi_{1,k-p}(\alpha)\chi_{2,k+l-q}(\beta)-E\left[\chi_{1,k-p}(\alpha)\chi_{2,k+l-q}(\beta)\right]$. Then, we can rewrite $\widehat{\boldsymbol{\rho}}_{{J-j+1}}(l\tau_J)_{t}-E\left[\widehat{\boldsymbol{\rho}}_{{J-j+1}}(l\tau_J)_{t}|X\right]$ as
\begin{align}
\widehat{\boldsymbol{\rho}}_{{J-j+1}}(l\tau_J)_{t}-E\left[\widehat{\boldsymbol{\rho}}_{{J-j+1}}(l\tau_J)_{t}|X\right]
&=\frac{\tau_J^{-1}}{n-l-L_j+1}\sum_{p,q=0}^{L_j-1}\sum_{\alpha=0}^{\lfloor \tau_J^{-1}t\rfloor-l-p}\sum_{\beta=0}^{\lfloor \tau_J^{-1}t\rfloor-q}h_{j,p}h_{j,q}\nonumber\\
&\qquad\times\sum_{k=(L_j-1)\vee(\alpha+p)\vee(\beta+q-l)}^{\lfloor \tau_J^{-1}t\rfloor-l}\mathcal{X}_{k,l,p,q}(\alpha,\beta)\Delta_{k-p-\alpha}X^1\Delta_{k+l-q-\beta}X^2.\label{lm-expression}
\end{align}
Therefore, the triangular inequality yields
\begin{align*}
&\left\|\widehat{\boldsymbol{\rho}}_{{J-j+1}}(l\tau_J)_{t}-E\left[\widehat{\boldsymbol{\rho}}_{{J-j+1}}(l\tau_J)_{t}|X\right]\right\|_2\\
&\leq\frac{\tau_J^{-1}}{|n-l-L_j+1|}\sum_{p,q=0}^{L_j-1}\sum_{\alpha,\beta=0}^\infty|h_{j,p}h_{j,q}|\left\|\sum_{k=(L_j-1)\vee(\alpha+p)\vee(\beta+q-l)}^{\lfloor \tau_J^{-1}t\rfloor-l}\mathcal{X}_{k,l,p,q}(\alpha,\beta)\Delta_{k-p-\alpha}X^1\Delta_{k+l-q-\beta}X^2\right\|_2.
\end{align*}
Now, by construction $\mathcal{X}_{k,l,p,q}(\alpha,\beta)$ and $\mathcal{X}_{k',l,p,q}(\alpha,\beta)$ are independent if $|k-k'|>\alpha\vee\beta$. Therefore, noting that $X$ and $\mathcal{X}_{k,l,p,q}(\alpha,\beta)$'s are independent, we have
\begin{align*}
&E\left[\left|\sum_{k=(L_j-1)\vee(\alpha+p)\vee(\beta+q-l)}^{\lfloor \tau_J^{-1}t\rfloor-l}\mathcal{X}_{k,l,p,q}(\alpha,\beta)\Delta_{k-p-\alpha}X^1\Delta_{k+l-q-\beta}X^2\right|^2\right]\\
&=\sum_{\begin{subarray}{c}
k,k'=(L_j-1)\vee(\alpha+p)\vee(\beta+q-l)\\
|k-k'|\leq\alpha\vee\beta
\end{subarray}}^{\lfloor \tau_J^{-1}t\rfloor-l}E\left[\mathcal{X}_{k,l,p,q}(\alpha,\beta)\mathcal{X}_{k',l,p,q}(\alpha,\beta)\right]E\left[\Delta_{k-p-\alpha}X^1\Delta_{k+l-q-\beta}X^2\Delta_{k'-p-\alpha}X^1\Delta_{k'+l-q-\beta}X^2\right].
\end{align*}
We have 
\begin{align*}
E\left[\mathcal{X}_{k,l,p,q}(\alpha,\beta)^2\right]
&\leq E\left[\chi_{1,k-p}(\alpha)^2\chi_{2,k+l-q}(\beta)^2\right]
=E\left[\chi_{1,k-p}(\alpha)\right]E\left[\chi_{2,k+l-q}(\beta)\right]\\
&=(1-\pi_1)(1-\pi_2)\pi_1^\alpha\pi_2^\beta,
\end{align*}
and the Burkholder-Davis-Gundy inequality yields
\begin{align*}
\max_{\nu=1,2}\max_{k\geq0}E\left[|\Delta_kX^\nu|^4\right]\leq c_4\tau_J^2
\end{align*}
for some universal constant $c_4>0$, hence we obtain
\begin{align*}
&E\left[\left|\sum_{k=(L_j-1)\vee(\alpha+p)\vee(\beta+q-l)}^{\lfloor \tau_J^{-1}t\rfloor-l}\mathcal{X}_{k,l,p,q}(\alpha,\beta)\Delta_{k-p-\alpha}X^1\Delta_{k+l-q-\beta}X^2\right|^2\right]\\
&\leq\tau_J^{-1}t\cdot \{2(\alpha\vee\beta)+1\}\cdot (1-\pi_1)(1-\pi_2)\pi_1^\alpha\pi_2^\beta\cdot c_4\tau_J^2
=c_4t\{2(\alpha\vee\beta)+1\}(1-\pi_1)(1-\pi_2)\pi_1^\alpha\pi_2^\beta\tau_J.
\end{align*}
Consequently, we conclude that
\begin{align*}
&\left\|\widehat{\boldsymbol{\rho}}_{{J-j+1}}(l\tau_J)_{t}-E\left[\widehat{\boldsymbol{\rho}}_{{J-j+1}}(l\tau_J)_{t}|X\right]\right\|_2\\
&\leq \frac{\tau_J^{-1}}{|n-l-L_j+1|}\sum_{p,q=0}^{L_j-1}|h_{j,p}h_{j,q}|\sum_{\alpha,\beta=0}^\infty\sqrt{c_4t(1-\pi_1)(1-\pi_2)\{2(\alpha\vee\beta)+1\}\pi_1^\alpha\pi_2^\beta\tau_J}\\
&=\frac{C\sqrt{t\tau_J}}{|n-l-L_j+1|\tau_J}\sum_{p,q=0}^{L_j-1}|h_{j,p}h_{j,q}|,
\end{align*}
where $C>0$ is a constant which depends only on $\pi_1$ and $\pi_2$. 
Now, using the inequality
\begin{equation}\label{l1h}
\sum_{p=0}^{L_j-1}|h_{j,p}|
\leq\sqrt{L_j\sum_{p=0}^{L_j-1}h_{j,p}^2}=\sqrt{L_j},
\end{equation}
we complete the proof.
\end{proof}
}

Now we investigate the asymptotic behavior of $E\left[\widehat{\boldsymbol{\rho}}_{{J-j+1}}(l\tau_J)_{t}|X\right]$. For $\nu=1,2$ and $k\geq L_j-1$, we define the variables $Z^\nu_k$ by
$
Z^\nu_k
=E\left[\mathcal{W}^\nu_{jk}|X\right]
=(1-\pi_\nu)\sum_{p=0}^{L_j-1}\sum_{\alpha=0}^{k-p}h_{j,p}\pi_\nu^\alpha\Delta_{k-p-\alpha}X^\nu.
$
Thanks to the independence between $\delta^1_k$'s and $\delta^2_k$'s, we have
\begin{equation}\label{conditional}
E\left[\widehat{\boldsymbol{\rho}}_{{J-j+1}}(l\tau_J)_{t}|X\right]=\frac{\tau_J^{-1}}{n-l-L_j+1}\sum_{k=L_j-1}^{\lfloor \tau_J^{-1}t\rfloor -l}Z^\nu_kZ^\nu_{k+l}
\end{equation}
for $l\geq0$. To analyze the asymptotic behavior of this quantity, we introduce the ``de-volatilized'' version of $Z^\nu_k$ as follows: 
\begin{align*}
\zeta^\nu_k
=(1-\pi_\nu)\sum_{p=0}^{L_j-1}\sum_{\alpha=0}^{k-p}h_{j,p}\pi_\nu^\alpha\Delta_{k-p-\alpha}B^\nu.
\end{align*}
Since $\zeta^\nu_k$'s are centered Gaussian variables, their distribution is completely determined by their covariance structure. To investigate their covariance structure we introduce the following auxiliary quantity for each $\theta\in\mathbb{R}$:
\begin{align*}
\bar{\rho}_j(\theta)
=\int_{-\infty}^\infty D(\lambda)H_{j,L}(\lambda)\Pi(\lambda)e^{\sqrt{-1}\tau_J^{-1}\theta\lambda}f_{J}(\tau_J^{-1}\lambda)d\lambda.
\end{align*}

\begin{lemma}\label{covariance}
\begin{enumerate}[label={\normalfont(\alph*)}]

\item We have
$
\left|E\left[\zeta^\nu_k\zeta^\nu_{k'}\right]\right|
\leq\tau_J(1-\pi_\nu)^2\sum_{\alpha,\beta=0}^\infty\pi_\nu^{\alpha+\beta}\sum_{p=0}^{L_j-1}|h_{j,p}|1_{\{|k'-k-\beta+\alpha|<L_j\}}
$
for any $\nu=1,2$ and any $k,k'\in\mathbb{Z}_+$.


{
\item We have
\[
\max_{k,k',l\in\mathbb{Z}_+:k,k'\geq N+L_j}\left|E\left[\zeta^1_k\zeta^2_{k'+l}\right]-\tau_J\bar{\rho}_j((k'+l-k)\tau_J)\right|
\leq\tau_JL_j(\pi_1^{N}+\pi_2^{N})
\]
for all $J,N\in\mathbb{N}$. 
}

\end{enumerate}
\end{lemma}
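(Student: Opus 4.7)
For (a), the plan is to expand $\zeta^\nu_k$ and $\zeta^\nu_{k'}$ and to use that $(\Delta_m B^\nu)_m$ is an i.i.d.\ centred Gaussian family with variance $\tau_J$ (since $B^\nu$ is a standard Brownian motion). This gives
\[
E[\zeta^\nu_k\zeta^\nu_{k'}]
=\tau_J(1-\pi_\nu)^2\sum_{p,p'=0}^{L_j-1}h_{j,p}h_{j,p'}\sum_{\alpha=0}^{k-p}\sum_{\beta=0}^{k'-p'}\pi_\nu^{\alpha+\beta}1_{\{k-p-\alpha=k'-p'-\beta\}}.
\]
Resolving the indicator by $p'=p+(k'-k)+\alpha-\beta$ forces $|(k'-k)+\alpha-\beta|<L_j$ for the inner sum to be non-empty; bounding $|h_{j,p'}|\leq 1$ (a consequence of the unit energy identity \eqref{normalize}) and dropping the upper cut-offs on $\alpha,\beta$ then yields the claimed estimate.

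For (b), the plan is to rewrite the covariance as a Fourier integral. Applying the spectral identity \eqref{s-csd} to $u=1_{[m\tau_J,(m+1)\tau_J]}$, $v=1_{[m'\tau_J,(m'+1)\tau_J]}$ and changing variables by $\mu=\lambda\tau_J$, one obtains
\[
E[\Delta_m B^1\Delta_{m'}B^2]=\tau_J\int_{-\infty}^\infty e^{-\sqrt{-1}\mu(m-m')}D(\mu)f_J(\mu/\tau_J)\,d\mu.
\]
Substituting this into $E[\zeta^1_k\zeta^2_{k'+l}]$, the $(p,p')$-sum factors as $\sum_{p,p'}h_{j,p}h_{j,p'}e^{-\sqrt{-1}\mu(p'-p)}=H_{j,L}(\mu)$, while the $(\alpha,\beta)$-sums produce the truncated geometric series
\[
g_{1,k,p}(\mu)=(1-\pi_1)\sum_{\alpha=0}^{k-p}\pi_1^\alpha e^{\sqrt{-1}\mu\alpha},\qquad
g_{2,k'+l,p'}(\mu)=(1-\pi_2)\sum_{\beta=0}^{k'+l-p'}\pi_2^\beta e^{-\sqrt{-1}\mu\beta}.
\]
Replacing these by the infinite geometric series $g_1^\infty(\mu)=(1-\pi_1)/(1-\pi_1e^{\sqrt{-1}\mu})$ and $g_2^\infty(\mu)=(1-\pi_2)/(1-\pi_2e^{-\sqrt{-1}\mu})$ (whose product is $\Pi(\mu)$) recovers exactly $\tau_J\bar{\rho}_j((k'+l-k)\tau_J)$.

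Hence the target difference arises solely from the tails of the geometric series. Using $|g_1^\infty-g_{1,k,p}|\leq\pi_1^{k-p+1}\leq\pi_1^{N+1}$ and $|g_2^\infty-g_{2,k'+l,p'}|\leq\pi_2^{N+1}$ (which need the assumptions $k,k'\geq N+L_j$ and $p,p'\leq L_j-1$), together with $|g_{1,k,p}|,|g_2^\infty|\leq 1$, yields the $\mu$-uniform bound $|g_{1,k,p}g_{2,k'+l,p'}-g_1^\infty g_2^\infty|\leq\pi_1^N+\pi_2^N$. Combining this with $\|f_J\|_\infty\leq 1$ (from \eqref{L_infty}), the identity $\int_{-\infty}^\infty D(\mu)\,d\mu=1$, and the Cauchy--Schwarz estimate $\sum_{p,p'}|h_{j,p}||h_{j,p'}|=(\sum_p|h_{j,p}|)^2\leq L_j\sum_p h_{j,p}^2=L_j$ then delivers the stated bound $\tau_J L_j(\pi_1^N+\pi_2^N)$. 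The main technical nuisance is the bookkeeping in (b): one must carefully factor the Fourier integral so that the finite sums over $p,p',\alpha,\beta$ can be pulled outside the $\mu$-integral, and verify that the hypothesis $k,k'\geq N+L_j$ enters only through the uniform lower bound $k-p,k'+l-p'\geq N+1$ on the truncation points.
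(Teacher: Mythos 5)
Your proposal is correct and follows essentially the same route as the paper: part (a) is the same expansion using independent Gaussian increments with the constraint on the wavelet index and $|h_{j,p}|\leq 1$, and part (b) is the same spectral argument via \eqref{s-csd}, identifying $H_{j,L}$ and $\Pi$ and bounding the geometric tails (your truncated-versus-infinite geometric series bookkeeping is just a repackaging of the paper's explicit tail sums), with the same ingredients $\int_{-\infty}^\infty D(\lambda)\,d\lambda=1$, $\|f_J\|_\infty\leq1$ and $\sum_p|h_{j,p}|\leq\sqrt{L_j}$.
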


\begin{proof}
First, claim (a) follows from the following identity 
\begin{equation*}
E\left[\zeta^\nu_k\zeta^\nu_{k'}\right]
=\tau_J(1-\pi_\nu)^2\sum_{\alpha=0}^k\sum_{\beta=0}^{k'}\sum_{\begin{subarray}{c}
p=0\\
0\leq k'-\beta-(k-p-\alpha)\leq L_j-1
\end{subarray}}^{(L_j-1)\wedge(k-\alpha)}h_{j,p}h_{j,k'-\beta-(k-p-\alpha)}\pi_\nu^{\alpha+\beta}
\end{equation*}
and the inequality $|h_{j,p}|\leq1$. 

Next, using the identity
\begin{equation}\label{formula}
E\left[\Delta_{k}B^1\Delta_{k+l}B^2\right]=\tau_J\int_{-\infty}^\infty D(\lambda)e^{\sqrt{-1}l}f_{J}(\tau_J^{-1}\lambda)d\lambda,
\end{equation}
{which follows from \eqref{s-csd},} 
we have
\begin{align*}
&E\left[\zeta^1_k\zeta^2_{k'+l}\right]\\
&=(1-\pi_1)(1-\pi_2)\sum_{p,q=0}^{L_j-1}\sum_{\alpha=0}^{k-p}\sum_{\beta=0}^{k'+l-q}h_{j,p}h_{j,q}\pi_1^\alpha\pi_2^\beta E\left[\Delta_{k-p-\alpha}B^1\Delta_{k'+l-q-\beta}B^2\right]\\
&=\tau_J(1-\pi_1)(1-\pi_2)\sum_{p,q=0}^{L_j-1}\sum_{\alpha=0}^{k-p}\sum_{\beta=0}^{k'+l-q}h_{j,p}h_{j,q}\pi_1^\alpha\pi_2^\beta\int_{-\infty}^\infty D(\lambda)e^{\sqrt{-1}(k'+l-q-\beta-k+p+\alpha)\lambda}f_{J}(\tau_J^{-1}\lambda)d\lambda.
\end{align*}
{
Meanwhile, from the definitions of $\Pi(\lambda)$ and $H_{j,L}(\lambda)$ we can easily infer that
\[
\Pi(\lambda)=(1-\pi_1)(1-\pi_2)\sum_{\alpha,\beta=0}^\infty\pi_1^\alpha\pi_2^\beta e^{\sqrt{-1}\lambda(\alpha-\beta)},\qquad
H_{j,L}(\lambda)=\sum_{p,q=0}^{L_j-1}h_{j,p}h_{j,q}e^{\sqrt{-1}\lambda(q-p)},
\]
hence we obtain
\begin{align*}
&\bar{\rho}_j((k'+l-k)\tau_J)\\
&=(1-\pi_1)(1-\pi_2)\sum_{p,q=0}^{L_j-1}\sum_{\alpha,\beta=0}^\infty h_{j,p}h_{j,q}\pi_1^\alpha\pi_2^\beta\int_{-\infty}^\infty D(\lambda)e^{\sqrt{-1}(k'+l-k+\alpha-\beta+q-p)\lambda}f_{J}(\tau_J^{-1}\lambda)d\lambda.
\end{align*}
Therefore, we have
\begin{align*}
&\left|E\left[\zeta^1_k\zeta^2_{k'+l}\right]-\tau_J\bar{\rho}_j((k'+l-k)\tau_J)\right|\\
&\leq\tau_J(1-\pi_1)(1-\pi_2)\sum_{p,q=0}^{L_j-1}
\left(\sum_{\alpha=0}^{k-p}\sum_{\beta=k'+l-q+1}^\infty
+\sum_{\alpha=k-p+1}^\infty\sum_{\beta=0}^\infty\right)
|h_{j,p}h_{j,q}|\pi_1^\alpha\pi_2^\beta\int_{-\infty}^\infty D(\lambda)d\lambda\\
&=\tau_J(1-\pi_1)(1-\pi_2)\sum_{p,q=0}^{L_j-1}|h_{j,p}h_{j,q}|
\frac{(1-\pi_1^{k-p+1})\pi_2^{k'+l-q+1}+\pi_1^{k-p+1}}{(1-\pi_1)(1-\pi_2)}
\int_{-\infty}^\infty D(\lambda)d\lambda\\
&\leq\tau_J\sum_{p,q=0}^{L_j-1}|h_{j,p}h_{j,q}|
(\pi_2^{k'+l-L_j}+\pi_1^{k-L_j})
\int_{-\infty}^\infty D(\lambda)d\lambda\quad(\because\pi_1,\pi_2\leq1)\\
&\leq \tau_JL_j(\pi_2^{k'+l-L_j}+\pi_1^{k-L_j})
\int_{-\infty}^\infty D(\lambda)d\lambda\quad(\because\eqref{l1h}).
\end{align*}
Moreover, using formula (3.741.3) of \cite{GR2007} we deduce
\[
\int_{-\infty}^\infty D(\lambda)d\lambda=\frac{2}{\pi}\int_{-\infty}^\infty \frac{\sin^2(\lambda/2)}{\lambda^2}d\lambda=1.
\]
Hence we complete the proof of claim (b) because $\pi_1,\pi_2\leq1$.
}
\end{proof}

From now on we investigate the asymptotic behavior of $\bar{\rho}_j(\theta)$. We need the following auxiliary result.
\begin{lemma}\label{dH-bound}
It holds that
\[
\sup_{\lambda\in[0,\pi]}\left|\frac{d}{d\lambda}H_{j,L}(\lambda)\right|
\leq(2^j-1)c_L,\text{ where }c_L=\binom{L-2}{L/2-1}\frac{L-1}{2^{L-2}}.
\]
Moreover, $c_L=O(\sqrt{L})$ as $L\to\infty$.
\end{lemma}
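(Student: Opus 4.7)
The plan is to (i) derive an explicit closed-form for $H_L'$, (ii) bound the iterated derivative $H_{j,L}'$ by induction, and (iii) verify the asymptotic $c_L=O(\sqrt{L})$ via Stirling.

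For (i), I would write $H_L(\lambda)/2=P(\sin^2(\lambda/2))$, where
\[
P(y)=y^{L/2}\sum_{p=0}^{L/2-1}\binom{L/2-1+p}{p}(1-y)^p
\]
is the Daubechies polynomial---the unique polynomial of degree $\leq L-1$ satisfying the quadrature mirror relation $P(y)+P(1-y)=1$ with a zero of order $L/2$ at $y=0$. From these two properties one obtains the classical identity
\[
P'(y)=\tfrac{L}{2}\binom{L-1}{L/2}\bigl(y(1-y)\bigr)^{L/2-1},
\]
which can be verified directly (or by noting that $P'$ must be symmetric about $1/2$, vanish to order $L/2-1$ at $0$, and have the right leading coefficient). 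The chain rule with $dy/d\lambda=\tfrac12\sin\lambda$ and $y(1-y)=\tfrac14\sin^2\lambda$, combined with $\binom{L-1}{L/2}=\tfrac{2(L-1)}{L}\binom{L-2}{L/2-1}$, then yields $H_L'(\lambda)=c_L\sin^{L-1}\lambda$. Since $G_L(\lambda)=H_L(\lambda-\pi)$ and $L-1$ is odd, it follows that $G_L'(\lambda)=-c_L\sin^{L-1}\lambda$; in particular $|H_L'|,|G_L'|\leq c_L$ everywhere on $\mathbb{R}$.

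For (ii), I would exploit the recursion $H_{j,L}(\lambda)=G_L(\lambda)H_{j-1,L}(2\lambda)$, which follows directly from \eqref{hj-formula}. Differentiating gives
\[
H_{j,L}'(\lambda)=G_L'(\lambda)H_{j-1,L}(2\lambda)+2G_L(\lambda)H_{j-1,L}'(2\lambda),
\]
and I would proceed by induction on $j$ with base case $j=1$ immediate from step (i). In the inductive step, the key is to exploit the QMF constraint $H_L+G_L\equiv 2$ to couple the bounds on the two terms---for instance through the alternative representation $H_{j,L}=2P_{j-1}-P_j$ with $P_j(\lambda)=\prod_{i=0}^{j-1}G_L(2^i\lambda)$, giving a telescoping identity $H_{j,L}'=2P_{j-1}'-P_j'$ whose components can be tracked simultaneously along the induction.

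For (iii), Stirling's formula yields $\binom{L-2}{L/2-1}\sim 2^{L-2}/\sqrt{\pi(L/2-1)}$, so $c_L\sim (L-1)/\sqrt{\pi(L/2-1)}=O(\sqrt{L})$.

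The main obstacle is step (ii): a naive product-rule bound using only $|H_L'|,|G_L'|\leq c_L$ together with $H_L,G_L\leq 2$ yields an estimate of order $c_L\cdot 4^j$, whereas the claim requires $c_L\cdot(2^j-1)$. The missing factor of roughly $2^j$ must come from exploiting $H_L+G_L\equiv 2$ throughout the iteration---this constraint prevents different dyadic factors $H_L(2^{j-1}\lambda)$ and $G_L(2^i\lambda)$ from being simultaneously maximized, and it is precisely this coupling across scales that must be encoded in the inductive hypothesis to close the argument.
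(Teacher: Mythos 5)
Your steps (i) and (iii) are correct and essentially coincide with the paper's: the paper quotes the closed form $\frac{d}{d\lambda}G_L(\lambda)=-c_L\sin^{L-1}\lambda$ from the proof of Theorem 3 of \cite{Lai1995} (your derivation through the Daubechies polynomial yields the same identity for $H_L'$, since $H_L(\lambda)=G_L(\lambda+\pi)$ and $L-1$ is odd), and it too settles $c_L=O(\sqrt{L})$ by Stirling. The genuine gap is step (ii). You correctly note that the chain/Leibniz rule applied to \eqref{hj-formula}, with $|H_L'|,|G_L'|\le c_L$ and $0\le H_L,G_L\le 2$, only gives $\sup_\lambda|H_{j,L}'(\lambda)|\le 2^{j-1}(2^j-1)c_L$ --- the factor $2^j-1=2^{j-1}+\sum_{k=0}^{j-2}2^k$ is the sum of the dilation factors, and the extra $2^{j-1}$ bounds the $j-1$ untouched gain factors --- but you then leave the decisive step as a conjecture: no inductive hypothesis or inequality is exhibited that converts $H_L+G_L\equiv 2$ into the constant $(2^j-1)$. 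As written, the proposal does not prove the stated bound.

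Moreover, the coupling you are hoping for cannot exist, because the inequality with the constant $(2^j-1)$ is not actually attainable. Take $j=2$, so $H_{2,L}'(\lambda)=2H_L'(2\lambda)G_L(\lambda)+H_L(2\lambda)G_L'(\lambda)$. At $\lambda=\pi/4$ one has $H_L'(\pi/2)=c_L$, $G_L'(\pi/4)=-c_L2^{-(L-1)/2}$, and, by \eqref{lai} together with $G_L=2-H_L$, $G_L(\pi/4)\to 2$ as $L\to\infty$; hence $\sup_\lambda|H_{2,L}'(\lambda)|\ge (4-o(1))c_L>3c_L$ for all large $L$ (already for $L=2$ the maximum is about $3.04\,c_2$). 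The paper's own one-line argument (``\eqref{hj-formula} and the Leibniz rule'') in effect bounds the leftover product of gain functions by $1$ rather than $2^{j-1}$, so the constant that argument honestly delivers is $2^{j-1}(2^j-1)$. The discrepancy is immaterial downstream: the lemma enters only through Lemma \ref{lemma:RL} (and thence Lemmas \ref{lemma:zero} and \ref{gqf}), where any bound of the form $C_j\,c_L$ with $j$ fixed, combined with $c_L=O(\sqrt{L})$, suffices for Theorems \ref{theorem:main} and \ref{HRY} at the cost of constants depending on $j$. So the constructive fix is to prove and use the weaker bound $2^{j-1}(2^j-1)c_L$, which your outline does establish, rather than to chase the constant $(2^j-1)$, which is false in general.
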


\begin{proof}
From the proof of Theorem 3 from \cite{Lai1995} we have\footnote{Note that in \cite{Lai1995} the squared gain function of Daubechies' wavelet filter of length $L$ is defined as $H_L(\lambda)/2$.}
\begin{align*}
\frac{d}{d\lambda}G_L(\lambda)
&=-2\cos^{L-2}\left(\frac{\lambda}{2}\right)\binom{L-2}{L/2-1}(L-1)\sin^{L-2}\left(\frac{\lambda}{2}\right)\frac{\sin\lambda}{2}\\
&=-\binom{L-2}{L/2-1}\frac{L-1}{2^{L-2}}\sin^{L-1}(\lambda),
\end{align*}
where we use the double angle formula for the sine function. Since we have $H_L(\lambda)=G_L(\lambda+\pi)$, the first part of the lemma follows from \eqref{hj-formula} and the Leibniz rule. 

The latter part is a consequence of Stirling's formula.
\end{proof}

We can rewrite $\bar{\rho}_j(\theta)$ as
\begin{align*}
\bar{\rho}_j(\theta)
&=\sum_{i=1}^{J+1}R_{i}\int_{\Lambda_{-i}}D(\lambda)H_{j,L}(\lambda)\Pi(\lambda)e^{\sqrt{-1}\tau_J^{-1}(\theta-\theta_{i})\lambda}d\lambda
=:\sum_{i=1}^{J+1}\bar{\rho}_{j,(i)}(\theta).
\end{align*}

{
\begin{lemma}\label{lemma:H-bound}
It holds that
\[
H_{j,L}(\lambda)\leq 2^j,\qquad
D(\lambda)\leq\frac{1}{2\pi},\qquad
\Pi(\lambda)\leq1
\]
for all $\lambda\in\mathbb{R}$ and $j=1,2,\dots$.
\end{lemma}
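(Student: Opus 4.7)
The plan is to establish each of the three bounds separately by elementary computations.

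For the bound on $D(\lambda)$, I would use the identity $|e^{-\sqrt{-1}\lambda}-1|^2=4\sin^2(\lambda/2)$ together with the elementary inequality $|\sin x|\leq|x|$ applied at $x=\lambda/2$, which immediately gives $4\sin^2(\lambda/2)/\lambda^2\leq 1$, hence $D(\lambda)\leq 1/(2\pi)$.

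For the bound on $\Pi(\lambda)$ (understood in modulus, as $\Pi$ is complex-valued in general), I would compute $|1-\pi_\nu e^{\pm\sqrt{-1}\lambda}|^2=1-2\pi_\nu\cos\lambda+\pi_\nu^2\geq(1-\pi_\nu)^2$, where the inequality uses only $\cos\lambda\leq1$. Combining the estimates for $\nu=1,2$ and comparing with the numerator yields $|\Pi(\lambda)|\leq 1$.

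The bound on $H_{j,L}(\lambda)$ is the main work. My plan is to first show that $H_L(\lambda)+G_L(\lambda)=2$ for every $\lambda\in\mathbb{R}$ using the explicit formula for $H_L$ together with the relation $G_L(\lambda)=H_L(\lambda-\pi)$. Substituting $y=\sin^2(\lambda/2)$ and noting that $\sin^2((\lambda-\pi)/2)=\cos^2(\lambda/2)=1-y$ and $\cos^2((\lambda-\pi)/2)=y$, the claim reduces to the polynomial identity
\[
(1-y)^{L/2}\sum_{p=0}^{L/2-1}\binom{L/2-1+p}{p}y^p+y^{L/2}\sum_{p=0}^{L/2-1}\binom{L/2-1+p}{p}(1-y)^p=1
\]
for all $y\in[0,1]$, which is precisely the defining Bezout-type identity for the Daubechies maxflat polynomial (see, e.g., Chapter 6 of \cite{Daubechies1992}). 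Since both $H_L$ and $G_L$ are non-negative as squared moduli, this gives $0\leq H_L(\lambda),G_L(\lambda)\leq 2$ for every $\lambda$. Finally, plugging these bounds into the factorization \eqref{hj-formula} yields
\[
H_{j,L}(\lambda)=H_L(2^{j-1}\lambda)\prod_{i=0}^{j-2}G_L(2^i\lambda)\leq 2\cdot 2^{j-1}=2^j,
\]
which completes the proof.

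The only non-routine point is recognizing the Daubechies identity $H_L+G_L=2$; once this is in hand, the rest is straightforward. I do not expect a substantive obstacle, as each bound is a direct consequence of a well-known inequality or identity.
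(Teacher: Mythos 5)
Your proposal is correct and follows essentially the same route as the paper: bound $D$ via $|e^{-\sqrt{-1}\lambda}-1|^2=4\sin^2(\lambda/2)\leq\lambda^2$, bound $|\Pi|$ via $|1-\pi_\nu e^{\pm\sqrt{-1}\lambda}|\geq 1-\pi_\nu$, and bound $H_{j,L}$ by combining $H_L+G_L=2$ with the cascade formula \eqref{hj-formula}. The only difference is cosmetic: the paper simply cites Eq.~(69d) of \cite{PW2000} for $H_L(\lambda)+G_L(\lambda)=2$, whereas you verify it directly through the Daubechies--Bezout polynomial identity, which is a valid self-contained substitute.
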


\begin{proof}
Since $H_L(\lambda)+G_L(\lambda)=2$ for all $\lambda\in\mathbb{R}$ by Eq.(69d) of \cite{PW2000}, we have $H_L(\lambda)\leq2$ and $G_L(\lambda)\leq2$ for all $\lambda\in\mathbb{R}$. This implies the first inequality. 
The second inequality follows from the identity $D(\lambda)=|2\sin(\lambda/2)/\lambda|^2/(2\pi)$ and the Jordan inequality. The third inequality is evident from the definition of $\Pi(\lambda)$. 
\end{proof}
}

\begin{lemma}\label{lemma:RL}
There is a {universal} constant $A$ such that 
\[
\left|\int_{\pi/2^i}^{\pi/2^{i-1}}D(\lambda)H_{j,L}(\lambda)\Pi(\lambda)e^{\sqrt{-1}a\lambda}d\lambda\right|
\leq
{\frac{2^jc_LA}{|a|}}
\]
for any positive integers $i,j$, any even positive integer $L$ and any non-zero real number $a$.
\end{lemma}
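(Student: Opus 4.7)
The natural strategy is integration by parts using the oscillatory factor $e^{\sqrt{-1}a\lambda}$ as the differential; this is precisely how one extracts the $1/|a|$ decay. Writing $F(\lambda):=D(\lambda)H_{j,L}(\lambda)\Pi(\lambda)$ and integrating by parts over $[\pi/2^i,\pi/2^{i-1}]$ yields
\[
\left|\int_{\pi/2^i}^{\pi/2^{i-1}}F(\lambda)e^{\sqrt{-1}a\lambda}d\lambda\right|
\leq\frac{1}{|a|}\left(2\sup_{\lambda\in[\pi/2^i,\pi/2^{i-1}]}|F(\lambda)|+\int_{\pi/2^i}^{\pi/2^{i-1}}|F'(\lambda)|d\lambda\right),
\]
so the task reduces to bounding the right-hand parenthesized expression by a constant multiple of $2^jc_L$.

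The supremum of $|F|$ is controlled by Lemma \ref{lemma:H-bound}, giving $|F(\lambda)|\leq 2^j/(2\pi)$. For $|F'|$, the product rule produces three terms. The dominant one is $D(\lambda)|H_{j,L}'(\lambda)|\Pi(\lambda)$; using Lemmas \ref{dH-bound} and \ref{lemma:H-bound}, it is bounded pointwise by $(2^j-1)c_L/(2\pi)$, and since the interval of integration has length at most $\pi$, this term contributes at most a constant times $2^jc_L$. The remaining two terms, $|D'(\lambda)|H_{j,L}(\lambda)\Pi(\lambda)$ and $D(\lambda)H_{j,L}(\lambda)|\Pi'(\lambda)|$, require uniform bounds on $|D'|$ and $|\Pi'|$ over $(0,\pi]$.

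The main technical point is establishing those uniform bounds. For $D(\lambda)=2\sin^2(\lambda/2)/(\pi\lambda^2)$, a Taylor expansion at the origin shows $D'(\lambda)=-\lambda/(6\pi)+O(\lambda^3)$ as $\lambda\to 0$, so $D'$ extends continuously to $[0,\pi]$ and is uniformly bounded there. For $\Pi$, since $\pi_1,\pi_2<1$ are fixed, the denominator $(1-\pi_1e^{\sqrt{-1}\lambda})(1-\pi_2e^{-\sqrt{-1}\lambda})$ is bounded away from zero uniformly in $\lambda$, so $\Pi'$ is bounded on $(0,\pi]$ (with the bound depending on $\pi_1,\pi_2$). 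These two terms therefore contribute at most a constant times $2^j$, which is absorbed into $2^jc_L$ via the elementary inequality $c_L\geq 1$ for all even $L\geq 2$, easily checked from $c_L=\binom{L-2}{L/2-1}(L-1)/2^{L-2}$. The boundary term, also of order $2^j$, is absorbed in the same way, completing the argument.
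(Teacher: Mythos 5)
Your proof is correct and follows essentially the same route as the paper: integrate by parts against $e^{\sqrt{-1}a\lambda}$, bound $|H_{j,L}'|$ by $(2^j-1)c_L$ via Lemma \ref{dH-bound}, bound $H_{j,L}$, $D$, $\Pi$ via Lemma \ref{lemma:H-bound}, and observe that the derivative of the remaining factor $D\Pi$ is uniformly bounded, so all terms are of order $2^jc_L/|a|$. Your treatment is in fact slightly more explicit than the paper's (which simply asserts $\sup_\lambda|\frac{d}{d\lambda}\{D(\lambda)\Pi(\lambda)\}|<\infty$), including the observation $c_L\geq1$ used to absorb the non-$c_L$ terms.
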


\begin{proof}
Integration by parts yields
\begin{multline*}
\int_{\pi/2^{i}}^{\pi/2^{i-1}}D(\lambda)H_{j,L}(\lambda)\Pi(\lambda)e^{\sqrt{-1}a\lambda}d\lambda
=\frac{1}{\sqrt{-1}a}\left[D(\lambda)H_{j,L}(\lambda)\Pi(\lambda)e^{\sqrt{-1}a\lambda}\right]_{\pi/2^i}^{\pi/2^{i-1}}\\
-\frac{1}{\sqrt{-1}a}\int_{\pi/2^i}^{\pi/2^{i-1}}\frac{d}{d\lambda}\left\{D(\lambda)H_{j,L}(\lambda)\Pi(\lambda)\right\}e^{\sqrt{-1}a\lambda}d\lambda.
\end{multline*}
We can easily see that
$
\sup_{\lambda\in\mathbb{R}}\left|\frac{d}{d\lambda}\left\{D(\lambda)\Pi(\lambda)\right\}\right|<\infty,
$
hence the desired result follows from {Lemmas \ref{dH-bound}--\ref{lemma:H-bound}}.
\end{proof}

\begin{lemma}\label{lemma:zero}
Under the assumptions of Theorem \ref{theorem:main}(a), we have $\bar{\rho}_j(\vartheta_J)\to0$ as $J\to\infty$.
\end{lemma}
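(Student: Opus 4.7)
The plan is to decompose $\bar{\rho}_j(\vartheta_J) = \bar{\rho}_{j,(j)}(\vartheta_J) + \sum_{1 \le j' \le J+1,\,j' \ne j}\bar{\rho}_{j,(j')}(\vartheta_J)$ and dispatch the diagonal and off-diagonal pieces by different mechanisms: the diagonal term will shrink via cancellation from the oscillating exponential (supplied by the hypothesis on $\vartheta_J$), while the off-diagonal sum will shrink due to the concentration of the Daubechies squared gain function on $\Lambda_{-j}$ as $L \to \infty$.

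For the diagonal term, I would apply Lemma \ref{lemma:RL} to each of the two sub-intervals forming $\Lambda_{-j} = [-\pi/2^{j-1},-\pi/2^j) \cup (\pi/2^j,\pi/2^{j-1}]$ with $a = \tau_J^{-1}(\vartheta_J - \theta_j)$; the negative half is handled by the substitution $\lambda \mapsto -\lambda$, under which $D$ and $H_{j,L}$ are invariant and $\Pi$ is conjugated, so the same integration-by-parts estimate applies. Using $|R_j| \le 1$ and the fact that $j$ is fixed, this gives a bound of the form $|\bar{\rho}_{j,(j)}(\vartheta_J)| \le C_j\, c_L / |\tau_J^{-1}(\vartheta_J - \theta_j)|$. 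Since $c_L = O(\sqrt{L})$ by Lemma \ref{dH-bound} and $L^{-1/2}\tau_J^{-1}(\vartheta_J - \theta_j) \to \infty$ by hypothesis, this bound vanishes as $J \to \infty$.

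For the off-diagonal sum, I would bound each term trivially by $\int_{\Lambda_{-j'}} D(\lambda) H_{j,L}(\lambda) \Pi(\lambda) \, d\lambda$ (using $|R_{j'}| \le 1$ and $|e^{\sqrt{-1}\cdot}|=1$), and then exploit the pairwise disjointness of the $\Lambda_{-j'}$'s together with $\bigcup_{j'=1}^{J+1} \Lambda_{-j'} \subset [-\pi, \pi]$ to obtain
\[
\sum_{j' \ne j}|\bar{\rho}_{j,(j')}(\vartheta_J)|
\le \int_{[-\pi, \pi]\setminus\Lambda_{-j}} D(\lambda) H_{j,L}(\lambda)\Pi(\lambda)\,d\lambda.
\]
I would then send this integral to zero as $L \to \infty$ by dominated convergence, using the $L$-independent integrable dominating function $2^j D(\lambda)$ on $[-\pi,\pi]$ (by Lemma \ref{lemma:H-bound}) together with the pointwise limit $H_{j,L}(\lambda) \to 0$ on $[-\pi, \pi] \setminus \Lambda_{-j}$, inherited from the standard pointwise limits $H_L \to 2\cdot 1_{(\pi/2, \pi]}$ and $G_L \to 2\cdot 1_{[0, \pi/2)}$ on $[-\pi, \pi]$ (cf.\ \cite{Lai1995}) through the product formula $H_{j,L}(\lambda) = H_L(2^{j-1}\lambda) \prod_{i=0}^{j-2} G_L(2^i \lambda)$.

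The main obstacle is verifying this pointwise annihilation, which I would handle by a short case analysis on $\lambda \in [-\pi, \pi] \setminus \Lambda_{-j}$: if $|\lambda| < \pi/2^j$ then $|2^{j-1}\lambda| < \pi/2$ forces $H_L(2^{j-1}\lambda) \to 0$; otherwise there is a unique index $0 \le i \le j-2$ with $|\lambda| \in (\pi/2^{i+1}, \pi/2^i]$, and then $|2^i \lambda| \in (\pi/2, \pi]$ forces $G_L(2^i \lambda) \to 0$. Once this case analysis is in place, dominated convergence closes the argument and combining the two bounds proves the lemma.
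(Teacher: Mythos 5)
Your proposal is correct. The treatment of the diagonal term $i=j$ coincides with the paper's: both invoke Lemma \ref{lemma:RL} with $a=\tau_J^{-1}(\vartheta_J-\theta_j)$ (your remark that the negative half of $\Lambda_{-j}$ follows by the substitution $\lambda\mapsto-\lambda$, under which $D$ and $H_{j,L}$ are even and $\Pi$, $e^{\sqrt{-1}a\lambda}$ are conjugated, is a detail the paper leaves implicit), and both conclude via $c_L=O(\sqrt{L})$ and the hypothesis $L^{-1/2}\tau_J^{-1}(\vartheta_J-\theta_j)\to\infty$. Where you differ is the off-diagonal part: the paper first establishes the summable-in-$i$ bound $|\bar{\rho}_{j,(i)}(\vartheta_J)|\leq 2^j(\tau_{i-1}-\tau_i)$, applies dominated convergence over the index $i$ to reduce to a fixed $i\neq j$, and then uses bounded convergence on $\Lambda_{-i}$ together with the cited limit \eqref{lai}; you instead exploit the disjointness of the $\Lambda_{-i}$'s to merge all off-diagonal terms into a single integral over $[-\pi,\pi]\setminus\Lambda_{-j}$, dominated by $2^j D(\lambda)$ via Lemma \ref{lemma:H-bound}, and prove the almost-everywhere vanishing of $H_{j,L}$ off $\Lambda_{-j}$ directly from the product formula \eqref{hj-formula} by your case analysis (one factor, whose argument always lies in $[-\pi,\pi]$, tends to zero while the remaining factors stay bounded by $2$). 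This buys a one-step dominated-convergence argument that dispenses with the interchange of limit and the growing sum over $i$, at the cost of re-deriving the level-$j$ pointwise limit that the paper simply cites from \cite{Lai1995}; the only blemishes are cosmetic---you should write $|\Pi(\lambda)|\leq1$ rather than bound by the complex-valued $\Pi$, and note that the finitely many boundary points such as $|\lambda|=\pi/2^j$ missed by your case split are Lebesgue-null and hence harmless.
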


\begin{proof}
{
Since we have $|\bar{\rho}_{j,(i)}(\vartheta_J)|\leq 2^j(\tau_{i-1}-\tau_i)$ for every $i$ by Lemma \ref{lemma:H-bound}, 
}
it is enough to prove 
\begin{equation}\label{negligible}
\bar{\rho}_{j,(i)}(\vartheta_J)\to0
\end{equation}
as $J\to\infty$ for any fixed $i$ due to the dominated convergence theorem.
By Theorem 1 from \cite{Lai1995}, we have
\begin{equation}\label{lai}
\lim_{L\to\infty}H_{j,L}(\lambda)
=\left\{
\begin{array}{ll}
2^j  & \text{if }\lambda\in(\frac{\pi}{2^j},\frac{\pi}{2^{j-1}}),\\
0  & \text{if }\lambda\in[0,\frac{\pi}{2^j})\cup(\frac{\pi}{2^{j-1}},\pi].
\end{array}
\right.
\end{equation}
Therefore, \eqref{negligible} holds true if $i\neq j$ due to the bounded convergence theorem. On the other hand, if $i=j$, 
{
Lemma \ref{lemma:RL} implies that there is a universal constant $A>0$ such that
\[
\left|\bar{\rho}_{j,(i)}(\vartheta_J)\right|=\left|R_i\int_{\Lambda_{-i}}D(\lambda)H_{j,L}(\lambda)\Pi(\lambda)e^{\sqrt{-1}\tau_J^{-1}(\vartheta_J-\theta_{j})\lambda}d\lambda\right|
\leq\frac{A\tau_J c_L}{|\vartheta_J-\theta_j|},
\]
}
hence we obtain the desired result by assumption and $c_L=O(\sqrt{L})$. 
\end{proof}

\begin{lemma}\label{lemma:wccf}
Under the assumptions of Theorem \ref{theorem:main}(b), we have 
$
\bar{\rho}_j(\vartheta_J)\to 2^jR_{j}\int_{\Lambda_{-j}}D(\lambda)\Pi(\lambda)e^{\sqrt{-1}b\lambda}d\lambda
$
as $J\to\infty$.
\end{lemma}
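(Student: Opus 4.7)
My plan is to mimic the structure of the proof of Lemma \ref{lemma:zero}: split $\bar{\rho}_j(\vartheta_J)=\sum_{i=1}^{J+1}\bar{\rho}_{j,(i)}(\vartheta_J)$ and apply the dominated convergence theorem to this series, viewed as an integral with respect to counting measure on $\mathbb{N}$. The dominating sequence is the same as in Lemma \ref{lemma:zero}: Lemma \ref{lemma:H-bound} combined with the fact that $\Lambda_{-i}$ has Lebesgue measure $2(\tau_{i-1}-\tau_i)\cdot2\pi$ (up to constants) yields the uniform bound $|\bar{\rho}_{j,(i)}(\vartheta_J)|\leq 2^j(\tau_{i-1}-\tau_i)$, and $\sum_{i\geq1}2^j(\tau_{i-1}-\tau_i)=2^{j-1}$ by telescoping. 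So it suffices to compute the pointwise limit of $\bar{\rho}_{j,(i)}(\vartheta_J)$ as $J\to\infty$ for each fixed $i$.

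For $i\neq j$, exactly the argument in Lemma \ref{lemma:zero} applies: Lai's theorem \eqref{lai} gives $H_{j,L}(\lambda)\to 0$ pointwise on $\Lambda_{-i}$ (since $\Lambda_{-i}\subset[-\pi,\pi]$ and $\Lambda_{-i}$ is disjoint from $\Lambda_{-j}$, so pointwise convergence falls in the second case of \eqref{lai}). Combined with the uniform bounds from Lemma \ref{lemma:H-bound} and the fact that $|e^{\sqrt{-1}\tau_J^{-1}(\vartheta_J-\theta_i)\lambda}|=1$, bounded convergence gives $\bar{\rho}_{j,(i)}(\vartheta_J)\to 0$.

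For $i=j$, the key difference from Lemma \ref{lemma:zero} is that now $\tau_J^{-1}(\vartheta_J-\theta_j)\to b\in\mathbb{R}$ instead of diverging, so the oscillatory factor $e^{\sqrt{-1}\tau_J^{-1}(\vartheta_J-\theta_j)\lambda}$ converges uniformly on the bounded set $\Lambda_{-j}$ to $e^{\sqrt{-1}b\lambda}$. At the same time, Lai's theorem \eqref{lai} gives $H_{j,L}(\lambda)\to 2^j$ pointwise on $\Lambda_{-j}$ (using that $H_{j,L}$ is even so the negative half is handled symmetrically), and Lemma \ref{lemma:H-bound} provides the uniform dominator $2^j D(\lambda)\Pi(\lambda)$ which is integrable on $\Lambda_{-j}$. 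Bounded convergence then yields
\[
\bar{\rho}_{j,(j)}(\vartheta_J)\to 2^jR_j\int_{\Lambda_{-j}}D(\lambda)\Pi(\lambda)e^{\sqrt{-1}b\lambda}d\lambda.
\]

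Summing the pointwise limits (all zero except at $i=j$) via dominated convergence over the counting measure gives the claimed formula. I do not anticipate any real obstacle here beyond being careful that the evenness of $H_{j,L}$ (which follows because the filter coefficients $h_{j,p}$ are real, so $H_{j,L}(\lambda)=H_{j,L}(-\lambda)$) is used to extend the one-sided statement of Lai's theorem \eqref{lai} to the full symmetric band $\Lambda_{-j}$.
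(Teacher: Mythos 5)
Your proposal is correct and follows essentially the same route as the paper: the paper's proof likewise reduces to the single term $i=j$ (the terms $i\neq j$ having been handled by the dominated/bounded convergence argument of Lemma \ref{lemma:zero}) and then concludes by \eqref{lai} and the bounded convergence theorem, using that $\tau_J^{-1}(\vartheta_J-\theta_j)\to b$ makes the oscillatory factor converge to $e^{\sqrt{-1}b\lambda}$. Your added remark about the evenness of $H_{j,L}$ is a harmless explicit justification of a detail the paper leaves implicit.
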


\begin{proof}
From the above argument it suffices to prove
\[
R_{j}\int_{\Lambda_{-j}}D(\lambda)H_{j,L}(\lambda)\Pi(\lambda)e^{\sqrt{-1}\tau_J^{-1}(l\tau_J-\theta_{j})\lambda}d\lambda
\to 2^jR_{j}\int_{\Lambda_{-j}}D(\lambda)\Pi(\lambda)e^{\sqrt{-1}b\lambda}d\lambda,
\]
which follows from \eqref{lai} and the bounded convergence theorem.
\end{proof}

\begin{proof}[\upshape{\textbf{Proof of Theorem \ref{theorem:main}}}]
Since $\vartheta_J\in\mathcal{G}_J$, it can be written as $\vartheta_J=l\tau_J$ for some $l=l_J\in\mathcal{L}_J$. For the simplicity of presentation we assume that $l_J\in\mathcal{L}_J^+$ for all $J$. 

Since $l\tau_J\in(-\delta,\delta)$ for every $l=l_J$, {every subsequence of $(l\tau_J)_{J\geq1}$ has a converging subsequence}. Therefore, without loss of generality we may assume that $l\tau_J\to\theta$ as $J\to\infty$ for some $\theta\in[-\delta,\delta]$. Note that $\theta=\theta_j$ for case (b) by assumption. 

Set $\mathfrak{c}=0$ for case (a) and
$
\mathfrak{c}=2^jR_{j}\int_{\Lambda_{-j}}D(\lambda)\Pi(\lambda)e^{\sqrt{-1}b\lambda}d\lambda
$
for case (b). We need to prove $\widehat{\boldsymbol{\rho}}_{{J-j+1}}(l\tau_J)_t\xrightarrow{ucp}\Sigma_t(\theta)\mathfrak{c}$ as $J\to\infty$. By {Lemmas \ref{lemma:ucp}--\ref{lemma:c-tight}} it is enough to show that
$\widehat{\boldsymbol{\rho}}_{{J-j+1}}(l\tau_J)_{t}\to\Sigma_{t}(\theta)\mathfrak{c}$ as $J\to\infty$ for any fixed $t>0$. 
{Let $N=N_J$ be a positive integer depending on $J$ so that $\tau_J^wN\to \mathfrak{a}$ for some $w\in(0,1)$ and $\mathfrak{a}\in(0,\infty)$, }
we decompose $\widehat{\boldsymbol{\rho}}_{{J-j+1}}(l\tau_J)_{t}$ as
\begin{align*}
\widehat{\boldsymbol{\rho}}_{{J-j+1}}(l\tau_J)_{t}
&=\left(\widehat{\boldsymbol{\rho}}_{{J-j+1}}(l\tau_J)_{t}-E\left[\widehat{\boldsymbol{\rho}}_{{J-j+1}}(l\tau_J)_{t}|X\right]\right)\\
&\qquad+\left(E\left[\widehat{\boldsymbol{\rho}}_{{J-j+1}}(l\tau_J)_{t}|X\right]-\frac{\tau_J^{-1}}{n-l-L_j+1}\sum_{k=L_j-1}^{\lfloor t\tau_J^{-1}\rfloor-l}\sigma^1_{(k-L_j-N)_+\tau_J}\sigma^2_{\left\{(k-L_j-N)_++l\right\}\tau_J}\zeta^1_k\zeta^2_{k+l}\right)\\
&\qquad+\frac{\tau_J^{-1}}{n-l-L_j+1}\sum_{k=L_j-1}^{\lfloor t\tau_J^{-1}\rfloor-l}\sigma^1_{(k-L_j-N)_+\tau_J}\sigma^2_{\left\{(k-L_j-N)_++l\right\}\tau_J}\zeta^1_k\zeta^2_{k+l}\\
&=:\mathbf{I}_J+\mathbf{II}_J+\mathbf{III}_J.
\end{align*}

First, since $\|\mathbf{I}_J\|_2=O(\sqrt{\tau_J}L)=o(1)$ as $J\to\infty$ by Lemma \ref{lemma:lm-term}, we obtain $\mathbf{I}_J\to^p0$. 

Next, noting \eqref{conditional}, we decompose $\mathbf{II}_J$ as
\begin{align*}
\mathbf{II}_J
&=\frac{\tau_J^{-1}}{n-l-L_j+1}\sum_{k=L_j-1}^{\lfloor t\tau_J^{-1}\rfloor-l}\left\{\left(Z^1_k-\sigma^1_{(k-L_j-N)_+\tau_J}\zeta^1_k\right)Z^2_{k+l}\right.\\
&\left.\hphantom{\frac{\tau_J^{-1}}{n-l-L_j+1}\sum_{k=L_j-1}^{\lfloor t\tau_J^{-1}\rfloor-l}}+\sigma^1_{(k-L_j-N)_+\tau_J}\left(Z^2_{k+l}-\sigma^2_{\left\{(k-L_j-N)_++l\right\}\tau_J}\zeta^2_{k+l}\right)\right\}\\
&=\mathbf{II}_J'+\mathbf{II}_J''.
\end{align*}
We have
$
\left|\mathbf{II}_J'\right|\lesssim\sum_{k=L_j-1}^{\lfloor t\tau_J^{-1}\rfloor-l}\left|Z^1_k-\sigma^1_{(k-L_j-N)_+\tau_J}\zeta^1_k\right|\left|Z^2_{k+l}\right|
$
Since we have
\begin{align*}
&Z^1_k-\sigma^1_{(k-L_j-N)_+\tau_J}\zeta^1_k\\
&=(1-\pi_1)\left\{\sum_{\alpha=0}^{k\wedge N}\pi_1^\alpha\sum_{p=0}^{(L_j-1)\wedge(k-\alpha)}h_{j,p}\int_{(k-p-\alpha)\tau_J}^{(k-p-\alpha+1)\tau_J}(\sigma^1_s-\sigma^1_{(k-L_j-N)_+\tau_J})dB^1_s\right.\\
&\qquad\left.+\sum_{\alpha=N+1}^{k}\sum_{p=0}^{(L_j-1)\wedge(k-\alpha)}h_{j,p}\pi_1^\alpha\left(\Delta_{k-p-\alpha}X^1-\sigma^1_{(k-L_j-N)_+\tau_J}\Delta_{k-p-\alpha}B^1\right)\right\},
\end{align*}
it holds that
\begin{align*}
\left\|Z^1_k-\sigma^1_{i\tau_m}\xi^1_k\right\|_2
&\lesssim\sqrt{\tau_J}\left\|\sup_{s\in[(k-L_j-N)_+\tau_J,(k+1)\tau_J)}\left|\sigma^1_s-\sigma^1_{(k-L_j-N)_+\tau_J}\right|\right\|_2
+\sqrt{\tau_J}\pi_1^{N}
\end{align*}
by the triangle inequality and \eqref{normalize}. On the other hand, since we can rewrite $Z^2_{k+l}$ as
\[
Z^2_{k+l}=(1-\pi_2)\sum_{\beta=0}^{k+l}\pi_2^\beta\sum_{q=0}^{(L_j-1)\wedge(k+l-\beta)}h_{j,q}\Delta_{k+l-q-\beta}X^2,
\]
we have $\|Z^2_{k+l}\|_2\lesssim\sqrt{\tau_J}$ by the triangle inequality, the boundedness of $\sigma^2$ and \eqref{normalize}. Hence we obtain
\begin{align*}
\left\|\mathbf{II}_J'\right\|_1\lesssim\tau_J\sum_{k=L_j-1}^{\lfloor t\tau_J^{-1}\rfloor-l}\left\|\sup_{s\in[(k-L_j-N)_+\tau_J,(k+1)\tau_J)}\left|\sigma^1_s-\sigma^1_{(k-L_j-N)_+\tau_J}\right|\right\|_2+\pi_1^{N}
\end{align*}
by the Schwarz inequality. Since $\sigma^1$ is c\`adl\`ag and bounded, the bounded convergence theorem yields $\left\|\mathbf{II}_J'\right\|_1\to0$, hence $\mathbf{II}_J'\to^p0$. By an analogous argument we can prove $\mathbf{II}_J''\to^p0$, hence we obtain $\mathbf{II}_J\to^p0$.

Finally we prove $\mathbf{III}_J\to^p\Sigma_{t}(\theta)\mathfrak{c}$. It suffices to prove
\begin{equation}\label{p-ut}
\sup_{0\leq s\leq t}\left|\sum_{k=L_j-1}^{\lfloor s\tau_J^{-1}\rfloor-l}\sigma^1_{(k-L_j-N)_+\tau_J}\sigma^2_{\left\{(k-L_j-N)_++l\right\}\tau_J}\zeta^1_k\zeta^2_{k+l}-\Sigma_{s}(\theta)\mathfrak{c}\right|\to^p0.
\end{equation}
Define the process $A^J=(A^J_s)_{s\geq0}$ by 
$
A^J_s=\sum_{k=L_j-1}^{\lfloor s\tau_J^{-1}\rfloor-l}\left(\zeta^1_k\zeta^2_{k+l}-E\left[\zeta^1_k\zeta^2_{k+l}\right]\right).
$
$A^J$ is obviously of (locally) bounded variation. Moreover, since it holds that
$
\|\zeta^\nu_k\|_2\leq(1-\pi_\nu)\sum_{\alpha=0}^\infty\pi_\nu^\alpha\sqrt{\sum_{p=0}^{L_j-1}h_{j,p}^2\tau_J}=\sqrt{\tau_J}
$
by the triangular inequality and \eqref{normalize}, we have
$
E\left[\sum_{k=L_j-1}^{\lfloor s\tau_J^{-1}\rfloor-l}\left|\zeta^1_k\zeta^2_{k+l}-E\left[\zeta^1_k\zeta^2_{k+l}\right]\right|\right]=O(1)
$
for any $s>0$ by the Schwarz inequality. Hence $A^J$ is P-UT by 6.6 from chapter VI of \cite{JS2003}. Moreover, the process 
\[
\left(\sigma^1_{(\lfloor s\tau_J^{-1}\rfloor-L_j-N)_+\tau_J}\sigma^2_{\left\{(\lfloor s\tau_J^{-1}\rfloor-L_j-N)_++l\right\}\tau_J}\right)_{s\geq0}
\] 
converges in probability to the process $(\sigma^1_s\sigma^2_{s+\theta})_{s\geq0}$ for the Skorokhod topology by Proposition 6.37 from chapter VI of \cite{JS2003}. Therefore, according to Theorem 6.22 from chapter VI of \cite{JS2003}, \eqref{p-ut} follows once we show that
\[
\sup_{0\leq s\leq t}\left|\sum_{k=L_j-1}^{\lfloor s\tau_J^{-1}\rfloor-l}\zeta^1_k\zeta^2_{k+l}-\mathfrak{c}(s-\theta)_+\right|\to^p0.
\]
By Lemma \ref{lemma:c-tight} it is enough to prove
$
\sum_{k=L_j-1}^{\lfloor s\tau_J^{-1}\rfloor-l}\zeta^1_k\zeta^2_{k+l}\to^p\mathfrak{c}(s-\theta)_+
$
for any fixed $s$. Moreover, by Lemmas \ref{covariance} and \ref{lemma:zero}--\ref{lemma:wccf} this follows from
\begin{equation}\label{GQF}
\sum_{k=L_j-1}^{\lfloor s\tau_J^{-1}\rfloor-l}\left(\zeta^1_k\zeta^2_{k+l}-E\left[\zeta^1_k\zeta^2_{k+l}\right]\right)\to^p0.
\end{equation}
To prove \eqref{GQF}, let us define the random vector $\boldsymbol{\zeta}$ by
\begin{equation}\label{z}
\boldsymbol{\zeta}=\left(\zeta^1_{L_j-1},\cdots,\zeta^1_{\lfloor\tau_J^{-1}s\rfloor-l},\zeta^2_{L_j-1+l},\cdots,\zeta^2_{\lfloor\tau_J^{-1}s\rfloor}\right)^\top.
\end{equation}
Then we have
$
\sum_{k=L_j-1}^{\lfloor s\tau_J^{-1}\rfloor-l}\left(\zeta^1_k\zeta^2_{k+l}-E\left[\zeta^1_k\zeta^2_{k+l}\right]\right)=\boldsymbol{\zeta}^\top A_J\boldsymbol{\zeta}-E\left[\boldsymbol{\zeta}^\top A_J\boldsymbol{\zeta}\right],
$
where
\[
A_J=\left(
\begin{array}{cc}
0  &  \mathsf{{E}}_{\lfloor\tau_J^{-1}s\rfloor-l-L_j+2}    \\
\mathsf{{E}}_{\lfloor\tau_J^{-1}s\rfloor-l-L_j+2}  &   0
\end{array}
\right).
\]
Therefore, the proof of \eqref{GQF} is completed once we show that $\variance[\boldsymbol{\zeta}^\top A_J\boldsymbol{\zeta}]\to0$ as $J\to\infty$. Since $\boldsymbol{\zeta}$ is centered Gaussian, we have $\variance[\boldsymbol{\zeta}^\top A_J\boldsymbol{\zeta}]=2\trace[(\Sigma_JA_J)^2]$ from {Eq.(4.4) of \cite{Davies1973}}, 
where $\Sigma_J$ denotes the covariance matrix of $\boldsymbol{\zeta}$. Since $\trace[(\Sigma_JA_J)^2]\leq\|\Sigma_JA_J\|_F^2\leq\|\Sigma_J\|_F^2$ by Appendix II(ii)--(iii) from \cite{Davies1973}, it is enough to prove $\|\Sigma_J\|_F^2\to0$. 

First, by Lemma \ref{covariance}(a) and \eqref{l1h} we have
$
\sum_{k,k'=L_j-1}^{\lfloor s\tau_J^{-1}\rfloor-l}\left(\left|E\left[\zeta^1_k\zeta^1_{k'}\right]\right|^2+\left|E\left[\zeta^1_{k+l}\zeta^1_{k'+l}\right]\right|^2\right)
=O\left(\tau_JL_j^2\right)=o(1).
$
Next, by Lemma \ref{covariance}(b) we have
$
\sum_{k,k'=L_j-1}^{\lfloor s\tau_J^{-1}\rfloor-l}\left|E\left[\zeta^1_k\zeta^2_{k'+l}\right]\right|^2=\tau_J^2\sum_{k,k'=L_j+N}^{\lfloor s\tau_J^{-1}\rfloor-l}\bar{\rho}_j((k'+l-k)\tau_J)^2+o(1).
$
Therefore, to complete the proof of $\|\Sigma_J\|_F^2\to0$, it is enough to show that
\begin{equation}\label{rho2vanish}
\tau_J^2\sum_{k,k'=L_j+N}^{\lfloor s\tau_J^{-1}\rfloor-l}\bar{\rho}_j((k'+l-k)\tau_J)^2\to0.
\end{equation}
We rewrite the target quantity as
\begin{align*}
&\tau_J^2\sum_{k,k'=L_j+N}^{\lfloor s\tau_J^{-1}\rfloor-l}\bar{\rho}_j((k'+l-k)\tau_J)^2\\
&=\tau_J^2\sum_{i_1,i_2=1}^{J+1}\sum_{k,k'=L_j+N}^{\lfloor s\tau_J^{-1}\rfloor-l}\prod_{r=1}^2R_{i_r}\int_{\Lambda_{-i}}D(\lambda)H_{j,L}\left(\lambda\right)\Pi(\lambda)e^{\sqrt{-1}\lambda(k'+l-k-\theta_{i_r}\tau_J^{-1})}d\lambda\\
&=:\sum_{i_1,i_2=1}^{J+1}\Xi_J(i_1,i_2).
\end{align*}
Since we have $\left|\Xi_J(i_1,i_2)\right|\lesssim(\tau_{i_1-1}-\tau_{i_1})(\tau_{i_2-1}-\tau_{i_2})$, we obtain \eqref{rho2vanish} by the dominated convergence theorem once we prove $\Xi_J(i_1,i_2)\to0$ for any fixed $i_1,i_2$. By Lemma \ref{lemma:RL} we have
\begin{align*}
\left|\Xi_J(i_1,i_2)\right|
&\lesssim\tau_J+\tau_J^2\sum_{\begin{subarray}{c}
k,k'=N+L_j\\
\theta_{i_1},\theta_{i_2}\neq (k'+l-k)\tau_J
\end{subarray}}^{\lfloor \tau_J^{-1}t\rfloor-l}\prod_{r=1}^2\frac{c_L}{\left|k'+l-k-\theta_{i_r}\tau_J^{-1}\right|}\\
&\lesssim\tau_J+\tau_J^2c_L^2\sum_{r=1}^2\sum_{\begin{subarray}{c}
k,k'=N+L_j\\
\theta_{i_r}\neq (k'+l-k)\tau_J
\end{subarray}}^{\lfloor \tau_J^{-1}t\rfloor-l}\frac{1}{\left|k'+l-k-\theta_{i_r}\tau_J^{-1}\right|^2}
\lesssim\tau_J+\tau_Jc_L^2.
\end{align*}
Since $c_L=O(\sqrt{L})$, we conclude that $\Xi_J(i_1,i_2)\to^p0$.

Consequently, we complete the proof of the theorem.
\end{proof}

\subsection{Proof of Theorem \ref{HRY}}\label{proof:HRY}

Similarly to Section \ref{proof:theorem1}, a localization procedure allows us to assume that both $\sigma^1$ and $\sigma^2$ are bounded as well as there is a constant $K>0$ such that 
\begin{equation}\label{holder}
|\sigma^1_t-\sigma^1_s|+|\sigma^2_t-\sigma^2_s|\leq K|t-s|^\gamma
\end{equation}
for any $t,s\geq0$. 

{
For the proof it is convenient to introduce the Orlicz norms based on the functions $\psi_p(x)=e^{x^p}-1$, $x\in\mathbb{R}$, for $p\geq1$: For a random variable $Y$, we define
\[
\|Y\|_{\psi_p}:=\inf\left\{C>0:E\left[\psi_p\left(\frac{|Y|}{C}\right)\right]\leq1\right\}.
\]
In this paper only the cases $p=1$ and $p=2$ are necessary. We refer to Section 2.2 of \cite{VW1996} for an exposition of Orlicz norms.  

{
Before starting the main body of the proof, we give an overview of our strategy. 
As is suggested by the similarity of Theorem \ref{HRY} and \cite[Theorem 1]{HRY2013}, we adopt a strategy analogous to the proof of the latter result. 
Since \cite[Propositions 3--4]{HRY2013} play key roles in its proof, we need to establish their analogs in our context. 
In fact, an analogous result to \cite[Proposition 3]{HRY2013} has already been obtained in Theorem \ref{theorem:main}(b), so we only need to prove an analog of \cite[Proposition 4]{HRY2013}. 
More precisely, we will show the following result:
\begin{proposition}\label{lemma:ld}
Under the assumptions of Theorem \ref{HRY}, we have
$
\max_{\theta\in\mathcal{G}_{J}:v_{J}^{-1}|\theta-\theta_j|> \varepsilon}\left|\widehat{\rho}_{{J-j+1}}(\theta)\right|\to^p0
$
as $J\to\infty$ for any $\varepsilon>0$.
\end{proposition}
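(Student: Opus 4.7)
My plan is to adapt the strategy of Proposition 4 of \cite{HRY2013}, upgrading the pointwise convergence of Theorem \ref{theorem:main}(a) to uniform convergence over the grid $\mathcal{G}_J$, whose cardinality is of order $\tau_J^{-1}$. The assumption $L^{-1/2}\tau_J^{-1}v_J \to \infty$ already forces $L^{-1/2}\tau_J^{-1}|\theta-\theta_j|\to\infty$ uniformly on $\{\theta\in\mathcal{G}_J:|\theta-\theta_j|>\varepsilon v_J\}$, so Theorem \ref{theorem:main}(a) delivers the pointwise conclusion; the remaining task is to obtain moment control strong enough to survive a union bound over this grid.

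After the usual localization reducing to bounded $\sigma^\nu$, I would retain $\theta$ as a free parameter in the decomposition $\widehat{\rho}_{J-j+1}(\theta)=\mathbf{I}_J(\theta)+\mathbf{II}_J(\theta)+\mathbf{III}_J(\theta)$ used in the proof of Theorem \ref{theorem:main}. The deterministic skeleton $\bar{\rho}_j(\theta)$ embedded in $\mathbf{III}_J(\theta)$ must be shown to vanish uniformly on the relevant set, which is a quantitative strengthening of Lemma \ref{lemma:zero}: Lemma \ref{lemma:RL} supplies $|\bar{\rho}_{j,(j)}(\theta)|\lesssim \tau_J c_L/|\theta-\theta_j|\lesssim\sqrt{L}/(\varepsilon\tau_J^{-1}v_J)\to 0$ uniformly in $\theta$, while for $i\neq j$ the bound $|\bar{\rho}_{j,(i)}(\theta)|\lesssim 2^j(\tau_{i-1}-\tau_i)$ is $\theta$-free, absolutely summable in $i$, and has pointwise limit zero by \eqref{lai}, so dominated convergence closes this part.

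The core difficulty lies in the random fluctuation. Conditionally on the missing-observation indicators, the leading piece $\mathbf{III}_J(\theta)-E[\mathbf{III}_J(\theta)\mid\cdots]$ is a centered Gaussian quadratic form $\boldsymbol{\zeta}^\top A_J(\theta)\boldsymbol{\zeta}-E[\cdot]$ in the vector of \eqref{z}. By a Hanson--Wright-type inequality its $\psi_1$ Orlicz norm is bounded by $\|\Sigma_J^{1/2}A_J(\theta)\Sigma_J^{1/2}\|_F\lesssim\|\Sigma_J\|_F$, and the estimate $\|\Sigma_J\|_F=O(\tau_J^{\eta})$ for some $\eta>0$ is essentially what was established inside the proof of Theorem \ref{theorem:main} under the stronger hypothesis $L^2\tau_J^\kappa\to 0$. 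The residuals $\mathbf{I}_J(\theta)$ and $\mathbf{II}_J(\theta)$ admit analogous sub-exponential controls: Lemma \ref{lemma:lm-term} promoted to $\psi_1$ via the Gaussian-chaos structure of the wavelet coefficients, and \eqref{holder} for the volatility increments. Combining these with the Orlicz maximal inequality $\|\max_{\theta\in\mathcal{G}_J}|Y_\theta|\|_{\psi_1}\lesssim\log(1+|\mathcal{G}_J|)\max_\theta\|Y_\theta\|_{\psi_1}$ (Lemma 2.2.2 of \cite{VW1996}) yields an overall bound of order $\log(\tau_J^{-1})\cdot O(\tau_J^{\eta})=o(1)$, which together with the uniform vanishing of the skeleton proves the proposition.

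The hardest step will be promoting the $L^2$-type estimates from the proof of Theorem \ref{theorem:main} to genuine $\psi_1$ Orlicz bounds uniformly in $\theta$, especially for $\mathbf{I}_J(\theta)$, where Bernoulli randomness in the missing indicators is coupled with the Gaussian randomness of $X$. I expect to resolve this by iterated conditioning, using Hoeffding-type inequalities to bound the conditional $\psi_1$ norm given $X$, then handling the remaining Gaussian factor in $\psi_2$ and invoking the Orlicz-space H\"older inequality $\|YZ\|_{\psi_1}\leq\|Y\|_{\psi_2}\|Z\|_{\psi_2}$.
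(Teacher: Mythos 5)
Your overall architecture matches the paper's: split $\widehat{\rho}_{{J-j+1}}(\theta)$ into the previous-tick interpolation error and $E[\widehat{\rho}_{{J-j+1}}(\theta)\mid X]$, kill the deterministic skeleton uniformly on $\{\theta\in\mathcal{G}_J: v_J^{-1}|\theta-\theta_j|>\varepsilon\}$ via Lemma \ref{lemma:RL} together with the $\theta$-free bounds and \eqref{lai} (this is exactly how the paper argues, through Lemmas \ref{covariance}(b) and \ref{lemma:zero}), and control the stochastic fluctuation by Gaussian-chaos $\psi_1$ bounds plus the log-cardinality maximal inequality of \cite{VW1996}. However, there is a genuine gap at the dominant term. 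The process $\mathbf{III}_J$ from the proof of Theorem \ref{theorem:main} is $\frac{\tau_J^{-1}}{n-l-L_j+1}\sum_k \sigma^1_{(k-L_j-N)_+\tau_J}\sigma^2_{\{(k-L_j-N)_++l\}\tau_J}\zeta^1_k\zeta^2_{k+l}$: the volatility factors are random, vary with $k$, and are driven by the same Brownian motions as the $\zeta^\nu_k$'s, so $\mathbf{III}_J(\theta)-E[\cdots]$ is \emph{not} a centered Gaussian quadratic form $\boldsymbol{\zeta}^\top A_J(\theta)\boldsymbol{\zeta}-E[\boldsymbol{\zeta}^\top A_J(\theta)\boldsymbol{\zeta}]$, and conditioning on the missing-observation indicators does not help (they are already averaged out once one conditions on $X$; the obstruction is $\sigma$, not the indicators). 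A Hanson--Wright-type bound therefore does not apply to this term as written. In the law of large numbers of Theorem \ref{theorem:main} this difficulty was bypassed by the P-UT/stochastic-integral argument (Theorem 6.22 of chapter VI of \cite{JS2003}), which yields only convergence in probability and cannot survive a union bound over the $O(\tau_J^{-1})$ points of $\mathcal{G}_J$.

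The missing idea is the paper's blocking/volatility-freezing step: partition $[0,n\tau_J]$ into blocks $I_{m,N}(i)$ of length $\tau_m$ with $m=\{(1-\kappa)\wedge\frac{1}{2}\}\frac{J}{2}$, freeze $\sigma^1,\sigma^2$ at the left endpoint of each block, so that the frozen (bounded, after localization) volatilities factor out of each block and the remaining block sums $\sum_{k\in I_{m,N}(i)}(\zeta^1_k\zeta^2_{k+l}-E[\zeta^1_k\zeta^2_{k+l}])$ are genuine Gaussian quadratic forms whose $\psi_1$ norms are controlled by Frobenius norms (Lemmas \ref{hyper} and \ref{gqf}) and then summed over the $M_J$ blocks before applying the maximal inequality. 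This is also precisely where the $\gamma$-H\"older continuity of the volatilities enters: the freezing error over a window of length about $\tau_m$ must be raised to an arbitrarily large power $r$ to beat the factor $\#\mathcal{G}_J$ in the union bound, which forces $\tau_m\to0$ at a polynomial rate; neither a global freezing nor the per-$k$ freezing inherited from Theorem \ref{theorem:main} produces a quadratic form to which your concentration step applies. Your treatment of the interpolation term $\mathbf{I}_J(\theta)$ is likewise only sketched (the paper instead uses arbitrarily high moments via the $m$-dependence inequality of Lemma \ref{lemma:DL} and Lemma \ref{BM-max}, leading to Lemma \ref{lemma:lm-term2}, which already suffices for the union bound since $2/r<1-\kappa$ is available), but that part looks repairable; the decoupling of the stochastic volatility from the Gaussian chaos is the substantive omission.
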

The strategy of the proof of Proposition \ref{lemma:ld} is as follows. By symmetry and the triangular inequality, it suffices to prove the following equations:
\begin{align}
&\max_{l\in\mathcal{L}^+_{J}:v_{J}^{-1}|l\tau_J-\theta_j|> \varepsilon}\left|\widehat{\rho}_{{J-j+1}}(l\tau_J)-E\left[\widehat{\rho}_{{J-j+1}}(l\tau_J)|X\right]\right|\to^p0,\label{LD-1}\\
&\max_{l\in\mathcal{L}^+_{J}:v_{J}^{-1}|l\tau_J-\theta_j|> \varepsilon}\left|E\left[\widehat{\rho}_{{J-j+1}}(l\tau_J)|X\right]\right|\to^p0.\label{LD-2}
\end{align}
We prove \eqref{LD-1} in Section \ref{sec:proofLD-1}, where the proof is achieved by establishing higher moment bounds for the interpolation errors $\widehat{\rho}_{{J-j+1}}(l\tau_J)-E\left[\widehat{\rho}_{{J-j+1}}(l\tau_J)|X\right]$ in Lemma \ref{lemma:lm-term2}. 

Meanwhile, we prove \eqref{LD-2} in Section \ref{sec:proofLD-2}. 
The proof of \eqref{LD-2} is more involved but described as follows. 
First, we divide the interval $[0,n\tau_J]$ into sub-blocks of the length $\tau_m$ (plus negligible sub-blocks appearing at the edge of the interval) and approximate the volatility processes by the simple processes that are obtained by freezing the volatility processes at the beginning of each sub-block. 
More formally, for any positive integers $m\leq J$, $N$ and $i$, we set
\begin{align}
I_{m,N}(i)=I_{m,N}^{(J)}(i)
&=\{k\in\mathbb{Z}_+:(k-L_j-N)\tau_J\in[i\tau_m,(i+1)\tau_m)\}\nonumber\\
&=\{2^{J-m}i+L_j+N,2^{J-m}i+L_j+N+1,\dots,2^{J-m}(i+1)+L_j+N-1\}.\label{notation-I}
\end{align} 
Then we approximate $Z_k^\nu$ with $k\in I_{m,N}(i)$ by $\sigma^\nu_{i\tau_m}\zeta^\nu_k$. 
As a result, $E\left[\widehat{\rho}_{{J-j+1}}(l\tau_J)|X\right]$ is approximated by a sum of random variables of the form $\sum_{k\in I_{m,N}(i)}\zeta_k^\nu\zeta_{k+l}^\nu$ multiplied by a bounded random variable. 
Here, we need to carefully choose the integers $m$ and $N$ so that the above approximation is good enough while the exponential moment bounds for the approximators derived in the next step are sufficiently sharp. 
We remark that this type of procedure (the so-called blocking technique) is standard in the literature on limit theorems for power variations (see e.g.~the proof of Theorems 3--4 from \cite{BNCP2011}). 

In the next step, we prove the variables $\sum_{k\in I_{m,N}(i)}\zeta_k^\nu\zeta_{k+l}^\nu$ are negligible. For this purpose we decompose them into $\sum_{k\in I_{m,N}(i)}(\zeta_k^\nu\zeta_{k+l}^\nu-E[\zeta_k^\nu\zeta_{k+l}^\nu])$'s and $\sum_{k\in I_{m,N}(i)}E[\zeta_k^\nu\zeta_{k+l}^\nu]$'s. 
The negligibility of the latter ones immediately follows from the estimates established in the proof of Theorem \ref{theorem:main}. Specifically, Lemmas \ref{covariance}(b) and \ref{lemma:zero} can be used.   
In the meantime, we show that the former ones are negligible by establishing their exponential moment bounds in Lemma \ref{gqf}. 
}

{\subsubsection{Proof of (\ref{LD-1})}\label{sec:proofLD-1}}

{As is remarked in the above, we prove \eqref{LD-1} by establishing higher moment bounds for the variables $\widehat{\rho}_{{J-j+1}}(l\tau_J)-E\left[\widehat{\rho}_{{J-j+1}}(l\tau_J)|X\right]$ (cf.~Lemma \ref{lemma:lm-term}).}
For this purpose we will use the following moment inequality for $m$-dependent variables obtained as a consequence of Theorem 2 of \cite{DL1999}. Recall that a sequence $(Y_k)_{k\in\mathbb{N}}$ of random variables is said to be \textit{$m$-dependent} for some integer $m\geq0$ if $\{Y_k:k\leq i\}$ and $\{Y_k:k\geq j\}$ are independent for any positive integers $i,j$ such that $j-i>m$. 
\begin{lemma}\label{lemma:DL}
Let $(Y_k)_{k\in\mathbb{N}}$ be a sequence of centered random variables which is $m$-dependent for some $m\in\mathbb{N}$. Suppose that there is a constant $A>0$ and a positive even integer $q$ such that $E[(Y_k)^{q}]\leq A$ for every $k$. Then we have
\begin{align*}
E\left[\left(\sum_{k=1}^NY_k\right)^{q}\right]
\leq\frac{(2q-2)!}{(q-1)!}A\{2N(m+1)\}^{q/2}
\end{align*}
for any integer $N$ greater than $m$.
\end{lemma}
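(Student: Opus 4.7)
The plan is to derive the inequality as a direct application of Theorem 2 of \cite{DL1999}, which provides a Rosenthal-type moment inequality for centered (possibly dependent) sequences. That theorem bounds $E[(\sum_{k=1}^N Y_k)^{q}]$ by a constant multiple of $\max(A_{q,N}, B_{q,N})^{q/2}$, where $A_{q,N}$ and $B_{q,N}$ are ``dependence sums'' built from joint moments of the $Y_k$'s; the explicit constant extracted there is precisely $(2q-2)!/(q-1)!$.

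First, I would bound both $A_{q,N}$ and $B_{q,N}$ by $2N(m+1)A^{2/q}$ under our hypotheses. Under $m$-dependence, any joint moment $E[Y_{k_1}\cdots Y_{k_p}]$ factorizes whenever the index set $\{k_1,\dots,k_p\}$ splits into two subgroups separated by more than $m$; since each $Y_k$ is centered, such contributions vanish. Hence only index-tuples whose elements cluster within distance $m$ survive. The number of ordered pairs $(i,j)\in\{1,\dots,N\}^2$ with $|i-j|\leq m$ is at most $N(2m+1)\leq 2N(m+1)$, and higher-order cluster counts are controlled by this same quantity. For each surviving joint moment, H\"older's inequality combined with the hypothesis $E[Y_k^q]\leq A$ supplies the factor $A^{2/q}$.

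Combining these two ingredients---the prefactor $(2q-2)!/(q-1)!$ supplied by Theorem 2 of \cite{DL1999} and the estimate $\max(A_{q,N},B_{q,N})\leq 2N(m+1)A^{2/q}$---yields the assertion, since $\{2N(m+1)A^{2/q}\}^{q/2} = A\{2N(m+1)\}^{q/2}$. The main obstacle will be the bookkeeping translation between the general weak-dependence framework of \cite{DL1999} and the simpler $m$-dependent setting here, specifically verifying that the dependence sums $A_{q,N}$ and $B_{q,N}$ really do collapse to the stated combinatorial quantity $2N(m+1)A^{2/q}$ without accumulating extra factors. A safer fallback, should the constants fail to align exactly, is to decompose $\{1,\dots,N\}$ into $m+1$ subsequences of pairwise-independent entries (indexed by $k \bmod (m+1)$), apply convexity $|\sum_{r=0}^m S_r|^q\leq (m+1)^{q-1}\sum_r |S_r|^q$, and then invoke a Marcinkiewicz--Zygmund bound on each independent block $S_r$; this route recovers the same $N^{q/2}(m+1)^{q/2}A$ scaling, albeit with a possibly different numerical prefactor.
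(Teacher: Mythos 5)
Your proposal is correct and follows essentially the same route as the paper: both verify the Doukhan--Louhichi dependence coefficients under $m$-dependence (covariances of separated products vanish by centering, surviving joint moments are bounded via H\"older/Lyapunov by powers of $A^{1/q}$) and then invoke Theorem 2 of \cite{DL1999}, with the pair count $N(2m+1)\leq 2N(m+1)$ giving the stated factor. The only detail to tidy up is that the theorem's bound is a maximum of two terms, of which the second ($\sim 2N(m+1)^{q-1}$) is absorbed into $\{2N(m+1)\}^{q/2}$ precisely because $N>m$, which is the bookkeeping the paper carries out explicitly.
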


\begin{proof}
Take positive integers $p,s$ satisfying $s<p\leq q$ and positive integers $k_1\leq\cdots\leq k_p$ arbitrarily. If $k_{s+1}-k_s> m$, we have 
\[
\covariance[Y_{k_1}\cdots Y_{k_s},Y_{k_{s+1}}\cdots Y_{k_p}]=0
\]
by the $m$-dependence of $(Y_k)_{k\in\mathbb{N}}$. Otherwise, we have
\begin{align*}
\left|\covariance[Y_{k_1}\cdots Y_{k_s},Y_{k_{s+1}}\cdots Y_{k_p}]\right|
&\leq\left|E[Y_{k_1}\cdots Y_{k_p}]\right|
+\left|E[Y_{k_1}\cdots Y_{k_s}]E[Y_{k_{s+1}}\cdots Y_{k_p}]\right|\\
&\leq 2\|Y_{k_1}\|_{p}\cdots \|Y_{k_p}\|_{p}
\leq 2A^{\frac{p}{q}}
\end{align*}
by the generalized H\"older and Lyapunov inequalities. Therefore, the dependence coefficients $C_{r,p}$ associated to $(Y_k)$ (see Definition 2 of \cite{DL1999}) satisfy
\[
C_{r,p}\leq 2A^{\frac{p}{q}}1_{\{r\leq m\}}.
\]
Therefore, applying Theorem 2 of \cite{DL1999} with $C=2$, $\gamma=\log(A^{\frac{1}{q}})$, $M=1$ and $\theta_r=1_{\{r\leq m\}}$, we obtain
\begin{align*}
E\left[\left(\sum_{k=1}^NY_k\right)^{q}\right]
&\leq\frac{(2q-2)!}{(q-1)!}A\left\{(2N(m+1))^{q/2}\vee(2N(m+1)^{q-1})\right\}\\
&\leq\frac{(2q-2)!}{(q-1)!}A(2N)^{q/2}(m+1)^{q/2}
\end{align*} 
for any integer $N>m$. Hence we complete the proof. 
\end{proof}

We also need the following moment inequality for the maximum of increments of Brownian martingales:
\begin{lemma}\label{BM-max}
There is a universal constant $c>0$ such that
\[
\max_{\nu=1,2}\left\|\max_{0\leq k\leq N-1}\left|\Delta_kX^\nu\right|\right\|_p
\leq cp!A\sqrt{\tau_J\log(N+1)}
\]
for any $p\geq1$, $J,N\in\mathbb{N}$ and any constant $A>0$ such that $\max_{\nu=1,2}\sup_{t\geq0}|\sigma_t^\nu|\leq A$ a.s.
\end{lemma}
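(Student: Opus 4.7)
The strategy is to route through the Orlicz norm $\|\cdot\|_{\psi_2}$ already introduced in the section: first show that each increment $\Delta_kX^\nu$ is sub-Gaussian with parameter of order $A\sqrt{\tau_J}$; then apply the standard maximal inequality for sub-Gaussian variables to obtain a $\psi_2$-Orlicz bound on the maximum; finally convert this bound into an $L^p$ bound by Taylor expansion of the exponential.

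The sub-Gaussianity of each $\Delta_kX^\nu$ follows from the exponential martingale inequality for continuous local martingales with bounded integrand. Indeed, $\Delta_kX^\nu=\int_{k\tau_J}^{(k+1)\tau_J}\sigma^\nu_sdB^\nu_s$ has quadratic variation $\int_{k\tau_J}^{(k+1)\tau_J}(\sigma^\nu_s)^2 ds\leq A^2\tau_J$, and the stochastic exponential $\exp\bigl(\lambda\int_{k\tau_J}^{t}\sigma^\nu_sdB^\nu_s-\tfrac{\lambda^2}{2}\int_{k\tau_J}^{t}(\sigma^\nu_s)^2 ds\bigr)$ is a true martingale by Novikov's condition (the integrand is bounded), whence
\[
E\bigl[\exp(\lambda\Delta_kX^\nu)\bigr]\leq\exp\bigl(\lambda^2 A^2\tau_J/2\bigr)\qquad\text{for every }\lambda\in\mathbb{R}.
\]
A Chernoff argument then gives $P(|\Delta_kX^\nu|>x)\leq 2\exp(-x^2/(2A^2\tau_J))$, and a direct integration of this tail bound against the function $\psi_2$ yields a universal constant $c_1>0$ such that $\|\Delta_kX^\nu\|_{\psi_2}\leq c_1A\sqrt{\tau_J}$ for every $k$ and every $\nu=1,2$.

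Applying Lemma 2.2.2 of \cite{VW1996} to the sequence $(\Delta_kX^\nu)_{0\leq k\leq N-1}$ produces a universal constant $c_2>0$ with
\[
\biggl\|\max_{0\leq k\leq N-1}|\Delta_kX^\nu|\biggr\|_{\psi_2}\leq c_2\sqrt{\log(N+1)}\max_{0\leq k\leq N-1}\|\Delta_kX^\nu\|_{\psi_2}\leq c_1c_2A\sqrt{\tau_J\log(N+1)}.
\]
To convert the $\psi_2$-norm into the $L^p$ norm, set $M:=\|Y\|_{\psi_2}$ and expand $E\bigl[\exp((|Y|/M)^2)\bigr]\leq 2$ in Taylor series to deduce $E[|Y|^{2k}]\leq 2k!\,M^{2k}$ for every $k\in\mathbb{N}$. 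By Lyapunov's inequality, $\|Y\|_p\leq\|Y\|_{2\lceil p/2\rceil}\leq(2\lceil p/2\rceil!)^{1/(2\lceil p/2\rceil)}M\leq c_3\sqrt{p}\,M$ for some universal $c_3>0$ (Stirling), and since $\sqrt{p}\leq p!$ for every $p\geq1$, this further bounds $\|Y\|_p$ by $c_3 p!\,M$. Combining these inequalities and taking $c:=c_1c_2c_3$ yields the stated claim.

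There is no genuine obstacle here: the argument is a textbook application of sub-Gaussian concentration for stochastic integrals together with the Orlicz maximal inequality. The mildly inelegant point is that the factorial factor $p!$ in the statement is very loose — we actually establish the sharper $\|\cdot\|_p\lesssim\sqrt{p}\,\|\cdot\|_{\psi_2}$ — but the coarser factorial form is retained because it is all that is needed in the subsequent applications within the proof of Theorem \ref{HRY}.
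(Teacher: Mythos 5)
Your proof is correct and follows essentially the same route as the paper's: a Gaussian tail bound for the increments (which you obtain from the exponential martingale and a Chernoff argument, while the paper invokes Proposition 4.2.3 of \cite{BJY1986}), then the $\psi_2$-Orlicz bound for each increment, the maximal inequality of Lemma 2.2.2 of \cite{VW1996}, and finally the passage from the $\psi_2$-norm to the $L^p$-norm, which the paper cites from page 95 of \cite{VW1996} with the factor $p!$ and you verify directly with the sharper factor $\sqrt{p}$. There is no gap; the differences are only in which standard tool supplies the sub-Gaussian tail and in carrying out the last conversion by hand.
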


\begin{proof}
By Proposition 4.2.3 of \cite{BJY1986} we have
\[
P(M_{{\mathsf{T}}}\geq x,\langle M\rangle_{{\mathsf{T}}}\leq y^2)\leq\exp\left(-\frac{x^2}{2y^2}\right)
\]
for any continuous martingale $M$, stopping time {$\mathsf{T}$} and $x,y>0$. Therefore, we have
\[
P(|\Delta_kX^\nu|\geq x)\leq 2\exp\left(-\frac{x^2}{2A^2\tau_J}\right)
\]
for all $J\in\mathbb{N}$, $k=0,1\dots$ and $x>0$. 
Thus, by Lemma 2.2.1 of \cite{VW1996} we have
\[
\|\Delta_kX^\nu\|_{\psi_2}\leq\sqrt{6A^2\tau_J}
\]
for all $k=0,1\dots$ and $J\in\mathbb{N}$. Hence Lemma 2.2.2 of \cite{VW1996} implies that there is a universal constant $c_0>0$ such that
\[
\left\|\max_{0\leq k\leq N-1}|\Delta_kX^\nu|\right\|_{\psi_2}\leq c_0\sqrt{6A^2\tau_J\log(N+1)}
\]
for all $k=0,1\dots$ and $J,N\in\mathbb{N}$. Now the desired result follows from the inequality $\|Y\|_p\leq p!\sqrt{\log 2}\|Y\|_{\psi_2}$ which holds for any random variable $Y$ and $p\geq1$ (cf.~page 95 of \cite{VW1996}). 
\end{proof}

Now we are ready to prove the following result:
\begin{lemma}\label{lemma:lm-term2}
Under the assumptions of Theorem \ref{HRY}, we have 
\[
\left\|\widehat{\boldsymbol{\rho}}_{{J-j+1}}(l\tau_J)_{t}-E\left[\widehat{\boldsymbol{\rho}}_{{J-j+1}}(l\tau_J)_{t}|X\right]\right\|_r
\leq \frac{C_1\sqrt{t\tau_J}L_j\log(\tau_J^{-1}t+1)}{|n-l-L_j+1|\tau_J}
\]
for any $J\in\mathbb{N}$, $l\in\mathcal{L}_J$, $j\in\mathbb{N}$, $t>0$ and any positive even integer $r$, where $C_1>0$ is a constant which depends only on $r,\sigma^1,\sigma^2,\pi_1,\pi_2$. 
\end{lemma}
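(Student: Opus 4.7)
The plan is to follow the structure of the proof of Lemma \ref{lemma:lm-term} but replace the elementary variance computation by a higher-moment bound that exploits the $m$-dependence structure of the interpolation errors together with a maximal inequality for the observed returns. By symmetry we may restrict to $l\in\mathcal{L}_J^+$, and we start from the identity \eqref{lm-expression}. Applying the triangular inequality for $\|\cdot\|_r$ reduces the problem to bounding, for every fixed $p,q,\alpha,\beta$, the quantity
\[
\mathcal{E}_{p,q,\alpha,\beta}:=\left\|\sum_{k=(L_j-1)\vee(\alpha+p)\vee(\beta+q-l)}^{\lfloor \tau_J^{-1}t\rfloor-l}\mathcal{X}_{k,l,p,q}(\alpha,\beta)\Delta_{k-p-\alpha}X^1\Delta_{k+l-q-\beta}X^2\right\|_r.
\]

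Next, I would condition on $X$ and apply Lemma \ref{lemma:DL}. Observe that, as a sequence in $k$, the variables $\mathcal{X}_{k,l,p,q}(\alpha,\beta)$ are centered and $(\alpha\vee\beta)$-dependent (they depend on $\delta_{\nu,\cdot}$ on two intervals of length $\alpha+1$ and $\beta+1$ respectively), and this property is preserved conditionally on $X$ because $X$ is independent of the $\delta_{\nu,\cdot}$. Moreover $|\mathcal{X}_{k,l,p,q}(\alpha,\beta)|\le 1$ and direct inspection gives $E[|\mathcal{X}_{k,l,p,q}(\alpha,\beta)|^r]\le 2(1-\pi_1)(1-\pi_2)\pi_1^\alpha\pi_2^\beta$ (bounding by the maximum of $\chi_{1,k-p}(\alpha)\chi_{2,k+l-q}(\beta)$ and its expectation). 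Hence, conditionally on $X$, Lemma \ref{lemma:DL} applied to $Y_k:=\mathcal{X}_{k,l,p,q}(\alpha,\beta)\Delta_{k-p-\alpha}X^1\Delta_{k+l-q-\beta}X^2$ (which is centered and $(\alpha\vee\beta)$-dependent given $X$) with $A=(\max_k|\Delta_{k-p-\alpha}X^1\Delta_{k+l-q-\beta}X^2|)^{r}\cdot 2\pi_1^\alpha\pi_2^\beta$ yields
\[
E\!\left[\Big|\textstyle\sum_k Y_k\Big|^{r}\Big|X\right]
\;\lesssim_{r}\;(\pi_1^\alpha\pi_2^\beta)\,\bigl(\max_k|\Delta_{k-p-\alpha}X^1\Delta_{k+l-q-\beta}X^2|\bigr)^{r}\,\bigl(\tau_J^{-1}t(\alpha\vee\beta+1)\bigr)^{r/2}.
\]
Taking $r$-th roots and then applying Cauchy--Schwarz in $X$ together with Lemma \ref{BM-max} at exponent $2r$ (using boundedness of $\sigma^\nu$), one obtains
\[
\mathcal{E}_{p,q,\alpha,\beta}\;\lesssim_{r,\sigma^1,\sigma^2}\;(\pi_1^\alpha\pi_2^\beta)^{1/r}\sqrt{\alpha\vee\beta+1}\;\tau_J\log(\tau_J^{-1}t+1)\,\sqrt{\tau_J^{-1}t}.
\]

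It then remains to sum in $p,q,\alpha,\beta$. The geometric decay in $\alpha,\beta$ (ensured because $\pi_1,\pi_2<1$) makes $\sum_{\alpha,\beta\ge 0}(\pi_1^\alpha\pi_2^\beta)^{1/r}\sqrt{\alpha\vee\beta+1}$ a finite constant depending only on $r,\pi_1,\pi_2$, and the inequality \eqref{l1h} gives $\sum_{p,q=0}^{L_j-1}|h_{j,p}h_{j,q}|\le L_j$. Plugging these back into the triangular-inequality bound, the prefactor $\tau_J^{-1}/|n-l-L_j+1|$ combines with $\tau_J\sqrt{\tau_J^{-1}t}=\sqrt{t\tau_J}$ to produce exactly the claimed bound $C_1\sqrt{t\tau_J}\,L_j\log(\tau_J^{-1}t+1)/(|n-l-L_j+1|\tau_J)$.

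The main technical obstacle is the conditional application of Lemma \ref{lemma:DL}: one must be careful that the $r$-th moment constant $A$ of the conditioning argument, which depends on $X$ via $\max_k|\Delta_{k-p-\alpha}X^1\Delta_{k+l-q-\beta}X^2|$, is subsequently controlled in $L^{r}(P)$ by the Gaussian maximal bound of Lemma \ref{BM-max}; this step is what generates the $\log(\tau_J^{-1}t+1)$ factor absent in Lemma \ref{lemma:lm-term}, which was obtained from a pure second-moment computation.
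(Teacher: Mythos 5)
Your proposal is correct and follows essentially the same route as the paper's own proof: starting from \eqref{lm-expression} and the Minkowski inequality, applying Lemma \ref{lemma:DL} conditionally on $X$ to the $(\alpha\vee\beta)$-dependent summands, controlling the resulting maximum of increments via Lemma \ref{BM-max} (which produces the $\log(\tau_J^{-1}t+1)$ factor), and then summing using the geometric decay in $\alpha,\beta$ and \eqref{l1h}. The only differences are cosmetic (e.g.\ bounding the conditional moment by the product of maxima with a Cauchy--Schwarz step rather than by $\sum_\nu\max_m|\Delta_mX^\nu|^{2r}$), and they do not affect the argument.
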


\begin{proof}
By symmetry we may assume $l\in\mathcal{L}_J^+$ without loss of generality.  

We use the same notation as in the proof of Lemma \ref{lemma:lm-term}. 
Then, \eqref{lm-expression} and the Minkowski inequality yield
\begin{align*}
&\left\|\widehat{\boldsymbol{\rho}}_{{J-j+1}}(l\tau_J)_{t}-E\left[\widehat{\boldsymbol{\rho}}_{{J-j+1}}(l\tau_J)_{t}|X\right]\right\|_r\\
&\leq\frac{\tau_J^{-1}}{|n-l-L_j+1|}\sum_{p,q=0}^{L_j-1}\sum_{\alpha,\beta=0}^\infty|h_{j,p}h_{j,q}|\left\|\sum_{k=(L_j-1)\vee(\alpha+p)\vee(\beta+q-l)}^{\lfloor \tau_J^{-1}t\rfloor-l}\mathcal{X}_{k,l,p,q}(\alpha,\beta)\Delta_{k-p-\alpha}X^1\Delta_{k+l-q-\beta}X^2\right\|_r.
\end{align*}
By construction, the sequence $(\mathcal{X}_{k,l,p,q}(\alpha,\beta)\Delta_{k-p-\alpha}X^1\Delta_{k+l-q-\beta}X^2)_k$ is $(\alpha\vee\beta)$-dependent conditionally on $X$ and
\[
E\left[|\Delta_{k-p-\alpha}X^1\Delta_{k+l-q-\beta}X^2\mathcal{X}_{k,l,p,q}(\alpha,\beta)|^{r}\mid X\right]\leq 2^{r}\pi_1^\alpha\pi_2^\beta\sum_{\nu=1}^2\max_{1\leq m\leq\lfloor\tau_J^{-1}t\rfloor}|\Delta_mX^\nu|^{2r}\quad\text{a.s.}
\] 
Therefore, by Lemma \ref{lemma:DL} we obtain
\begin{align*}
&E\left[\left|\sum_{k=(L_j-1)\vee(\alpha+p)\vee(\beta+q-l)}^{\lfloor \tau_J^{-1}t\rfloor-l}\mathcal{X}_{k,l,p,q}(\alpha,\beta)\Delta_{k-p-\alpha}X^1\Delta_{k+l-q-\beta}X^2\right|^r\mid X\right]\\
&\leq\frac{(2r-2)!}{(r-1)!}2^r\pi_1^\alpha\pi_2^\beta\{2\tau_J^{-1}t(\alpha\vee\beta+1)\}^{r/2}\sum_{\nu=1}^2\max_{0\leq m\leq\lfloor\tau_J^{-1}t\rfloor-1}|\Delta_mX^\nu|^{2r}
\quad\text{a.s.}
\end{align*}
Hence, by Lemma \ref{BM-max} there is a constant $C>0$ which depends only on $r,\sigma^1,\sigma^2$ such that
\begin{align*}
&E\left[\left|\sum_{k=(L_j-1)\vee(\alpha+p)\vee(\beta+q-l)}^{\lfloor \tau_J^{-1}t\rfloor-l}\mathcal{X}_{k,l,p,q}(\alpha,\beta)\Delta_{k-p-\alpha}X^1\Delta_{k+l-q-\beta}X^2\right|^r\right]\\
&\leq C\pi_1^\alpha\pi_2^\beta\{\tau_J^{-1}t(\alpha\vee\beta+1)\}^{r/2}\tau_J^r\log^r(\tau_J^{-1}t+1).
\end{align*}
Consequently, we obtain 
\begin{align*}
\left\|\widehat{\boldsymbol{\rho}}_{{J-j+1}}(l\tau_J)_{t}-E\left[\widehat{\boldsymbol{\rho}}_{{J-j+1}}(l\tau_J)_{t}|X\right]\right\|_r
&\leq\frac{C\sqrt{t\tau_J}\log(\tau_J^{-1}t+1)}{|n-l-L_j+1|\tau_J}\sum_{p,q=0}^{L_j-1}|h_{j,p}h_{j,q}|\sum_{\alpha,\beta=0}^\infty\pi_1^{\alpha/r}\pi_2^{\beta/r}\sqrt{\alpha\vee\beta+1}.
\end{align*}
Therefore, there is a constant $C_1>0$ which depends only on $r,\sigma^1,\sigma^2,\pi_1,\pi_2$ such that
\begin{align*}
\left\|\widehat{\boldsymbol{\rho}}_{{J-j+1}}(l\tau_J)_{t}-E\left[\widehat{\boldsymbol{\rho}}_{{J-j+1}}(l\tau_J)_{t}|X\right]\right\|_r
&\leq\frac{C_1\sqrt{t\tau_J}\log(\tau_J^{-1}t+1)}{|n-l-L_j+1|\tau_J}\sum_{p,q=0}^{L_j-1}|h_{j,p}h_{j,q}|.
\end{align*}
Now the desired result follows from \eqref{l1h}.
\end{proof}
}

\begin{proof}[Proof of (\ref{LD-1})]
Take $\varepsilon>0$ arbitrarily. Then, for any even integer $r$ the Markov inequality yields
\begin{align*}
&P\left(\max_{l\in\mathcal{L}^+_{J}:v_{J}^{-1}|l\tau_J-\theta_j|> \varepsilon}\left|\widehat{\rho}_{{J-j+1}}(l\tau_J)-E\left[\widehat{\rho}_{{J-j+1}}(l\tau_J)|X\right]\right|>\varepsilon\right)\\
&\leq\varepsilon^{-r}E\left[\max_{l\in\mathcal{L}^+_{J}:v_{J}^{-1}|l\tau_J-\theta_j|> \varepsilon}\left|\widehat{\rho}_{{J-j+1}}(l\tau_J)-E\left[\widehat{\rho}_{{J-j+1}}(l\tau_J)|X\right]\right|^r\right]\\
&\leq\varepsilon^{-r}\sum_{l\in\mathcal{L}^+_{J}}E\left[\left|\widehat{\rho}_{{J-j+1}}(l\tau_J)-E\left[\widehat{\rho}_{{J-j+1}}(l\tau_J)|X\right]\right|^r\right]
\leq\varepsilon^{-r}\#\mathcal{L}^+_{J}\max_{l\in\mathcal{L}^+_{J}}E\left[\left|\widehat{\rho}_{{J-j+1}}(l\tau_J)-E\left[\widehat{\rho}_{{J-j+1}}(l\tau_J)|X\right]\right|^r\right].
\end{align*}
We can take sufficiently large $r$ such that $2/r<1-\kappa$. Then we have 
\begin{align*}
\#\mathcal{L}^+_{J}\max_{l\in\mathcal{L}^+_{J}}E\left[\left|\widehat{\rho}_{{J-j+1}}(l\tau_J)-E\left[\widehat{\rho}_{{J-j+1}}(l\tau_J)|X\right]\right|^r\right]
=O\left(\left(\tau_J^{1-2/r}L^2{\log^2\tau_J}\right)^{r/2}\right)=o(1)
\end{align*}
by {Lemma \ref{lemma:lm-term2}}. This yields the desired result. 
\end{proof}

{\subsubsection{Proof of (\ref{LD-2})}\label{sec:proofLD-2}}

{
{
As is stated at the beginning of this subsection, we need to establish an exponential moment bound for the variable of the form 
\[
Q_{m,N,l}(i):=\sum_{k\in I_{m,N}(i)}\left(\zeta^1_k\zeta^2_{k+l}-E\left[\zeta^1_k\zeta^2_{k+l}\right]\right),
\]
where $m\leq J$, $N$, $i$ and $l\in\mathcal{L}_J^+$ are given positive integers and $I_{m,N}(i)$ is defined by \eqref{notation-I}. 
}

We begin by remarking the following result: 
\begin{lemma}\label{hyper}
There is a universal constant $C>0$ such that $\|Y\|_{\psi_1}\leq C\|Y\|_2$ for any quadratic polynomial $Y$ of Gaussian variables. 
\end{lemma}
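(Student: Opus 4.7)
The plan is to invoke Gaussian hypercontractivity. Any quadratic polynomial $Y$ in jointly Gaussian variables involves only finitely many of them, and hence belongs to the direct sum of the Wiener chaoses of orders $0$, $1$, and $2$ associated with an appropriate isonormal Gaussian process. Consequently $Y$ admits an orthogonal decomposition $Y = Y_0 + Y_1 + Y_2$, where $Y_0 = E[Y]$ is deterministic, $Y_1$ is a centered Gaussian (first chaos), and $Y_2$ is a centered quadratic form of Gaussian variables (second chaos). By orthogonality, $\|Y\|_2^2 = Y_0^2 + \|Y_1\|_2^2 + \|Y_2\|_2^2$.

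Next, I would apply Nelson's hypercontractivity inequality: for any element $Z$ of the $k$-th Wiener chaos and $p\ge 2$,
\[
\|Z\|_p \le (p-1)^{k/2}\|Z\|_2.
\]
Applying this to $Y_0, Y_1, Y_2$ (the bound is trivial for $k=0$) and using the orthogonality in $L^2$, one obtains a universal constant $C_0>0$ such that $\|Y\|_p \le C_0\, p\, \|Y\|_2$ for every $p\ge 2$; the dominant contribution comes from the quadratic component, which produces the factor $p-1$.

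Finally, I would convert this polynomial-in-$p$ moment bound into the desired Orlicz-norm bound via the Taylor expansion of the exponential. For $c>0$,
\[
E\!\left[\exp(|Y|/c)\right] - 1
= \sum_{k=1}^\infty \frac{E[|Y|^k]}{k!\,c^k}
\le \frac{\|Y\|_2}{c} + \sum_{k=2}^\infty \frac{(C_0 k\,\|Y\|_2)^k}{k!\,c^k}.
\]
Using Stirling's inequality $k!\ge (k/e)^k$, each summand with $k\ge 2$ is at most $(eC_0\|Y\|_2/c)^k$, so choosing $c = A\|Y\|_2$ with a sufficiently large universal constant $A$ (depending only on $C_0$) makes the right-hand side at most $1$. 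This yields $\|Y\|_{\psi_1}\le A\|Y\|_2$, as required.

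The only potentially subtle point is ensuring that the numerical constants in hypercontractivity and in Stirling's estimate combine so that a single universal $A$ suffices; this is routine once the moment bound $\|Y\|_p\le C_0 p\|Y\|_2$ is in hand. Everything else reduces to standard tools (chaos decomposition, Nelson's inequality, Orlicz-norm characterization via moments, which may alternatively be imported from Lemma~2.2.1 of \cite{VW1996}).
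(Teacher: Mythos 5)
Your proof is correct, but it takes a different route from the paper's. You decompose $Y$ into Wiener chaoses of orders $0,1,2$, apply Nelson's hypercontractive inequality to get the moment growth $\|Y\|_p\leq C_0\,p\,\|Y\|_2$ for $p\geq2$, and then sum the exponential series (with Stirling's bound) to produce the $\psi_1$-norm estimate directly; since the desired inequality is homogeneous, no normalization is needed, and the degenerate case $\|Y\|_2=0$ is trivial. The paper instead normalizes to $\|Y\|_2=1$, imports the exponential tail bound $P(|Y|>x)\leq e^{-cx}$ for $x\geq2$ from Theorem 6.7 of \cite{Janson1997}, extends it to all $x>0$ at the cost of a multiplicative constant $K$, and then converts the tail bound into the Orlicz-norm bound via Lemma 2.2.1 of \cite{VW1996}. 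Both arguments ultimately rest on Gaussian hypercontractivity (which is also behind Janson's tail estimate), so neither is more general; what yours buys is a self-contained derivation with explicit, traceable constants and no need for the tail-to-Orlicz conversion lemma, while the paper's buys brevity by outsourcing the key estimate to a packaged reference. One small point worth making explicit in your write-up is that the chaos decomposition requires the Gaussian variables to be jointly Gaussian, which is how the lemma is used in the paper (quadratic forms of the jointly Gaussian $\zeta^\nu_k$), and that a quadratic polynomial of finitely many such variables automatically lies in $L^2$, so all the norms you manipulate are finite.
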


\begin{proof}
Since $Y=0$ a.s.~if $\|Y\|_2=0$, it suffices to consider the case $\|Y\|_2\neq0$. Hence, we may assume $\|Y\|_2=1$ without loss of generality. By Theorem 6.7 of \cite{Janson1997}, there is a universal constant $c>0$ such that $P(|Y|>x)\leq e^{-cx}$ for any $x\geq2$. Setting $K=1\vee e^{2c}$, we have $Ke^{-cx}\geq1$ for all $x\leq2$. Hence we have $P(|Y|>x)\leq Ke^{-cx}$ for any $x>0$ and thus we have $\|Y\|_{\psi_1}\leq (1+K)/c$ by Lemma 2.2.1 of \cite{VW1996}. Therefore, the desired result holds true with $C=(1+K)/c$. 
\end{proof}

\begin{lemma}\label{gqf}
Under the assumptions of Theorem \ref{HRY}, there is a universal constant $C>0$ such that
\[
\left\|{Q_{m,N,l}(i)}\right\|_{\psi_1}
\leq C\sqrt{\tau_m\tau_J}\left(L_j^{3/4}
+\sqrt{L_j(\pi_1^N+\pi_2^N)}
+\sqrt{L_j}|\log\tau_J|\right)
\]
for any positive integers $J$, $m\leq {J}$, $N$, $i$ and {$l\in\mathcal{L}_J^+$}.
\end{lemma}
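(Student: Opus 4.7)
Since $Q_{m,N,l}(i)$ is a centered quadratic polynomial in the jointly Gaussian family $\{\zeta_k^\nu\}$, Lemma \ref{hyper} reduces the task to proving the variance bound
\[
\mathrm{Var}\bigl(Q_{m,N,l}(i)\bigr) \lesssim \tau_m\tau_J\bigl(L_j^{3/2} + L_j(\pi_1^N+\pi_2^N) + L_j\log^2\tau_J\bigr).
\]
Applying Isserlis' (Wick's) formula to the centered jointly Gaussian quadruple $(\zeta_k^1,\zeta_{k+l}^2,\zeta_{k'}^1,\zeta_{k'+l}^2)$ yields
\[
\mathrm{Cov}\bigl(\zeta_k^1\zeta_{k+l}^2,\zeta_{k'}^1\zeta_{k'+l}^2\bigr) = E[\zeta_k^1\zeta_{k'}^1]\,E[\zeta_{k+l}^2\zeta_{k'+l}^2] + E[\zeta_k^1\zeta_{k'+l}^2]\,E[\zeta_{k+l}^2\zeta_{k'}^1].
\]
Summing over $k,k'\in I_{m,N}(i)$ thus decomposes $\mathrm{Var}(Q_{m,N,l}(i))$ into an autocovariance piece $S_1$ and a cross-covariance piece $S_2$, which I handle separately.

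For $S_1$, set $\Sigma^{\nu\nu}_{k,k'} := E[\zeta_k^\nu\zeta_{k'}^\nu]$ for $k,k'\in I_{m,N}(i)$. Cauchy--Schwarz gives $|S_1|\le\|\Sigma^{11}\|_F\|\Sigma^{22}\|_F$. Invoking Lemma \ref{covariance}(a) together with the Cauchy--Schwarz bound $|\sum_p h_{j,p}h_{j,p+a}|\le 1$ (a consequence of \eqref{normalize}), I obtain $|\Sigma^{\nu\nu}_{k,k'}|\le\tau_J r_\nu(k'-k)$ for a non-negative kernel $r_\nu\le 1$ whose $\ell^1$-norm is $\lesssim L_j$; here the geometric decay of $\pi_\nu^{\alpha+\beta}$ makes $r_\nu$ decay exponentially once $|d|>L_j$. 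Consequently $\|\Sigma^{\nu\nu}\|_F^2\lesssim\tau_J^2 M L_j=\tau_J\tau_m L_j$, so $|S_1|\lesssim\tau_J\tau_m L_j$, which is absorbed into the $L_j^{3/2}$ term.

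For $S_2$, every $k,k'\in I_{m,N}(i)$ satisfies $k,k'\ge N+L_j$ by \eqref{notation-I}, so by Lemma \ref{covariance}(b),
\[
|E[\zeta_k^1\zeta_{k'+l}^2]|\le\tau_J|\bar\rho_j((k'+l-k)\tau_J)| + \tau_J L_j(\pi_1^N+\pi_2^N).
\]
I split the resulting bound on $|S_2|$ accordingly. The error part is bounded by $\tau_m\tau_J L_j(\pi_1^N+\pi_2^N)$ by exploiting the positive-semidefiniteness of the full covariance matrix of $(\zeta^1,\zeta^2)$ (which gives an operator-norm control on the error block, thereby avoiding a naive $M^2$ blow-up). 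The main part $\tau_J^2\sum_{k,k'\in I_{m,N}(i)}\bar\rho_j((k'+l-k)\tau_J)^2$ is controlled by decomposing $\bar\rho_j=\sum_{i_r}\bar\rho_{j,(i_r)}$ and applying Lemma \ref{lemma:RL}: away from $\theta_{i_r}$ one has $|\bar\rho_{j,(i_r)}(\theta)|\lesssim c_L\tau_J/|\theta-\theta_{i_r}|$, while Lemma \ref{lemma:H-bound} handles $\bar\rho_{j,(i_r)}$ trivially near $\theta_{i_r}$. Summing on the lattice $\tau_J\mathbb{Z}$, the partial-sum tail $\sum_{d}1/(|d|\vee 1)$ produces a factor $|\log\tau_J|$, and together with $c_L^2=O(L)$ from Lemma \ref{dH-bound} this gives $\lesssim\tau_m\tau_J L_j\log^2\tau_J$. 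Combining the bounds on $S_1$ and $S_2$ and invoking Lemma \ref{hyper} completes the proof.

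The principal obstacle is the analysis of the $\bar\rho_j$ contribution to $S_2$. Because $\bar\rho_j$ is a highly oscillatory function whose amplitude is delicate in the proximity of the true lags $\theta_{i_r}$, and because the derivative bound $c_L$ for $H_{j,L}$ grows like $\sqrt{L}$, one must carefully separate the near-singular and far-singular regimes in the lattice $\tau_J\mathbb{Z}$ and balance Lemma \ref{lemma:RL} against the trivial bound from Lemma \ref{lemma:H-bound}. The $|\log\tau_J|$ factor in the stated bound is the inevitable cost of passing from the integrated oscillatory estimate to its discrete counterpart on the lattice.
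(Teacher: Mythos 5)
Your overall strategy coincides with the paper's own proof: reduce to an $L^2$ bound through Lemma \ref{hyper}; compute the variance of the Gaussian quadratic form (your Wick expansion is equivalent to the paper's use of Davies' identity $\variance[\boldsymbol{\zeta}^\top A\boldsymbol{\zeta}]=2\trace[(\Sigma A)^2]$ followed by the Frobenius-norm bound); control the autocovariance blocks via Lemma \ref{covariance}(a), \eqref{normalize} and \eqref{l1h}; replace $E[\zeta^1_k\zeta^2_{k'+l}]$ by $\tau_J\bar\rho_j$ via Lemma \ref{covariance}(b); and bound $\tau_J^2\sum\bar\rho_j^2$ by combining Lemma \ref{lemma:H-bound} with the oscillatory estimate of Lemma \ref{lemma:RL} and $c_L^2=O(L)$ from Lemma \ref{dH-bound}, which is where the $\sqrt{L_j}\,|\log\tau_J|$ term comes from. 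Your treatment of the autocovariance part (filter autocorrelations bounded by $1$, a kernel in $k'-k$ with $\ell^1$-norm $O(L_j)$) is correct and in fact slightly sharper than the paper's $\tau_J\tau_mL_j^{3/2}$.

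The one step that does not work as written is your bound for the error part of the cross block. You claim the bound $\tau_m\tau_JL_j(\pi_1^N+\pi_2^N)$ follows from ``positive semidefiniteness of the full covariance matrix of $(\zeta^1,\zeta^2)$,'' giving operator-norm control of the error block. But that error block has entries $E[\zeta^1_k\zeta^2_{k'+l}]-\tau_J\bar\rho_j((k'+l-k)\tau_J)$, which is not a sub-block of any covariance matrix, so positive semidefiniteness of $\Sigma_{m,N,l}(i)$ says nothing about it. Moreover, the obvious alternative --- applying the uniform bound of Lemma \ref{covariance}(b) to each of the $(2^{J-m})^2$ pairs $(k,k')$ --- only produces a factor $\tau_m^2$ rather than $\tau_m\tau_J$, i.e.\ it overshoots by $2^{J-m}$, and therefore does not yield the lemma as stated for every $N$ (take $N=1$). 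The mechanism that actually delivers the stated bound, and that the paper's computation implicitly relies on, is the geometric decay in $k$ and $k'$ visible in the proof of Lemma \ref{covariance}(b): the approximation error is bounded by $\tau_JL_j(\pi_1^{k-L_j}+\pi_2^{k'+l-L_j})$, so after pairing it with the crude bound $|E[\zeta^1_k\zeta^2_{k'+l}]|+\tau_J|\bar\rho_j|\leq2\tau_J$, the double sum over $I_{m,N}(i)$ contributes only a single factor of the block size $2^{J-m}$, giving $\lesssim\tau_J\tau_mL_j(\pi_1^N+\pi_2^N)$. If you replace your psd/operator-norm argument by this estimate (and, in the oscillatory part, make sure the bookkeeping yields at most two logarithmic factors in total, e.g.\ by using Cauchy--Schwarz across the two level indices so the lattice sums are $O(1)$ and the $\log^2$ comes from the $O(|\log\tau_J|)$ levels, as in the paper), your proof goes through.
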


\begin{proof}
By Lemma \ref{hyper} it suffices to prove
\[
\|Q_{m,N,l}(i)\|_2\leq C'\sqrt{\tau_m\tau_J}\left(L_j^{3/4}
+\sqrt{L_j(\pi_1^N+\pi_2^N)}
+\sqrt{L_j}|\log\tau_J|\right)
\]
where $C'>0$ is a universal constant. 
Let $\Sigma_{m,N,l}(i)$ be the covariance matrix of
\[
((\zeta^1_k)_{k\in I_{m,N}(i)},(\zeta^2_{k+l})_{k\in I_{m,N}(i)})^\top
\] 
for every $i$, and set 
$
C_{m,N,l}(i)=\Sigma_{m,N,l}(i)^{1/2}A_m\Sigma_{m,N,l}(i)^{1/2},
$ 
where
\[
A_m=\left(
\begin{array}{cc}
0  &  \mathsf{{E}}_{2^{J-m}}    \\
\mathsf{{E}}_{2^{J-m}}  &   0
\end{array}
\right).
\]
From Eq.(4.4) of \cite{Davies1973} we have $\|Q_{m,N,l}(i)\|_2=\sqrt{2}\|C_{m,N,l}(i)\|_F$, hence it is enough to prove
\begin{equation}\label{matrix}
\|C_{m,N,l}(i)\|_F\leq C''\sqrt{\tau_m\tau_J}\left(L_j^{3/4}
+\sqrt{L_j(\pi_1^N+\pi_2^N)}
+\sqrt{L_j}|\log\tau_J|\right),
\end{equation}
where $C''>0$ is a universal constant. 
By Appendix II(ii)--(iii) from \cite{Davies1973}, we have
\begin{align*}
\|C_{m,N,l}(i)\|_F^2&\leq\|\Sigma_{m,N,l}(i)\|_F^2\\
&=\sum_{k,k'\in I_{m,N}(i)}E\left[\zeta^1_k\zeta^1_{k'}\right]^2+\sum_{k,k'\in I_{m,N}(i)}E\left[\zeta^2_{k+l}\zeta^2_{k'+l}\right]^2+2\sum_{k,k'\in I_{m,N}(i)}E\left[\zeta^1_{k}\zeta^2_{k'+l}\right]^2.
\end{align*}
Since we have for every $k$
\[
\|\zeta^\nu_k\|_2\leq(1-\pi_1)\sum_{\alpha=0}^{k-p}\pi_\nu^\alpha\left\|\sum_{p=0}^{L_j-1}h_{j,p}\Delta_{k-p-\alpha}B^\nu\right\|\leq\sqrt{\tau_J},
\]
where we use \eqref{normalize}, we obtain $|E\left[\zeta^\nu_{k}\zeta^\nu_{k'}\right]|\leq\tau_J$ for any $k,k'$ by the Schwarz inequality. 
Combining this estimate with Lemma \ref{covariance}(a) and \eqref{l1h}, we obtain
\begin{align*}
&\sum_{k,k'\in I_{m,N}(i)}E\left[\zeta^1_k\zeta^1_{k'}\right]^2+\sum_{k,k'\in I_{m,N}(i)}E\left[\zeta^2_{k+l}\zeta^2_{k'+l}\right]^2\\
&\leq\tau_J^2(2L_j-1)\# I_{m,N}(i)\sum_{\nu=1,2}(1-\pi_\nu)^2\sum_{\alpha,\beta=0}^\infty\pi_\nu^{\alpha+\beta}\sum_{p=0}^{L_j-1}|h_{j,p}|
\leq 4\tau_J\tau_mL_j^{3/2}.
\end{align*}
Meanwhile, Lemma \ref{lemma:H-bound}, the Parseval identity and \eqref{normalize} yield
\[
|\bar{\rho}_j(\theta)|\leq\frac{1}{2\pi}\int_{-\pi}^\pi H_{j,L}(\lambda)d\lambda
=1.
\]
Moreover, $|E\left[\zeta^1_{k}\zeta^2_{k'}\right]|\leq\tau_J$ for any $k,k'$ by the Schwarz inequality. Combining these estimates with Lemma \ref{covariance}(b) we obtain
\begin{align*}
\left|\sum_{k,k'\in I_{m,N}(i)}E\left[\zeta^1_{k}\zeta^2_{k'+l}\right]^2
-\tau_J^2\sum_{k,k'\in I_{m,N}(i)}\bar{\rho}_j((k'+l-k)\tau_J)^2\right|\leq\tau_m\tau_JL_j(\pi_1^N+\pi_2^N).
\end{align*}
Since it holds that
\begin{align*}
&\sum_{k,k'\in I_{m,N}(i)}\bar{\rho}_j((k'+l-k)\tau_J)^2
=\sum_{k,k'\in I_{m,N}(0)}\bar{\rho}_j((k'+l-k)\tau_J)^2,
\end{align*}
the proof of \eqref{matrix} is completed once we show that
\begin{equation}\label{frobenius}
\tau_J^2\sum_{k,k'\in I_{m,N}(0)}\bar{\rho}_j((k'+l-k)\tau_J)^2\leq A\tau_J\tau_m(\log\tau_J)^2L_j
\end{equation}
for some universal constant $A>0$. 
We have
\begin{align*}
&\tau_J^2\sum_{k,k'\in I_{m,N}(0)}\bar{\rho}_j((k'+l-k)\tau_J)^2\\
&\leq\tau_J^2\sum_{i_1,i_2=1}^{J+1}\sum_{k,k'\in I_{m,N}(0)}\prod_{r=1}^2\left|\int_{\Lambda_{-i_r}}D(\lambda)H_{j,L}(\lambda)\Pi(\lambda)e^{\sqrt{-1}\tau_J^{-1}((k'+l-k)\tau_J-\theta_{i_r})\lambda}d\lambda\right|
\\
&\leq 2\tau_J\tau_m+\tau_J^2\max_{l\in\mathcal{L}_J^+}\sum_{i_1,i_2=1}^{J+1}\sum_{\begin{subarray}{c}
k,k'\in I_{m,N}(0)\\
\theta_{i_1},\theta_{i_2}\neq (k'+l-k)\tau_J
\end{subarray}}\prod_{r=1}^2\left|\int_{\Lambda_{-i_r}}D(\lambda)H_{j,L}(\lambda)\Pi(\lambda)e^{\sqrt{-1}\tau_J^{-1}((k'+l-k)\tau_J-\theta_{i_r})\lambda}d\lambda\right|,
\end{align*}
where we use Lemma \ref{lemma:H-bound}, the Parseval identity and \eqref{normalize} to obtain the last inequality. Therefore, Lemma \ref{lemma:RL} yields
\begin{align*}
&\tau_J^2\sum_{k,k'\in I_{m,N}(0)}\bar{\rho}_j((k'+l-k)\tau_J)^2\\
&\leq 2\tau_J\tau_m+2^jC_1\tau_J^2\sum_{i_1,i_2=1}^{J+1}\sum_{\begin{subarray}{c}
k,k'\in I_{m,N}(0)\\
\theta_{i_1},\theta_{i_2}\neq (k'+l-k)\tau_J
\end{subarray}}\prod_{r=1}^2\frac{c_L}{\left|k'+l-k-\theta_{i_r}\tau_J^{-1}\right|}\\
&\leq2\tau_J\tau_m+2^jC_2\tau_J\tau_m(J+1)^2c_L^2
\leq 2^jC_3\tau_J\tau_m(\log\tau_J)^2c_L^2
\end{align*}
for some universal constants $C_1,C_2,C_3>0$. Hence we obtain \eqref{frobenius} because $2^jc_L\leq A'L_j$ for some universal constant $A'>0$, and thus we complete the proof.
\end{proof}
}

\begin{proof}[Proof of (\ref{LD-2})]
Set $m={m_J}=\{(1-\kappa)\wedge\frac{1}{{2}}\}\frac{J}{2}$ {and let $N=N_J$ be a positive integer depending on $J$ so that $\tau_J^wN\to \mathfrak{a}$ for some $w\in(0,\frac{1}{2})$ and $\mathfrak{a}\in(0,\infty)$}. {Then we define the set $I_{m,N}(i)$ by \eqref{notation-I}.} 
We decompose $E\left[\widehat{\rho}_{{J-j+1}}(\theta)|X\right]$ as
\begin{align*}
&E\left[\widehat{\rho}_{{J-j+1}}(\theta)|X\right]\\
&=\frac{\tau_J^{-1}}{n-l-L_j+1}\left\{\sum_{k=L_j-1}^{L_j+N-1}Z^1_kZ^2_{k+l}
+\sum_{k=2^{J-m}M_J}^{n-1-l}Z^1_kZ^2_{k+l}
+\sum_{i=0}^{M_J-1}\sum_{k\in {I_{m,N}(i)}}\left(Z^1_k-\sigma^1_{i\tau_m}\zeta^1_k\right)Z^2_{k+l}\right.\\
&\qquad+\sum_{i=0}^{M_J-1}\sum_{k\in {I_{m,N}(i)}}\sigma^1_{i\tau_m}\zeta^1_k\left(Z^2_{k+l}-\sigma^2_{i\tau_m+l\tau_J}\zeta^2_{k+l}\right)
+\sum_{i=0}^{M_J-1}\sigma^1_{i\tau_m}\sigma^2_{i\tau_m+l\tau_J}\sum_{k\in {I_{m,N}(i)}}\left(\zeta^1_k\zeta^2_{k+l}-E\left[\zeta^1_k\zeta^2_{k+l}\right]\right)\\
&\left.\qquad+\sum_{i=0}^{M_J-1}\sigma^1_{i\tau_m}\sigma^2_{i\tau_m+l\tau_J}\sum_{k\in {I_{m,N}(i)}}E\left[\zeta^1_k\zeta^2_{k+l}\right]\right\}\\
&=:\mathbb{I}_J(l)+\mathbb{II}_J(l)+\mathbb{III}_J(l)+\mathbb{IV}_J(l)+\mathbb{V}_J(l)+\mathbb{VI}_J(l),
\end{align*}
where $M_J=\lfloor 2^{m-J}(n-l-L_j-N)\rfloor$. 

First we prove $\max_{l\in\mathcal{L}_J^+}\left|\mathbb{I}_J(l)\right|\to^p0$. Since we have $\left\|Z^\nu_k\right\|_4\lesssim \sqrt{\tau_J}$ by the Minkowski and Burkholder-Davis-Gundy inequalities as well as \eqref{normalize}, we obtain
$
\left\|\mathbb{I}_J(l)\right\|_2
\lesssim\sum_{k=L_j-1}^{L_j+N-1}\left\|Z^1_k\right\|_4\left\|Z^2_{k+l}\right\|_4
\lesssim N\tau_J
$
by the triangle and Schwarz inequalities. Therefore, the Markov inequality yields
$
P\left(\max_{l\in\mathcal{L}_J^+}\left|\mathbb{I}_J(l)\right|>\varepsilon\right)
\leq\varepsilon^{-1}\sum_{l\in\mathcal{L}_J^+}\left\|\mathbb{I}_J(l)\right\|_2^2
\lesssim N^2\tau_J
$
for any $\varepsilon>0$, hence $\max_{l\in\mathcal{L}_J^+}\left|\mathbb{I}_J(l)\right|\to^p0$. 

Noting that $L^2\tau_J\to0$, we can prove $\max_{l\in\mathcal{L}_J^+}\left|\mathbb{II}_J(l)\right|\to^p0$ in an analogous manner to the above.

Next we prove $\max_{l\in\mathcal{L}_J^+}\left|\mathbb{III}_J(l)\right|\to^p0$. For any $k\in {I_{m,N}(i)}$ we have 
\if0
\[
Z^1_k-\sigma^1_{i\tau_m}\zeta^1_k=\sum_{p=0}^{L_j-1}h_{j,p}\sum_{\alpha=0}^{k-p}\chi_{\nu,k-p}(\alpha)\int_{(k-p-\alpha)\tau_J}^{(k-p-\alpha+1)\tau_J}(\sigma^1_s-\sigma^1_{i\tau_m})dB^1_s|X\right]
\]
\fi
\begin{align*}
Z^1_k-\sigma^1_{i\tau_m}\zeta^1_k
&=(1-\pi_1)\left\{\sum_{\alpha=0}^{k\wedge N}\pi_1^\alpha\sum_{p=0}^{(L_j-1)\wedge(k-\alpha)}h_{j,p}\int_{(k-p-\alpha)\tau_J}^{(k-p-\alpha+1)\tau_J}(\sigma^1_s-\sigma^1_{i\tau_m})dB^1_s\right.\\
&\qquad\left.+\sum_{\alpha=N+1}^{k}\sum_{p=0}^{(L_j-1)\wedge(k-\alpha)}h_{j,p}\pi_1^\alpha\left(\Delta_{k-p-\alpha}X^1-\sigma^1_{i\tau_m}\Delta_{k-p-\alpha}B^1\right)\right\},
\end{align*}
hence it holds that
\begin{align*}
\left\|Z^1_k-\sigma^1_{i\tau_m}\zeta^1_k\right\|_r
&\lesssim\sqrt{\tau_J}\left\|\sup_{s\in[i\tau_m,(i+1)\tau_m+(L_j+N+1)\tau_J)}\left|\sigma^1_s-\sigma^1_{i\tau_m}\right|\right\|_r
+\sqrt{\tau_J}\pi_1^N\\
&\lesssim\sqrt{\tau_J}\left(\left(\tau_m+(L_j+N+1)\tau_J\right)^\gamma+\pi_1^N\right)
\end{align*}
for any $r\geq1$ by the Minkowski and Burkholder-Davis-Gundy inequalities as well as \eqref{holder}. Hence we obtain
\begin{align*}
P\left(\max_{l\in\mathcal{L}_J^+}\left|\mathbb{III}_J(l)\right|>\varepsilon\right)
&\lesssim\sum_{l\in\mathcal{L}_J^+}\left\{\sum_{i=0}^{M_J-1}\sum_{k\in {I_{m,N}(i)}}\left\|Z^1_k-\sigma^1_{i\tau_m}\zeta^1_k\right\|_{2r}\left\|Z^2_{k+l}\right\|_{2r}\right\}^r\\
&\lesssim\tau_J^{-1}\left\{M_J\cdot2^{J-m}\cdot\tau_J\left(\left(\tau_m+(L_j+N+1)\tau_J\right)^\gamma+\pi_1^N\right)\right\}^r\\
&=O\left(\tau_J^{-1}\left\{\left(\tau_m+(L_j+N+1)\tau_J\right)^\gamma+\pi_1^N\right\}^r\right)
\end{align*}
for any $\varepsilon>0$. We can take large enough $r\geq1$ such that $\tau_J^{-1}\left\{\left(\tau_m+(L_j+N+1)\tau_J\right)^\gamma+\pi_1^N\right\}^r\to0$, hence we obtain $\max_{l\in\mathcal{L}_J^+}\left|\mathbb{III}_J(l)\right|\to^p0$.

We can prove $\max_{l\in\mathcal{L}_J^+}\left|\mathbb{IV}_J(l)\right|\to^p0$ in an analogous manner.

Now we prove $\max_{l\in\mathcal{L}_J^+}\left|\mathbb{V}_J(l)\right|\to^p0$. 
{
By Lemma \ref{gqf} and the boundedness of $\sigma^1$ and $\sigma^2$, we have
\[
\|\mathbb{V}_J(l)\|_{\psi_1}\lesssim M_J\sqrt{\tau_m\tau_J}\left(L_j^{3/4}
+\sqrt{L_j(\pi_1^N+\pi_2^N)}
+\sqrt{L_j}|\log\tau_J|\right).
\]
Therefore, Lemma 2.2.2 of \cite{VW1996} yields
\[
\left\|\max_{l\in\mathcal{L}_J^+}\left|\mathbb{V}_J(l)\right|\right\|_{\psi_1}\lesssim|\log\tau_N|M_J\sqrt{\tau_m\tau_J}\left(L_j^{3/4}
+\sqrt{L_j(\pi_1^N+\pi_2^N)}
+\sqrt{L_j}|\log\tau_J|\right).
\]
Since $M_J\sqrt{\tau_m\tau_J}=O(\tau_J^{3/8})$ as $J\to\infty$ because $m\leq J/4$, we obtain
$\left\|\max_{l\in\mathcal{L}_J^+}\left|\mathbb{V}_J(l)\right|\right\|_{\psi_1}\to0$ by assumptions. This especially implies that $\max_{l\in\mathcal{L}_J^+}\left|\mathbb{V}_J(l)\right|\to^p0$. 
}

Finally, by Lemmas \ref{covariance}(b) and \ref{lemma:zero} we have $\max_{l\in\mathcal{L}_J^+}|\mathbb{VI}_J(l)|=o_p(M_J\cdot\tau_J^{-1}\tau_m\cdot\tau_J)$. Since $M_J=O(\tau_m^{-1})$, we obtain $\max_{l\in\mathcal{L}_J^+}|\mathbb{VI}_J(l)|\to^p0$. This completes the proof.  
\end{proof}

{\subsubsection{Completion of the proof of Theorem \ref{HRY}}}

{We need the following auxiliary result:} 
\begin{lemma}\label{nondegenerate}
$\int_{\Lambda_{-j}}D(\lambda)\Pi(\lambda)e^{\sqrt{-1}b\lambda}d\lambda\neq0$ for any $j\in\mathbb{N}$ and $b\in[-\frac{1}{2},\frac{1}{2}]$.
\end{lemma}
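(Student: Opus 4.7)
The plan is to prove the strictly stronger statement $I_j(b)>0$ for every $j\in\mathbb{N}$ and $b\in[-\tfrac12,\tfrac12]$. The first step will be to exploit the identities $\Pi(-\lambda)=\overline{\Pi(\lambda)}$ (immediate from the definition of $\Pi$), $D(-\lambda)=D(\lambda)$, and the symmetry of $\Lambda_{-j}$ about the origin in order to rewrite
\[
I_j(b):=\int_{\Lambda_{-j}}D(\lambda)\Pi(\lambda)e^{\sqrt{-1}b\lambda}d\lambda=2\int_{2^{-j}\pi}^{2^{-j+1}\pi}D(\lambda)\,\mathrm{Re}\!\left[\Pi(\lambda)e^{\sqrt{-1}b\lambda}\right]d\lambda.
\]
Since $D(\lambda)>0$ on the integration range, it will then be enough to show that $\mathrm{Re}[\Pi(\lambda)e^{\sqrt{-1}b\lambda}]$ is nonnegative there and strictly positive on a set of positive Lebesgue measure.

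The second step will be to rationalize $\Pi(\lambda)e^{\sqrt{-1}b\lambda}$ by multiplying numerator and denominator by the complex conjugate of $(1-\pi_1 e^{\sqrt{-1}\lambda})(1-\pi_2 e^{-\sqrt{-1}\lambda})$; taking the real part yields
\[
\mathrm{Re}\!\left[\Pi(\lambda)e^{\sqrt{-1}b\lambda}\right]=\frac{(1-\pi_1)(1-\pi_2)}{W(\lambda)}N(b,\lambda),
\]
where $W(\lambda):=|1-\pi_1 e^{\sqrt{-1}\lambda}|^2|1-\pi_2 e^{-\sqrt{-1}\lambda}|^2>0$ and
\[
N(b,\lambda):=(1+\pi_1\pi_2)\cos(b\lambda)-\pi_1\cos((1-b)\lambda)-\pi_2\cos((1+b)\lambda).
\]
The main obstacle lies here: this three-term expression has coefficients of both signs and the arguments $(1\pm b)\lambda$ can exceed $\pi/2$ on the integration range, so direct inspection gives no sign information. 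The key algebraic trick, which I would verify by a short expansion using the product-to-sum identity $2\sin(\lambda/2)\sin((\tfrac12\pm b)\lambda)=\cos(b\lambda)-\cos((1\mp b)\lambda)$, is the equivalent rewriting
\[
N(b,\lambda)=(1-\pi_1)(1-\pi_2)\cos(b\lambda)+2\sin(\lambda/2)\!\left[\pi_1\sin((\tfrac{1}{2}-b)\lambda)+\pi_2\sin((\tfrac{1}{2}+b)\lambda)\right].
\]

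With this representation the conclusion becomes elementary. For $b\in[-\tfrac12,\tfrac12]$ and $\lambda\in[2^{-j}\pi,2^{-j+1}\pi]\subseteq(0,\pi]$, one has $|b\lambda|\le 2^{-j}\pi\le\pi/2$, so $\cos(b\lambda)\ge 0$; also $\lambda/2\in(0,\pi/2]$, so $\sin(\lambda/2)>0$; and $(\tfrac12\pm b)\lambda\in[0,\lambda]\subseteq[0,\pi]$, so both inner sines are nonnegative. Hence every summand is nonnegative and $N(b,\lambda)\ge 0$. For strict positivity of $I_j(b)$, I would use that $(1-\pi_1)(1-\pi_2)>0$ since $\pi_1,\pi_2\in[0,1)$, together with the fact that $\cos(b\lambda)>0$ on a subset of $[2^{-j}\pi,2^{-j+1}\pi]$ of positive measure — trivially when $|b|<\tfrac12$ or $j\ge 2$ (in which case $|b\lambda|<\pi/2$ throughout), and on $[\pi/2,\pi)$ when $|b|=\tfrac12$ and $j=1$. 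This makes the first summand of $N(b,\lambda)$ strictly positive on a set of positive measure, so $I_j(b)>0$, which is stronger than the claimed nonvanishing.
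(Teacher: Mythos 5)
Your proposal is correct and follows essentially the same route as the paper's proof: the same symmetry reduction of the integral to $2\int_{\pi/2^j}^{\pi/2^{j-1}}D(\lambda)\,\mathrm{Re}[\Pi(\lambda)e^{\sqrt{-1}b\lambda}]\,d\lambda$, the same rationalization yielding the numerator $N(b,\lambda)$, and the same decomposition $N=(1-\pi_1)(1-\pi_2)\cos(b\lambda)+\pi_1[\cos(b\lambda)-\cos((1-b)\lambda)]+\pi_2[\cos(b\lambda)-\cos((1+b)\lambda)]$. The only (immaterial) differences are that you verify the nonnegativity of the cosine differences via the product-to-sum identity $2\sin(\lambda/2)\sin((\tfrac12\mp b)\lambda)$ rather than monotonicity of $\cos$, and you treat all $b\in[-\tfrac12,\tfrac12]$ directly instead of reducing to $b\geq0$ by symmetry.
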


\begin{proof}
Since we have
$
\int_{\Lambda_{-j}}D(\lambda)\Pi(\lambda)e^{\sqrt{-1}b\lambda}d\lambda
=2\int_{\pi/2^j}^{\pi/2^{j-1}}D(\lambda)\Re\left[\Pi(\lambda)e^{\sqrt{-1}b\lambda}\right]d\lambda
$
and $D(\lambda)>0$ for any $\lambda\in\mathbb{R}$, it is enough to prove $\Re\left[\Pi(\lambda)e^{\sqrt{-1}b\lambda}\right]>0$ for any $\lambda\in(0,\pi)$. We have
\[
\Re\left[\Pi(\lambda)e^{\sqrt{-1}b\lambda}\right]
=\frac{(1-\pi_1)(1-\pi_2)}{|(1-\pi_1e^{\sqrt{-1}\lambda})(1-\pi_2e^{-\sqrt{-1}\lambda})|^2}\mathfrak{C},
\]
where $\mathfrak{C}=(1+\pi_1\pi_2)\cos b\lambda-\pi_1\cos(b-1)\lambda-\pi_2\cos(b+1)\lambda$. Hence it suffices to prove $\mathfrak{C}>0$. Since we have $\mathfrak{C}=(1+\pi_1\pi_2)\cos(-b\lambda)-\pi_1\cos((-b+1)\lambda)-\pi_2\cos((-b-1)\lambda)$, by symmetry we may assume $b\geq0$.

First we note that $\cos b\lambda>0$ because $b\lambda\in[0,\frac{\pi}{2})$. Next, we can rewrite $\mathfrak{C}$ as
\[
\mathfrak{C}=(1-\pi_1)(1-\pi_2)\cos b\lambda+\pi_1(\cos(-b)\lambda-\cos(b-1)\lambda)+\pi_2(\cos b\lambda-\cos(b+1)\lambda).
\]
Since $-\pi\leq(b-1)\lambda\leq(-b)\lambda\leq0$ due to $0\leq b\leq\frac{1}{2}$, we have $\cos(-b)\lambda-\cos(b-1)\lambda\geq0$ because $\cos$ is increasing on $[-\pi,0]$. Also, if $\lambda\geq\frac{\pi}{2}$, we have $\frac{\pi}{2}\leq(b+1)\lambda\leq\frac{3}{2}\pi$, hence $\cos(b+1)\lambda\leq0$. So $\cos b\lambda-\cos(b+1)\lambda\geq0$. Otherwise, we have $0\leq b\lambda\leq(b+1)\lambda\leq\frac{3}{4}\pi$, hence we have $\cos b\lambda-\cos(b+1)\lambda\geq0$ because $\cos$ is decreasing on $[0,\pi]$. Consequently, we have
$
\mathfrak{C}\geq(1-\pi_1)(1-\pi_2)\cos b\lambda>0.
$
This completes the proof.
\end{proof}

\begin{proof}[\upshape{\textbf{Proof of Theorem \ref{HRY}}}]
Suppose that there is a number $\varepsilon>0$ such that $P(v_J^{-1}|\widehat{\theta}_j-\theta_j|>\varepsilon)$ does not converge to 0 as $J\to\infty$. Then there is a sequence $(J_m)_{m\geq1}$ of positive integers such that $J_m\uparrow\infty$ as $m\to\infty$ and $P(v_{J_m}^{-1}|\widehat{\theta}_j-\theta_j|>\varepsilon)\to a$ as $m\to\infty$ for some $a>0$. Moreover, for every $m$ we can take an integer $l_m\in\mathcal{L}_{J_m}$ such that $|l_m\tau_{J_m}-\theta_j|\leq\tau_{J_m}/2$. In particular, the sequence $(\tau_{J_m}^{-1}(l_m\tau_{J_m}-\theta_j))_{m\geq1}$ has a converging subsequence. Without loss of generality we may assume that $\tau_{J_m}^{-1}(l_m\tau_{J_m}-\theta_j)\to b$ as $m\to\infty$ for some $b\in[-\frac{1}{2},\frac{1}{2}]$. Now since
$
\left|\widehat{\rho}_{{J-j+1}}(\widehat{\theta}_j)\right|>\max_{\theta\in\mathcal{G}_J:v_J^{-1}|\theta-\theta_j|> \varepsilon}\left|\widehat{\rho}_{{J-j+1}}(\theta)\right|
$
implies that $v_J^{-1}|\widehat{\theta}_j-\theta_j|\leq \varepsilon$, we have
\begin{align*}
P\left(v_{J_m}^{-1}\left|\widehat{\theta}_j-\theta_j\right|>\varepsilon\right)
&\leq P\left(\left|\widehat{\rho}_{{J-j+1}}(\widehat{\theta}_j)\right|\leq\max_{\theta\in\mathcal{G}_{J_m}:v_{J_m}^{-1}|\theta-\theta_j|> \varepsilon}\left|\widehat{\rho}_{{J-j+1}}(\theta)\right|\right)\\
&\leq P\left(\left|\widehat{\rho}_{{J-j+1}}(l_m\tau_{J_m})\right|\leq\max_{\theta\in\mathcal{G}_{J_m}:v_{J_m}^{-1}|\theta-\theta_j|> \varepsilon}\left|\widehat{\rho}_{{J-j+1}}(\theta)\right|\right)\\
&\leq P\left(\left|\widehat{\rho}_{{J-j+1}}(l_m\tau_{J_m})\right|\leq\frac{|\mathfrak{r}|}{2}\right)
+ P\left(\frac{|\mathfrak{r}|}{2}<\max_{\theta\in\mathcal{G}_{J_m}:v_{J_m}^{-1}|\theta-\theta_j|> \varepsilon}\left|\widehat{\rho}_{{J-j+1}}(\theta)\right|\right),
\end{align*}
where
$
\mathfrak{r}={2^j}\Sigma_{{T}}(\theta_j)R_{j}\int_{\Lambda_{-j}}D(\lambda)\Pi(\lambda)e^{\sqrt{-1}b\lambda}d\lambda.
$
{Now,} $\widehat{\rho}_{{J-j+1}}(l_m\tau_{J_m})\to^p\mathfrak{r}$ as $m\to\infty$ by {Theorem \ref{theorem:main}(b)}. Moreover, $\mathfrak{r}\neq0$ because of Lemma \ref{nondegenerate} and assumption. Therefore, by {Proposition} \ref{lemma:ld} we obtain 
\[
\limsup_{m\to\infty}P\left(v_{J_m}^{-1}\left|\widehat{\theta}_j-\theta_j\right|>\varepsilon\right)=0.
\]
This contradicts $\lim_{m\to\infty}P(v_{J_m}^{-1}|\widehat{\theta}_j-\theta_j|>\varepsilon)=a>0$.
\end{proof}

\section*{Acknowledgments}

We thank two anonymous referees for their careful reading of a previous version of the paper and valuable comments to it. 
Takaki Hayashi's research was supported by JSPS KAKENHI Grant Numbers JP16K03601, JP17H01100. 
Yuta Koike's research was supported by JST, CREST and JSPS KAKENHI Grant Number JP16K17105. 

{\small
\renewcommand*{\baselinestretch}{1}\selectfont
\addcontentsline{toc}{section}{References}

}

\end{document}